\pdfoutput=1
\documentclass[%
 reprint,
nobibnotes,
 amsmath,amssymb, amsthm,
 aps,
prb,
onecolumn,
]{revtex4-2}

\usepackage{ytableau}
\usepackage{graphicx}% Include figure files
\usepackage{microtype}
\usepackage{dcolumn}% Align table columns on decimal point
\usepackage{bm}% bold math hypertext 
\usepackage[backref=none, breaklinks=True]{hyperref} %% This provides \href command
\usepackage[usenames,dvipsnames]{xcolor}
\usepackage{color}
\usepackage[framemethod=TikZ]{mdframed}
\usepackage{mathtools}
\usepackage{algorithm}      % keep for the algorithm counter (no float used)
\usepackage{algpseudocode}  % optional; not strictly required below
\usepackage{orcidlink}
\usepackage{amsthm}
\usepackage{mdframed}

\usepackage[normalem]{ulem}

\newenvironment{boxedtheorem}[1][]{%
  \begin{mdframed}[
    linewidth=0.6pt,
    roundcorner=2pt,
    nobreak=true,
    innerleftmargin=8pt,
    innerrightmargin=8pt,
    innertopmargin=0.6\baselineskip,
    innerbottommargin=0.6\baselineskip,
    skipabove=0.8\baselineskip,
    skipbelow=0.8\baselineskip
  ]%
  \begin{theorem}[#1]%
}{%
  \end{theorem}%
  \end{mdframed}%
}
\newmdenv[
  linewidth=0.4pt,
  roundcorner=6pt,
  linecolor=black!20,
  innerleftmargin=8pt,innerrightmargin=8pt,
  innertopmargin=6pt,innerbottommargin=6pt,
  skipabove=6pt,skipbelow=6pt,
  backgroundcolor=white
]{algobox}

\newcommand\myshade{85}
\colorlet{mylinkcolor}{violet}
\colorlet{mycitecolor}{YellowOrange}
\colorlet{myurlcolor}{Aquamarine}
\hypersetup{
  linkcolor  = mylinkcolor!\myshade!black,
  citecolor  = mycitecolor!\myshade!black,
  urlcolor   = myurlcolor!\myshade!black,
  colorlinks = true,
}
\newcommand{\ket}[1]{\left|#1\right\rangle}
\newcommand{\bra}[1]{\left\langle#1\right|}
\newcommand{\proj}[2]{\ket{#1}\!\bra{#2}}

\newtheorem{theorem}{Theorem}
\newtheorem{lemma}{Lemma}
\newtheorem{proposition}{Proposition}

\newtheorem{remark}{Remark}
\newtheorem{definition}{Definition}

\makeatletter
\renewcommand*\l@subsection[2]{}
\renewcommand*\l@subsubsection[2]{}
\renewcommand*\l@section[2]{%
  \ifnum\c@secnumdepth>\z@
    \@dottedtocline{1}{1.5em}{2.3em}{#1}{#2}%
  \fi
}
\makeatother
\begin{document}

\preprint{APS/123-QED}

\title{Efficient Quantum Circuits for Coherent Conversion Between General First- and Second-Quantized Many-Body Representations}% Force line breaks with \\
% \thanks{A footnote to the article title}%

\author{Jack S. Baker$^1$ \orcidlink{0000-0001-6635-1397}}
\email{jack.baker@lge.com}

\author{Gaurav Saxena$^1$ \orcidlink{0000-0001-6551-1782}}

\author{Thi Ha Kyaw$^1$ \orcidlink{0000-0002-3557-2709}}%
\email{thiha.kyaw@lge.com}
\affiliation{$^1$LG Electronics Toronto AI Lab, Toronto, Ontario M5V 1M3, Canada} 

\date{\today}% It is always \today, today,
             %  but any date may be explicitly specified

\begin{abstract}
Quantum simulation at fixed particle number admits two equivalent descriptions, a
first-quantized (particle) representation and a second-quantized
(occupation-number) representation. Their quantum resource costs differ sharply
across computational tasks, so the ability to convert coherently between them is
valuable. We construct an explicit unitary $Q$, with inverse $Q^\dagger$, that
maps a first-quantized state to its fixed-$N$ occupation-number form while
diagnosing the input's particle-exchange symmetry. The conversion is therefore
symmetry-agnostic at the input yet fully resolved at the output, and it applies
uniformly to bosonic, fermionic, and parastatistical sectors. At its foundation
lies a structural identification that we place at the center of this work: the
quantum Schur transform supplied by Schur--Weyl duality is the non-abelian Fourier
transform of the commuting pair $(S_N,U(d))$, and the occupation-number
representation is its weight basis, retaining only the labels shared by both
factors, the irrep $\lambda$ and the $\mathfrak{u}(d)$ weight. This reduction is
lossless for bosons and fermions, while a canonical Gelfand--Tsetlin promise
renders it one-to-one for the remaining sectors. Algorithmically, $Q$ composes the
strong Schur transform with reversible arithmetic that computes occupations as
successive row-sum differences of the Gelfand--Tsetlin pattern, yielding gate
complexity $\mathrm{poly}(N,d,\log(1/\epsilon))$. The converted state is prepared
efficiently in quantum memory. Any classical algorithm that outputs it explicitly,
however, pays a cost set by the sector dimension, which is polynomial of degree
$N$ in $d$ at fixed $N$ and exponential in $N$ when $d=\Theta(N)$. Finally, an
efficient classical sampler for the induced occupation-number distribution would
yield one for arbitrary quantum circuits, contrary to standard complexity
assumptions.
\end{abstract}

%\keywords{Suggested keywords}%Use showkeys class option if keyword
                              %display desired
\maketitle
\setcounter{tocdepth}{1}
\tableofcontents

%  REWORKED INTRODUCTION
% =====================================================================
\section{Introduction}
\label{sec:introduction}

Quantum simulation has been a central motivation for quantum computation since Feynman's observation that controllable quantum devices can efficiently reproduce the dynamics of quantum systems that appear classically intractable to emulate \cite{Feynman1982}. Over the past several decades, digital quantum simulation has developed into a mature algorithmic toolbox spanning product-formula methods \cite{Trotter1959,Suzuki1990,ChildsSuTranWiebeZhu2021}, linear-combination-of-unitaries and truncated-series techniques \cite{ChildsWiebe2012,BerryChildsCleveKothariSomma2015}, and the qubitization and quantum-signal-processing framework for asymptotically optimal Hamiltonian simulation in standard oracle models \cite{LowChuang2017,LowChuang2019,GilyenSuLowWiebe2019}. In the near term, noisy devices have motivated variational and other hybrid simulation strategies that trade rigorous asymptotic guarantees for lower circuit depth and greater hardware accessibility \cite{Preskill2018,Peruzzo2014,McClean2016,Cerezo2021}, while the longer-term fault-tolerant setting has sharpened the case for quantum simulation as a leading application domain through increasingly detailed resource estimates for chemically and technologically relevant problems \cite{CampbellTerhalVuillot2017,Reiher2017, Baker2024, Casares2026, Motlagh2025, loaiza2025simulatingnearinfraredspectroscopyquantum, fomichev2024simulatingxrayabsorptionspectroscopy, fomichev2025fastsimulationsxrayabsorption, naranjo2026designingquantumtechnologiesquantum}. Across these regimes, the same theme persists: quantum many-body simulation remains one of the most natural and scientifically consequential candidates for large-scale quantum advantage.

Under the umbrella of quantum simulation, there are two canonical wavefunction formalisms: the particle (first-quantized) representation and the occupation-number (second-quantized) representation.  For \(N\) particles in a \(d\)-dimensional single-particle basis, first quantization embeds a many-body quantum state in \((\mathbb{C}^d)^{\otimes N}\), which is space-efficient when \(N\ll d\), requiring \(N\lceil \log_2 d\rceil\) qubits under standard encodings. The second quantized formalism represents the same physics in a fixed-\(N\) sector of Fock space using occupation vectors \((n_1,\dots,n_d)\), which exposes creation/annihilation algebra and particle-number flexibility but often incurs a larger space footprint (typically $\mathcal{O}(d)$) and distinct operator-locality trade-offs.  While these formalisms are mathematically equivalent at fixed \(N\), they can differ dramatically in practical resources once one specifies the task (state preparation, time evolution, spectroscopy, sampling, measurement strategy) and the Hamiltonian structure.  Consequently, quantum workflows should not be monolithic: combining primitives whose natural and efficient implementations may live in different representations is key. This creates a concrete need for a \emph{coherent} representation change that can move many-body quantum states between first and second quantization.

Recent work has begun to address this problem \cite{Ku2025HybridQuantization, Liu2024LowDepthSymmetrization}. Ref. \cite{Ku2025HybridQuantization} develops a hybrid quantization framework for electronic-structure simulation that switches between first- and second-quantized representations with conversion cost $\mathcal{O}(N\log N\log d)$ and qubit cost $\mathcal{O}(N\log d)$. Their emphasis is firmly fermionic and chemistry-driven. The conversion is designed to interoperate with plane-wave and molecular-orbital workflows, and to enable representation switching at points where measurement, electron non-conserving operations, or basis changes become more efficient in one description than the other \cite{Ku2025HybridQuantization}. In parallel, ref. \cite{Liu2024LowDepthSymmetrization} provides low-depth algorithms for quantum symmetrization with repeated entries and, as a consequence, a polylogarithmic-depth converter between second- and first-quantized representations. Their construction is especially significant for bosonic settings, where repeated occupations make symmetrization subtler than antisymmetrization, and they obtain an $\mathcal{O}(\log^3 N)$-depth conversion procedure with $\mathcal{O}(N\log N)$ ancillas \cite{Liu2024LowDepthSymmetrization}. Taken together, these works show that coherent conversion can be highly useful when the target statistics class and encoding structure are known in advance.

The present work addresses a different point in this design space. Rather than optimizing for a fixed statistics class or a particular chemistry-oriented encoding pair, we construct a representation-theoretic unitary $Q$ that is universal at the interface level. That is, it acts on any input state supported on a single Schur--Weyl sector of $(\mathbb{C}^d)^{\otimes N}$ (a single particle statistic) and routes it coherently to the corresponding fixed-$N$ occupation-number description, with $Q^\dagger$ performing the inverse map. This includes the familiar bosonic and fermionic sectors, but also the higher-dimensional sectors associated with parastatistics. This generality is not merely formal. That is, recent theoretical and experimental work has renewed interest in paraparticle structures, including the demonstration that nontrivial parastatistics inequivalent to ordinary bosons and fermions can arise in physically consistent models, as well as trapped-ion realizations of parabosonic and parafermionic oscillator dynamics \cite{Wang2025BeyondBosonsFermions,HuertaAlderete2025ExperimentalPara}. Our construction is therefore best viewed as a symmetry-agnostic adapter. The incoming particle statistics need not be hard-wired into a bespoke conversion routine beyond the promise that the incoming state spans a single Schur-Weyl symmetry sector.  Simply, the particle statistics of the incoming quantum state may be \textit{unknown}, converted into the correct second quantized form and the particle statistic identification detected and stored.

Underpinning this construction is a structural identification that we regard as a central conceptual contribution of the present work and develop in full in Sec.~\ref{subsec:2Q_as_fourier_basis}. At fixed $N$,  we show that the occupation-number representation is precisely the \emph{generalized group Fourier basis} of first quantization. Schur--Weyl duality endows $(\mathbb{C}^d)^{\otimes N}$ with two mutually commuting actions, those of the symmetric group $S_N$ and the unitary group $U(d)$. The strong quantum Schur transform is the associated non-abelian Fourier transform. It simultaneously reduces both representations into irreducibles, expressed in a subgroup-adapted basis labelled by a Young--Yamanouchi word $\sigma$ for $S_N$ and a Gelfand--Tsetlin pattern $\mu$ for $U(d)$. In this language, the first-quantized basis of particle configurations is the pre-Fourier (computational) basis, the Schur transform is the Fourier transform of the pair $(S_N,U(d))$, and the second-quantized occupation vector is the $\mathfrak{u}(d)$ weight. That weight is the joint eigenvalue of the commuting mode-number operators $\hat n_p=J_{pp}$, read off as the commutative reduction of the full Gelfand--Tsetlin label. Whether this reduction is faithful is distinguished by the statistics class. For bosons and fermions every weight space is one-dimensional, since the relevant Kostka numbers never exceed unity. The weight is therefore already a complete Fourier label, and the change of representation is a lossless re-indexing. In these sectors, second quantization is nothing more than the first-quantized state written in its Fourier basis. For the remaining parastatistical sectors, the weight is a genuine coarsening of the Gelfand--Tsetlin label. A single additional ingredient, the canonical Gelfand--Tsetlin promise, then restores a one-to-one Fourier labelling by designating one representative in each weight space.

The universality of $Q$ comes with a trade-off. The specialized conversion routines of Refs.~\cite{Ku2025HybridQuantization,Liu2024LowDepthSymmetrization} are more efficient in the settings they target. The former exploits fermionic sorted-list structure together with chemistry-specific workflow assumptions, while the latter leverages low-depth symmetrization machinery tailored to bosonic occupation data. Our circuit is correspondingly less specialized, and one should therefore not expect it to outperform these methods (asymptotically, at least) in their native regimes. The purpose of $Q$ is different. It is best regarded as a root, unspecialized algorithm that exposes the representation-theoretic structure underlying coherent quantization change in full generality. When the particle statistics are known \emph{a priori} and one requires only a fermionic chemistry-oriented converter or a bosonic symmetrization-based converter, known specialized algorithms are most likely the natural choice. By contrast, $Q$ is the more appropriate primitive when the incoming many-body state is not assumed to arrive with its particle statistics built into the conversion logic, including settings that may extend beyond Bose and Fermi statistics to more general parastatistical sectors. At the same time, this universality should not be viewed as the endpoint of optimization. On the contrary, one may expect statistics-aware specializations of $Q$ to improve substantially on the base universal construction, both asymptotically and in constant-factor resources. In this sense, the structure of $Q$ may be viewed as a blueprint from which more optimized routines can be derived, potentially even becoming competitive with the converters of Refs.~\cite{Ku2025HybridQuantization,Liu2024LowDepthSymmetrization} in the regimes they target. Overall, our contribution is therefore complementary to the existing literature: we trade some efficiency for universality, modularity, and immediate extensibility to symmetry sectors beyond those covered by currently specialized converters.
\begin{figure}
    \centering
    \includegraphics[width=1.0\linewidth]{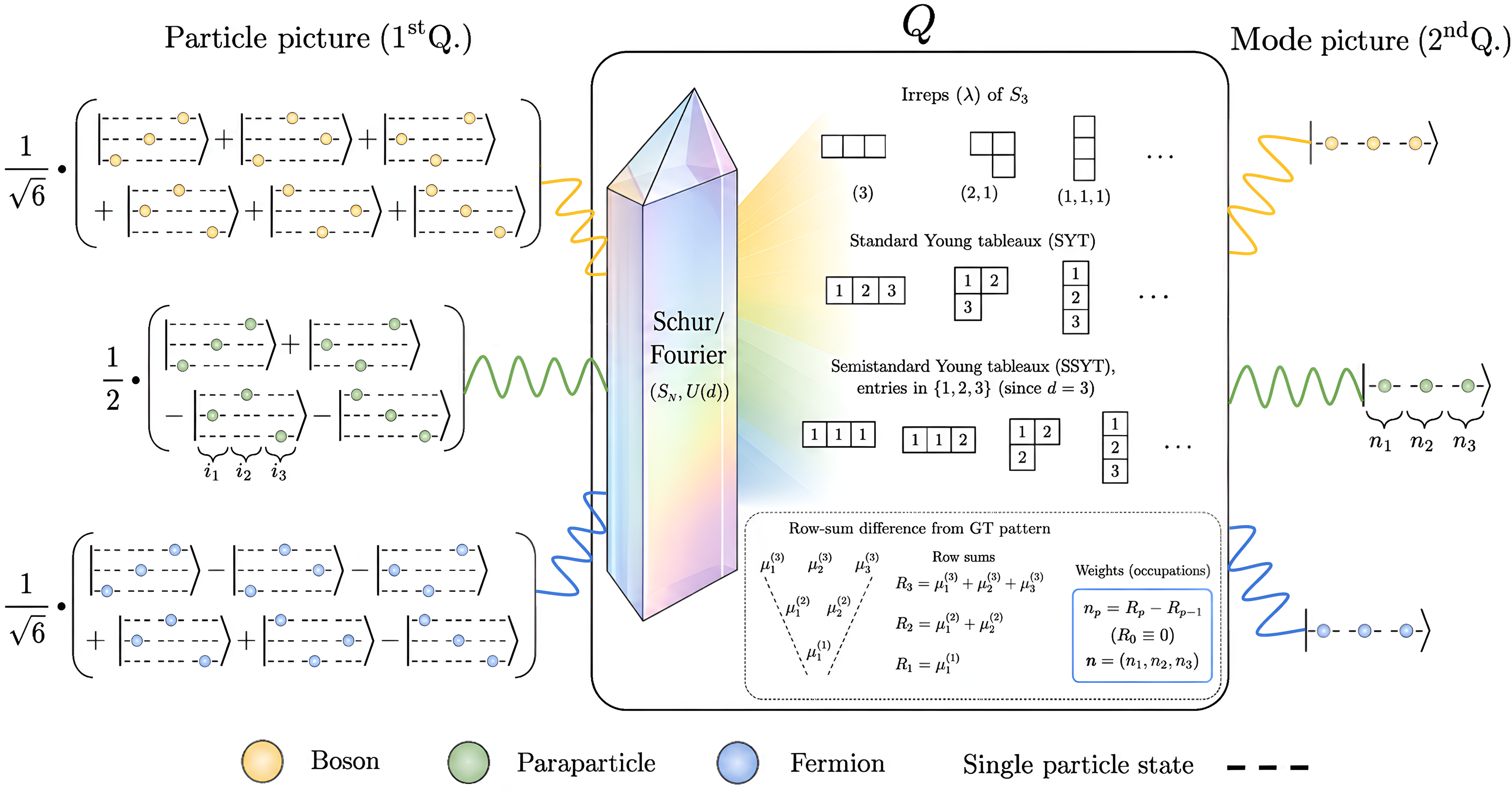}
    \caption{Pictorial summary of the transform $Q := U_{\mathrm{JS}}\,U_{\mathrm{Schur}}$
acting on the smallest instance that simultaneously realizes bosonic, parastatistical, and fermionic
sectors: $N=3$ particles in $d=3$ single-particle modes. \emph{Left (particle picture, $1^{\mathrm{st}}$Q.).} Three first-quantized inputs $|\psi_{1Q}\rangle \in (\mathbb{C}^3)^{\otimes 3}$ are shown, one per row, each an appropriately symmetrized superposition of the product configurations. Each ket is rendered as three single-particle registers (the particles $i_1 i_2 i_3$), with dashed lines marking the available single-particle states and a colored dot indicating the state each particle occupies; color encodes the exchange symmetry (yellow: boson, green: paraparticle, blue: fermion; see legend). For example, the top (bosonic, $\lambda=(3)$) row is the totally symmetric combination $|\psi_{1Q}^{\mathrm{bos}}\rangle = \tfrac{1}{\sqrt{6}}\big(|123\rangle + |132\rangle + |213\rangle + |231\rangle + |312\rangle + |321\rangle\big) \in \mathcal{H}_{\mathrm{sym}}$. The bottom (fermionic, $\lambda=(1^3)$) row is the totally antisymmetric Slater determinant and the middle (parastatistical, $\lambda=(2,1)$) row is a mixed-symmetry combination of four configurations. \emph{Centre (Unitary transform $Q$.)} Incoming quantum states are processed by the strong quantum Schur transform $U_{\mathrm{Schur}}$, the generalized group-Fourier transform of the commuting pair $(S_N,U(d))$, depicted by a prism. It resolves each input into a single irrep $\lambda$ of $S_3$, recording the $S_N$ multiplicity as a standard Young tableau (SYT, label $\sigma$) and the $U(d)$ weight data as a semistandard Young tableau (SSYT, label $\mu$). Next, the reversible arithmetic $U_{\mathrm{JS}}$ reads the occupations off the associated Gelfand--Tsetlin pattern (derived from the SSYT) as successive row-sum differences. \emph{Right (mode picture, $2^{\mathrm{nd}}$Q.).} The three corresponding second-quantized outputs $|n_1,n_2,n_3\rangle$, drawn as $d=3$ modes with the occupations marked by colored dots.}
    \label{fig:hero}
\end{figure}
The remainder of this work is structured as follows.  Section~\ref{sec:quantiztions} formalizes first- and second-quantized many-body representations and makes their resource trade-offs explicit for digital simulation tasks.  Section~\ref{sec:js_map} develops the representation-theoretic bridge between the two pictures via the Jordan--Schwinger map and Schur--Weyl duality, establishing the weight-basis identification that underpins conversion and proving, in Sec.~\ref{subsec:2Q_as_fourier_basis}, that the occupation-number representation is the group-Fourier basis of first quantization---lossless for bosons and fermions and, under the canonical Gelfand--Tsetlin promise, for all remaining sectors.  Section~\ref{sec:qqt} presents the coherent quantization transform \(Q\) and \(Q^\dagger\), including the strong Schur transform component and the arithmetic subroutine that computes occupations from Gelfand--Tsetlin patterns. Fig. \ref{fig:hero} condenses the key ideas and contributions of Secs. \ref{sec:quantiztions}-\ref{sec:qqt}. Section~\ref{sec:Q_worked_example_N2d3} gives a worked example with explicit intermediate wavefunctions and qubit encodings to illustrate the conversion end-to-end.  Section~\ref{sec:speed_up} analyzes quantum speed-ups and classical output barriers associated with explicit conversion and sampling tasks.  Section~\ref{sec:optimizing_algo} discusses open questions and extensions, and Section~\ref{sec:application} outlines application settings where coherent routing between representations is operationally useful.  We conclude in Section~\ref{sec:conclusions} with a summary of results and their implications for modular quantum simulation workflows.

\section{Digital quantum simulations of the quantum many-body problem \label{sec:quantiztions}}

Digital quantum simulations of many-body systems admit two canonical formulations: the particle (or \emph{first-quantized}) picture and the occupation (or \emph{second-quantized}) picture. For a fixed particle number $N$, these two pictures provide equivalent descriptions of many-body physics; what separates them in practice are their resource profiles on a digital quantum computer. Depending on $N$, the size $d$ of the single-particle basis, and the observables or tasks of interest, one or the other becomes more resource-efficient in space, time, or both. In this section, we formalize both pictures and make their resource trade-offs explicit, thereby establishing the mathematical structure that will later facilitate an efficient unitary transformation between them. This section then serves as a self-contained foundation for understanding the remainder of the work.

\subsection{Particle picture: first quantization \label{sec:first_q}}

\subsubsection{Structure of Hilbert space \label{subsec:1q_struc_of_hilbert_space}}

In the particle (first-quantized) picture, the Hilbert space for $N$ particles, each with a
$d$-dimensional single-particle space, is
\begin{equation}
  \mathcal H^{(N)}_{1\mathrm{Q}} := (\mathbb{C}^d)^{\otimes N}.
\end{equation}
A convenient orthonormal basis is given by product states
\begin{equation}
  |\mathbf{i}\rangle
  := |i_1\rangle \otimes \cdots \otimes |i_{N}\rangle
  \equiv |i_1 i_2 \cdots i_{N}\rangle,
  \qquad i_k \in \{1,\ldots,d\}.
\end{equation}
Accordingly, any pure state of $N$ distinguishable particles can be expanded as
\begin{equation}
  |\Psi\rangle
  = \sum_{\mathbf{i}\in\{1,\ldots,d\}^N} c_{\mathbf{i}}\,|\mathbf{i}\rangle.
\end{equation}
For example, the basis state $|122\rangle$ describes three particles for which the first occupies
the single-particle basis state labeled $1$, while the second and third occupy the basis state
labeled $2$.

In most physical applications, however, one is concerned with systems of identical particles.
In that setting, particle labels carry no physical meaning. Equivalently, physical observables in
the first-quantized description must be invariant under relabelings of the tensor factors. Let
$P(\pi)$ denote the unitary operator implementing a permutation $\pi \in S_N$ on
$(\mathbb{C}^d)^{\otimes N}$. Its action on a product basis state is
\begin{equation}
P(\pi)\,|i_1 i_2 \ldots i_{N}\rangle
\;=\;
|i_{\pi^{-1}(1)}\, i_{\pi^{-1}(2)} \ldots i_{\pi^{-1}(N)}\rangle,
\label{eq:perm_action}
\end{equation}
that is, $P(\pi)$ permutes the tensor factors according to $\pi$. For $N=3$, the symmetric group is
\begin{equation}
  S_3 = \{e,(1,2),(1,3),(2,3),(1,2,3),(1,3,2)\},
\end{equation}
consisting of the identity, three transpositions, and two three-cycles. For instance,
\begin{equation}
  P_{(1,2)}|122\rangle = |212\rangle,
  \qquad
  P_{(1,3,2)}|122\rangle = |221\rangle.
\end{equation}

For identical particles, a physical first-quantized observable $O_{1\mathrm{Q}}$ must commute with
every particle permutation,
\begin{equation}
  [O_{1\mathrm{Q}},P(\pi)]=0,
  \qquad \forall\,\pi\in S_N.
\end{equation}
As a consequence, a state and any of its permuted images are operationally indistinguishable with
respect to such observables. Indeed, for any $|\Psi\rangle \in (\mathbb{C}^d)^{\otimes N}$,
\begin{equation}
  \langle\Psi|P(\pi)^\dagger O_{1\mathrm{Q}} P(\pi)|\Psi\rangle
  =
  \langle\Psi|O_{1\mathrm{Q}}|\Psi\rangle,
  \qquad
  \forall\,\pi\in S_N,
\end{equation}
whenever $[O_{1\mathrm{Q}},P(\pi)]=0$. Thus, permutation-related states lie in the same
operational equivalence class under permutation-invariant observables.

From a representation-theoretic perspective, the permutation operators
$\{P(\pi)\}_{\pi\in S_N}$ furnish a unitary representation of $S_N$ on
$(\mathbb{C}^d)^{\otimes N}$, so the Hilbert space decomposes into permutation-symmetry
sectors labelled by the irreducible symmetry types of $S_N$. At the algebraic level, a choice
of particle statistics is a choice of which such sector is allowed. The full Schur--Weyl
decomposition of $(\mathbb{C}^d)^{\otimes N}$ into joint $U(d)\times S_N$ sectors, and the
precise labelling of these sectors by Young diagrams, is developed in
Sec.~\ref{subsec:schur_wey_decomp} [Eq.~\eqref{eq:SW_decomp}]; here we record only the
statistics sectors needed in the remainder of this section. Ordinary bosons and fermions
correspond to the two one-dimensional irreducible representations of $S_N$,
\begin{align}
  \text{bosons: }\quad &
  P(\pi)|\Psi\rangle = |\Psi\rangle
  \quad \forall\,\pi\in S_N,
  \quad\Rightarrow\quad
  |\Psi\rangle \in \mathcal H_{\mathrm{sym}}, \\
  \text{fermions: }\quad &
  P(\pi)|\Psi\rangle = \operatorname{sgn}(\pi)\,|\Psi\rangle
  \quad \forall\,\pi\in S_N,
  \quad\Rightarrow\quad
  |\Psi\rangle \in \mathcal H_{\mathrm{asym}},
\end{align}
namely the totally symmetric and totally antisymmetric subspaces $\mathcal H_{\mathrm{sym}}$ and
$\mathcal H_{\mathrm{asym}}$, with $\operatorname{sgn}(\pi)=\pm1$ for even/odd permutations. The
remaining higher-dimensional irreducible representations of $S_N$ correspond to
\emph{parastatistics}; these more general symmetry sectors, and their role in the present
construction, are treated in Sec.~\ref{subsec:schur_wey_decomp}.

\subsubsection{Structure of Hermitian and unitary operators}
\label{subsec:1q_struc_of_ops}

We now describe the structure of operators acting on the first-quantized
$N$-particle Hilbert space $(\mathbb{C}^d)^{\otimes N}$. Let $\mathrm{End}\!\left((\mathbb{C}^d)^{\otimes N}\right)$ denote the space of linear operators on this Hilbert space. A convenient way to organize such operators is by their support on subsets of tensor factors. For each subset $\{r_1,\ldots,r_k\}\subseteq \{1,\ldots,N\}$ with $1\le r_1<\cdots<r_k\le N$, one may consider operators that act non-trivially only on particles $r_1,\ldots,r_k$ and as the identity on all remaining particles. Expanding in this manner yields the general form
\begin{equation}
  O_{1\mathrm{Q}}
  \;=\;
  \sum_{k=0}^{N}
  \;\sum_{1 \le r_1 < \cdots < r_k \le N}
  O^{(k)}_{\,r_1 \ldots r_k},
  \label{eq:Hsumk_corrected}
\end{equation}
where each $O^{(k)}_{\,r_1\ldots r_k}$ is supported on the specified set of $k$ particles. Here the $k=0$ term represents a scalar multiple of the identity. This expansion is an operator decomposition by support. In physical applications one often truncates to small values of $k$, corresponding to one-body, two-body, or more generally few-body observables, whereas taking $k$ up to $N$ yields a completely general operator on $(\mathbb{C}^d)^{\otimes N}$.

To write $O^{(k)}_{\,r_1\ldots r_k}$ explicitly, fix the standard matrix units $E_{pq}:=|p\rangle\langle q|$ on $\mathbb{C}^d$, with $p,q\in\{1,\ldots,d\}$. For each particle label $r\in\{1,\ldots,N\}$, define the embedded one-particle operator
\begin{equation}
  E_{pq}^{(r)}
  :=
  I^{\otimes (r-1)} \otimes E_{pq} \otimes I^{\otimes (N-r)}.
\end{equation}
Then any operator supported on particles $r_1,\ldots,r_k$ can be expanded as
\begin{align}
  O^{(k)}_{\,r_1 \ldots r_k}
  &=
  \sum_{p_1,\ldots,p_k=1}^{d}
  \sum_{q_1,\ldots,q_k=1}^{d}
  o^{(k)}_{p_1\ldots p_k,\,q_1\ldots q_k}
  E_{p_1 q_1}^{(r_1)}
  E_{p_2 q_2}^{(r_2)}
  \cdots
  E_{p_k q_k}^{(r_k)}.
  \label{eq:k-body-term_corrected}
\end{align}
The coefficients  $o^{(k)}_{p_1\ldots p_k,\,q_1\ldots q_k}$ are simply the matrix elements of the corresponding $k$-particle operator in the chosen single-particle basis. If $\hat{o}^{(k)}\in \mathrm{End}\!\bigl((\mathbb{C}^d)^{\otimes k}\bigr)$ denotes the abstract operator acting on the selected $k$ particles, then
\begin{equation}
  o^{(k)}
  =
  \langle p_1\cdots p_k | \hat{o}^{(k)} | q_1\cdots q_k\rangle.
  \label{eq:k-body-matrix-elements_corrected}
\end{equation}
When the finite-dimensional single-particle space $\mathbb{C}^d$ arises from a truncation of an underlying continuum one-particle basis $\{\phi_i(\mathbf r)\}_{i=1}^{d}$, these coefficients may additionally be written in the familiar integral form
\begin{align}
  o^{(k)}
  &=
  \int \mathrm{d}\mathbf r_1 \cdots \mathrm{d}\mathbf r_k\,
  \phi_{p_1}^*(\mathbf r_1)\cdots \phi_{p_k}^*(\mathbf r_k)\,
  \hat{o}^{(k)}
  \nonumber\\
  &\qquad\qquad\times
  \phi_{q_1}(\mathbf r_1)\cdots \phi_{q_k}(\mathbf r_k),
  \label{eq:k-body-matrix-elements_integral}
\end{align}
provided $\hat{o}^{(k)}$ admits such a coordinate-space representation.

Taking the sum in Eq.~\eqref{eq:Hsumk_corrected} all the way to $k=N$ gives an exhaustive expansion: every Hermitian operator on $(\mathbb{C}^d)^{\otimes N}$ can be written in this form for a suitable choice of coefficients.  Moreover, in finite dimension every unitary operator $U$ on
$(\mathbb{C}^d)^{\otimes N}$ can be expressed as $U = e^{-iH}$ for some Hermitian operator $H$. Thus Hermitian generators provide a complete description of all unitary evolutions and basis changes on the first-quantized Hilbert space.

For systems of identical particles, one must further restrict attention to physical first-quantized observables $O_{1\mathrm{Q}}$ satisfying $[O_{1\mathrm{Q}},P(\pi)]=0, \forall\,\pi\in S_N$ which arise from the properties of the tensor with elements given by Eq. \eqref{eq:k-body-matrix-elements_integral}.

\subsubsection{Resource considerations in digital quantum simulation \label{subsec:1q_resource_tradeoff}}

A standard route to digital quantum simulation is to express a Hermitian operator as a \emph{linear combination of unitaries} (LCU).  In first quantization, we may expand each outer-product operator $|p\rangle\langle q|$ appearing in Eq. \eqref{eq:k-body-term_corrected} in a suitable operator basis.  For a $d$-dimensional single-particle Hilbert space one could, in principle, decompose into the $d^2$ generalized Gell-Mann matrices (qudit operators) and the identity.  On qubit hardware, however, we must choose an encoding of the $d$ levels into $n$ qubits with $2^n \ge d$ (typically a binary or Gray-code encoding \cite{Sawaya2020}).  In the binary encoding, for example, one maps each label $p\in\{1,\ldots,d\}$ to the $n$-qubit computational basis state $|p_n p_{n-1}\cdots p_1\rangle$ whose bit string is the binary representation of $p-1$, i.e.\ $p-1 = \sum_{j=1}^n 2^{j-1} p_j$ with $p_j \in \{0,1\}$.  Then any outer product factorizes over qubits as
\begin{equation}
  |p\rangle\langle q|
  \;=\;
  \bigotimes_{j=1}^{n} |p_j\rangle\langle q_j|,
  \label{eq:outerprod-factorization}
\end{equation}
and each single-qubit projector is expanded in the Pauli basis via
\begin{align}
  |0\rangle\langle 0| = \frac{I + Z}{2},\quad
  |1\rangle\langle 1| = \frac{I - Z}{2},\quad \nonumber\\
  |0\rangle\langle 1| = \frac{X + i Y}{2},\quad
  |1\rangle\langle 0| = \frac{X - i Y}{2}.
  \label{eq:singlequbit-pauli}
\end{align}
Combining Eqs. \eqref{eq:outerprod-factorization} and \eqref{eq:singlequbit-pauli}, any $|p\rangle\langle q|$ becomes a sum of tensor products of single-qubit Pauli operators.  This yields the \emph{naive Pauli LCU}: each term in $O_{1\textrm{Q}}$ is expanded into a (potentially large) weighted sum of Pauli strings. Alternatively, an efficient decomposition scheme into the Pauli LCU is derived in Ref. \cite{Georges2025LCU} utilizing the Walsh-Hadamard transform.

From the perspective of qubit count, first quantization is economical when $N \ll d$. Each particle requires $n = \lceil \log_2 d \rceil$ qubits to represent its state, so an $N$-particle system occupies
\begin{equation}
  N_{\mathrm{qubits}}^{(1\mathrm{Q})}
  \;=\;
  N \,\big\lceil \log_2 d \big\rceil
  \label{eq:nqubits-1q}
\end{equation}
qubits.  In contrast, standard second-quantized encodings require $O(d)$ qubits \cite{JordanWigner1928, Sawaya2020}.  Thus for fixed $N$ and growing $d$, first quantization offers an exponential saving in qubit number
\cite{Babbush2019,Su2021}. 

Historically, some of the first proposals for simulating interacting fermions on a quantum computer were formulated in first quantization \cite{Abrams1997,Ortiz2001}.  In quantum chemistry, early first-quantized algorithms showed that chemical dynamics and electronic structure problems, under reasonable assumptions, could be solved in polynomial time on a quantum computer \cite{Kassal2008,Babbush2018}, but with resource costs that were not yet competitive with the best second-quantized approaches.  Subsequent work introduced a variety of techniques to reduce Hamiltonian sparsity and to exploit symmetries, including spin-free and symmetry-adapted representations \cite{Whitfield2013}.  These methods
demonstrate that careful choice of single-particle basis, encoding, and
operator factorization can substantially shrink the number of nontrivial terms
appearing in the naive Pauli-LCU construction as well as reducing the LCU 1-norm, which too has direct implications to computational cost.

More recent work has established first quantization as a leading candidate for
asymptotically optimal digital quantum simulations of electronic structure.
It has been shown that plane-wave first-quantized algorithms
combined with modern block-encoding and qubitization techniques yield gate
complexities with sublinear dependence on the basis size $d$
\cite{Babbush2019,Babbush2018}.  Ref. \cite{Su2021} provided detailed fault-tolerant resource estimates for such algorithms, demonstrating that for large plane-wave bases the spacetime volume can be far smaller than that of state-of-the-art second-quantized schemes. These developments
were extended to realistic materials \cite{Berry2024} , where
non-local pseudopotentials were used in a first-quantized plane-wave framework without substantially increasing the asymptotic cost \cite{ShokrianZini2023, Berry2024}.  Most recently, first quantized simulations of chemistry have been studied in a basis-set agnostic way \cite{Georges2025}, providing a qubitized first-quantized algorithm that achieves favorable scaling in both basis size and target precision \cite{Georges2025} by exploiting a dual plane-wave expansion to diagonalize the Coulomb potential.  Together, these works show that while the naive Pauli-LCU construction is prohibitively expensive, modern first-quantized algorithms exploit Hamiltonian structure, encoding choice, basis choice, and advanced block-encoding techniques to achieve substantial savings in both gate complexity and ancilla overhead, all while retaining $N \log_2 d$ qubit scaling.

\subsection{Occupation picture: second quantization \label{sec:second_q}}

\subsubsection{Ordinary bosonic and fermionic ladder operators in Fock space}
\label{subsec:regular_boson_fermion}

In the occupation picture, operators are built from mode creation and annihilation operators, also called ladder operators. For each single-particle mode
$p\in\{1,\dots,d\}$, one introduces operators that raise or lower the occupation of that mode subject to the particle statistics under consideration. The vacuum state $|\mathrm{vac}\rangle$ is defined by
\begin{equation}
  a_p |\mathrm{vac}\rangle = 0,
  \qquad \forall\,p\in\{1,\dots,d\}.
\end{equation}
Acting repeatedly with creation operators on the vacuum generates the corresponding Fock space.

For ordinary bosons, the mode operators obey the canonical commutation relations
\begin{align}
  [b_p,b_q^\dagger] &= \delta_{pq}, &
  [b_p,b_q] &= 0, &
  [b_p^\dagger,b_q^\dagger] &= 0,
  \label{eq:CCR}
\end{align}
and their action on the bosonic number basis is
\begin{align}
  b_p^\dagger |n_1,\dots,n_p,\dots,n_{d}\rangle
  &=
  \sqrt{n_p+1}\,
  |n_1,\dots,n_p+1,\dots,n_{d}\rangle,
  \\
  b_p |n_1,\dots,n_p,\dots,n_{d}\rangle
  &=
  \sqrt{n_p}\,
  |n_1,\dots,n_p-1,\dots,n_{d}\rangle.
\end{align}
For ordinary fermions, the mode operators obey the canonical anticommutation relations
\begin{align}
  \{f_p,f_q^\dagger\} &= \delta_{pq}, &
  \{f_p,f_q\} &= 0, &
  \{f_p^\dagger,f_q^\dagger\} &= 0,
  \label{eq:CAR}
\end{align}
with $n_p\in\{0,1\}$ in the occupation basis. Once an ordering of the modes has been fixed, their action on the fermionic occupation basis is
\begin{align}
  f_p^\dagger |n_1,\dots,n_{d}\rangle
  &=
  (-1)^{\sum_{q=1}^{p-1} n_q}\,(1-n_p)\,
  |n_1,\dots,n_p+1,\dots,n_{d}\rangle,
  \\
  f_p |n_1,\dots,n_{d}\rangle
  &=
  (-1)^{\sum_{q=1}^{p-1} n_q}\,n_p\,
  |n_1,\dots,n_p-1,\dots,n_{d}\rangle.
\end{align}
Thus, in the fermionic case, the occupation update is accompanied by the familiar sign determined by the ordered occupation string.

For both bosons and fermions, the local mode-number operators are
\begin{equation}
  \hat n_p := a_p^\dagger a_p,
\end{equation}
where $a_p=b_p$ in the bosonic case and $a_p=f_p$ in the fermionic case. These act diagonally on the corresponding number basis,
\begin{equation}
  \hat n_p |n\rangle = n_p |n\rangle,
\end{equation}
and the total number operator is
\begin{equation}
  \hat N := \sum_{p=1}^{d} \hat n_p,
\end{equation}
whose eigenvalue on a number state $|n\rangle$ is the total occupation number
\begin{equation}
  N=\sum_{p=1}^{d} n_p .
  \label{eq:action_of_total_number_op}
\end{equation}

The essential distinction from the first-quantized picture is that the exchange symmetry is built directly into the operator algebra. In first quantization one restricts wavefunctions to the appropriate permutation-symmetry sector. In second quantization the same information is encoded in the commutation relations themselves: the CCR generate the bosonic Fock space, while the CAR generate the fermionic Fock space. 

\subsubsection{Paraparticles and color-resolved realizations}
\label{subsubsec:paraparticles_and_color_space}

Parastatistics enlarge the set of allowed permutation-symmetry types beyond the totally symmetric and totally antisymmetric sectors. A convenient realization is Green's ansatz, in which paraparticles are represented using an additional internal variable, often called a color $\mathcal{C}$.  For each mode $p\in\{1,\dots,d\}$ and each color $\alpha\in\{1,\dots,\mathcal{C}\}$, we introduce color-resolved operators
\begin{equation}
  b_p^{(\alpha)},\quad b_p^{(\alpha)\dagger},
  \qquad
  f_p^{(\alpha)},\quad f_p^{(\alpha)\dagger}.
\end{equation}
The full paraboson and parafermion operators are then obtained by summing over color:
\begin{equation}
  B_p := \sum_{\alpha=1}^{\mathcal{C}} b_p^{(\alpha)},
  \qquad
  B_p^\dagger := \sum_{\alpha=1}^{\mathcal{C}} b_p^{(\alpha)\dagger},
\end{equation}
and
\begin{equation}
  F_p := \sum_{\alpha=1}^{\mathcal{C}} f_p^{(\alpha)},
  \qquad
  F_p^\dagger := \sum_{\alpha=1}^{\mathcal{C}} f_p^{(\alpha)\dagger}.
\end{equation}

In Green's ansatz, same-color components obey the ordinary algebra, while different-color components obey the opposite exchange type. One convenient form of these relations is:
\begin{align}
  \text{parabosons:}\qquad
  [b_p^{(\alpha)},b_q^{(\alpha)\dagger}] &= \delta_{pq},
  &
  [b_p^{(\alpha)},b_q^{(\alpha)}] &= 0,
  &
  [b_p^{(\alpha)\dagger},b_q^{(\alpha)\dagger}] &= 0,
  \nonumber\\
  \{b_p^{(\alpha)},b_q^{(\beta)\dagger}\} &= 0,
  &
  \{b_p^{(\alpha)},b_q^{(\beta)}\} &= 0,
  &
  \{b_p^{(\alpha)\dagger},b_q^{(\beta)\dagger}\} &= 0,
  \qquad \alpha\neq\beta,
  \label{eq:green_ansatz_pb}
\end{align}
and
\begin{align}
  \text{parafermions:}\qquad
  \{f_p^{(\alpha)},f_q^{(\alpha)\dagger}\} &= \delta_{pq},
  &
  \{f_p^{(\alpha)},f_q^{(\alpha)}\} &= 0,
  &
  \{f_p^{(\alpha)\dagger},f_q^{(\alpha)\dagger}\} &= 0,
  \nonumber\\
  [f_p^{(\alpha)},f_q^{(\beta)\dagger}] &= 0,
  &
  [f_p^{(\alpha)},f_q^{(\beta)}] &= 0,
  &
  [f_p^{(\alpha)\dagger},f_q^{(\beta)\dagger}] &= 0,
  \qquad \alpha\neq\beta.
  \label{eq:green_ansatz_pf}
\end{align}

It is useful to distinguish the color-resolved auxiliary space from the physical Fock space. In the color-resolved picture, one may write basis states of the schematic form
\begin{equation}
  |n\rangle_G
  :=
  \bigotimes_{\alpha=1}^{\mathcal{C}}
  |n_1^{(\alpha)},\dots,n_{d}^{(\alpha)}\rangle,
\end{equation}
where $n_p^{(\alpha)}$ records the occupation of mode $p$ with color $\alpha$. The total occupation of mode $p$ is then
\begin{equation}
  n_p = \sum_{\alpha=1}^{\mathcal{C}} n_p^{(\alpha)}.
\end{equation}
Physical observables are constructed from the summed operators $B_p,B_p^\dagger$ or
$F_p,F_p^\dagger$ and their polynomials, and therefore do not resolve the auxiliary color labels individually. Consequently, the physical Fock space is obtained from the color-resolved construction by retaining only the physically relevant states and multiplicities associated with the chosen parastatistics.

\subsubsection{Total particle-number-conserving operators}
\label{subsec:particle_conserving_ops_2nd_q}

We now identify the distinguished one-body generators that preserve total particle number. These operators play a central role in second quantization, since they generate the natural action of the single-particle algebra on the many-body Fock space and provide the basic building blocks from which general number-conserving observables are constructed.

For ordinary bosons and ordinary fermions, the relevant bilinears are the familiar operators
\begin{equation}
  J_{pq} := a_p^\dagger a_q,
  \qquad p,q\in\{1,\dots,d\},
  \label{eq:Jpq-standard}
\end{equation}
where \(a_p=b_p\) in the bosonic case and \(a_p=f_p\) in the fermionic case. These commute with the total number operator,
\begin{equation}
  [\hat N,J_{pq}] = 0,
  \label{eq:Jpq-number-conserving}
\end{equation}
and their diagonal members are precisely the ordinary mode-number operators,
\begin{equation}
  J_{pp} = a_p^\dagger a_p = \hat n_p.
\end{equation}

For parabosons and parafermions, it is convenient to define the physical one-body generators directly in terms of the summed Green operators \(B_p,B_p^\dagger\) and \(F_p,F_p^\dagger\). We take
\begin{align}
  \text{parabosons:}\qquad
  J_{pq}
  &:=
  \frac12 \{B_p^\dagger,B_q\}
  - \frac{\mathcal C}{2}\,\delta_{pq},
  \label{eq:Jpq-paraB}
  \\
  \text{parafermions:}\qquad
  J_{pq}
  &:=
  \frac12 [F_p^\dagger,F_q]
  + \frac{\mathcal C}{2}\,\delta_{pq},
  \label{eq:Jpq-paraF}
\end{align}
where \(\mathcal C\) is the Green color index. With this normalization, the diagonal generators coincide directly with the physical mode-occupation operators in the parabose and parafermi cases.

Indeed, for parabosons one finds
\begin{align}
  \{B_p^\dagger,B_p\}
  &=
  \sum_{\alpha,\beta=1}^{\mathcal C}
  \{b_p^{(\alpha)\dagger},b_p^{(\beta)}\}
  =
  \sum_{\alpha=1}^{\mathcal C}
  \{b_p^{(\alpha)\dagger},b_p^{(\alpha)}\}
  \nonumber\\
  &=
  \sum_{\alpha=1}^{\mathcal C}
  \bigl(2\,b_p^{(\alpha)\dagger}b_p^{(\alpha)}+1\bigr)
  =
  2\sum_{\alpha=1}^{\mathcal C} b_p^{(\alpha)\dagger}b_p^{(\alpha)}
  + \mathcal C,
\end{align}
and therefore
\begin{equation}
  J_{pp}
  =
  \frac12\{B_p^\dagger,B_p\}
  - \frac{\mathcal C}{2}
  =
  \sum_{\alpha=1}^{\mathcal C} b_p^{(\alpha)\dagger}b_p^{(\alpha)}
  =: \hat n_p.
\end{equation}
Similarly, for parafermions,
\begin{align}
  [F_p^\dagger,F_p]
  &=
  \sum_{\alpha,\beta=1}^{\mathcal C}
  [f_p^{(\alpha)\dagger},f_p^{(\beta)}]
  =
  \sum_{\alpha=1}^{\mathcal C}
  [f_p^{(\alpha)\dagger},f_p^{(\alpha)}]
  \nonumber\\
  &=
  \sum_{\alpha=1}^{\mathcal C}
  \bigl(2\,f_p^{(\alpha)\dagger}f_p^{(\alpha)}-1\bigr)
  =
  2\sum_{\alpha=1}^{\mathcal C} f_p^{(\alpha)\dagger}f_p^{(\alpha)}
  - \mathcal C,
\end{align}
so that
\begin{equation}
  J_{pp}
  =
  \frac12[F_p^\dagger,F_p]
  + \frac{\mathcal C}{2}
  =
  \sum_{\alpha=1}^{\mathcal C} f_p^{(\alpha)\dagger}f_p^{(\alpha)}
  =: \hat n_p.
\end{equation}
Thus, in all statistics sectors under consideration—bosonic, fermionic, parabosonic, and parafermionic—the diagonal generators \(J_{pp}\) are the physical occupation counters for mode \(p\).

The total number operator may therefore be written uniformly as
\begin{equation}
  \hat N = \sum_{p=1}^{d} J_{pp},
  \label{eq:number-op-from-Jdiag}
\end{equation}
and the one-body generators satisfy
\begin{equation}
  [\hat N,J_{pq}] = 0
\end{equation}
for all \(p,q\). In the special case \(\mathcal C=1\), the parabose and parafermi definitions reduce to the ordinary bosonic and fermionic bilinears, respectively.

Once these one-body generators have been identified, general particle-number-conserving second-quantized observables may be organized as finite polynomials in the \(J_{pq}\). We say that an operator \(O^{\hat N}_{2\mathrm Q}\) is particle-number-conserving if
\begin{equation}
  [O^{\hat N}_{2\mathrm Q},\hat N]=0.
\end{equation}
For fixed statistics and on a fixed particle-number sector, such observables are generated by polynomials in the corresponding one-body generators. Accordingly, one may write
\begin{equation}
  O^{\hat N}_{2\mathrm Q}
  =
  \sum_{r=0}^{k}
  \sum_{\substack{
    p_1,\dots,p_r=1\\
    q_1,\dots,q_r=1
  }}^{d}
  c^{(r)}_{p_1\dots p_r,\,q_1\dots q_r}\,
  J_{p_1q_1}\cdots J_{p_rq_r},
  \label{eq:secondQ_polyJ}
\end{equation}
where \(J_{pq}\) denotes the statistics-appropriate bilinear family defined above, and the \(r=0\) term is a scalar multiple of the identity. The coefficients \(c^{(r)}_{p_1\dots p_r,\,q_1\dots q_r}\) are the second-quantized coefficients associated with the corresponding first-quantized expansion; cf. Eq.~\eqref{eq:k-body-matrix-elements_corrected}.

For ordinary bosons and fermions, one may equivalently use the familiar normal-ordered form
\begin{equation}
  \widetilde O^{\hat N}_{2\mathrm Q}
  =
  \sum_{r=0}^{k}
  \sum_{\substack{
    p_1,\dots,p_r=1\\
    q_1,\dots,q_r=1
  }}^{d}
  \widetilde c^{(r)}_{p_1\dots p_r,\,q_1\dots q_r}\,
  a^\dagger_{p_1}\cdots a^\dagger_{p_r}\,
  a_{q_r}\cdots a_{q_1},
  \label{eq:secondQ_sumK}
\end{equation}
with \(a=b\) for bosons and \(a=f\) for fermions. The coefficients in Eq.~\eqref{eq:secondQ_sumK} need not coincide term-by-term with those in Eq.~\eqref{eq:secondQ_polyJ}; the two expressions are simply different but equivalent ways of organizing the same class of number-conserving operators.

Finally, for fixed statistics and fixed particle number \(N\), the Hermitian particle-number-conserving operators form a real vector space, and multiplication by \(i\) yields the corresponding Lie algebra of number-conserving generators on the fixed-\(N\) sector of Fock space. The associated unitary group is
\begin{equation}
  U\!\bigl(\mathcal F_N^{(\mathrm{stat})}\bigr)
  =
  \bigl\{
    e^{-iH}
    :
    H=H^\dagger
    \text{ and }
    [H,\hat N]=0
    \text{ on }
    \mathcal F_N^{(\mathrm{stat})}
  \bigr\}.
\end{equation}
These are precisely the number-conserving unitaries within the chosen symmetry class, whether bosonic, fermionic, parabosonic, or parafermionic.

\subsubsection{Resource considerations in digital quantum simulation}
\label{subsec:2q_resource_tradeoff}

The dominant resource drivers in second quantized quantum simulation are (i) the qubit cost per mode and (ii) the Pauli weight (the number of qubits individual Pauli-words act on non-trivially) and term count induced when rewriting creation/annihilation polynomials into a qubit-native operator basis.  These choices interact strongly with the target simulation primitive (product formulas, block encodings, quantum signal processing etc.), but many trade-offs recur across methods.

\subsubsection{Fermions \label{subsubsec:2q_fermions}}

For fermions, the standard approach maps $d$ spin-orbitals to $d$ qubits and represents the fermionic algebra via a fermion-to-qubit transform. The Jordan--Wigner (JW) transform \cite{JordanWigner1928} is conceptually simplest: it encodes the fermionic parity string explicitly as a product of $Z$ operators on all lower-index modes.  As a result, a single fermionic operator $f_p$ or $f_p^\dagger$ maps to a Pauli string whose weight scales as $O(d)$ in the worst case, and exponentiating generic two-body terms (the Coulomb interaction, for example) can incur long parity strings and correspondingly large two-qubit gate counts. The Bravyi--Kitaev (BK) transform \cite{BravyiKitaev2002,Seeley2012} reorganizes parity information into a logarithmic-depth data structure, reducing the Pauli weight of individual fermionic operators to $O(\log d)$.  In practice, BK often reduces CNOT counts for product formula simulations relative to JW for chemistry-relevant Hamiltonians, though the advantage is instance-dependent; a broad empirical comparison across many molecular instances can be found in Ref.~\cite{Tranter2018}.

Beyond JW and BK, a large design space trades qubits, Pauli weight, and circuit structure.  Code-based mappings can reduce the number of qubits below $d$ when one is willing to pay extra overhead in multi-controlled logic and nonlocal updates, providing systematic qubit--gate trade-offs \cite{SteudtnerWehner2018}.  On the other end of the spectrum, graph-structured ``superfast'' encodings and their generalizations can reduce Pauli weights for Hamiltonians with sparse interaction graphs, at the cost of increased qubit counts that scale with graph degree \cite{Setia2019}.  These alternatives are most valuable when other constraints such as physical qubit connectivity, native entangling gates, available ancillae and implementation of fault tolerant computation shift the true cost away from a simple Pauli-weight proxy.

\subsubsection{Bosons \label{subsubsec:2q_bosons}}

For bosons on qubit hardware, one first truncates each local mode to a finite occupation cutoff $n_{\max}$ (i.e., truncation to a $d$-level local Hilbert space, where $d = n_{\text{max}} +1$). Three popular encodings then interpolate between qubit count and operator locality: unary (``one-hot''), binary, and Gray-code encodings \cite{Sawaya2020,McArdle2019,Macridin2018}. Unary encodings use $d$ qubits per mode but make ladder operators sparse and local, which can substantially reduce gate depth when qubits are plentiful.

Binary encodings compress the same truncated space into $\lceil\log_2(d)\rceil$ qubits per mode, but ladder operators become sums of Pauli strings obtained by expanding outer products in the computational basis.  Concretely, the same factorization and Pauli expansion used earlier in first quantization, Eqs.~\eqref{eq:outerprod-factorization} and \eqref{eq:singlequbit-pauli}, applies verbatim to the matrix units $|n\rangle\!\langle n^{\prime}|$ that appear when representing truncated bosonic creation/annihilation operators as transition operators between number states.  Gray-code encodings retain the logarithmic qubit count of binary while ensuring that adjacent occupation numbers differ by Hamming distance one, which can reduce Pauli weights and/or the number of nontrivial Pauli terms appearing in ladder-operator decompositions, improving depth constants in digital simulation \cite{Sawaya2020}.

Beyond the scope of this work, it is worth noting that bosonic degrees of freedom are often more natural in continuous-variable (CV) architectures, where modes are hardware-native rather than truncated abstractions; see, e.g., Refs.~\cite{LloydBraunstein1999,Weedbrook2012} for foundational and review-level treatments of CV quantum information.

\subsubsection{Paraparticles \label{subsubsec:2q_paraparticles}}

While many-body parabose/parafermi field theories are less commonly targeted in quantum-simulation benchmarks, Ref. \cite{HuertaAlderete2025} reports a trapped-ion digital simulation of parabosonic and parafermionic oscillators by mapping truncated paraparticle states to qubit registers and compiling the resulting dynamics to qubit gates. This provides evidence that parastatistics-inspired operator algebras can be simulated on gate-based hardware, but this still remains a relatively unexplored area.

\section{Representation-theoretic bridge between first and second quantization \label{sec:js_map}}

\subsection{Jordan--Schwinger map}
\label{subsec:js_map}

The Jordan--Schwinger (JS) map provides an algebraic bridge between the single-particle operator algebra on \(\mathbb C^d\) and the algebra of particle-number-conserving operators on Fock space. Since the relevant generators commute with the total number operator, this representation restricts naturally to each fixed-\(N\) sector. In the present setting, the construction applies uniformly across the bosonic, fermionic, parabosonic, and parafermionic symmetry classes through the statistics-dependent but uniformly normalized bilinears \(J_{pq}\) introduced in Sec.~\ref{subsec:particle_conserving_ops_2nd_q}.

Recall the matrix units \(E_{pq}=|p\rangle\langle q|\) (\(p,q\in\{1,\dots,d\}\)) of
Sec.~\ref{subsec:1q_struc_of_ops}, which span \(\mathfrak{gl}(d,\mathbb C)\) and obey the
commutation relations
\begin{equation}
  [E_{pq},E_{rs}]
  =
  \delta_{qr}E_{ps}-\delta_{ps}E_{rq}.
  \label{eq:gl_comm_matrix_units}
\end{equation}
On the physical Fock space for the chosen statistics sector, the one-body generators \(J_{pq}\) satisfy the same structure constants,
\begin{equation}
  [J_{pq},J_{rs}]
  =
  \delta_{qr}J_{ps}-\delta_{ps}J_{rq}.
  \label{eq:gl_comm_Jpq}
\end{equation}
To verify Eq.~\eqref{eq:gl_comm_Jpq}, recall that for ordinary bosons and fermions
$J_{pq}=a_p^\dagger a_q$ [Eq.~\eqref{eq:Jpq-standard}], so the relation follows from the
standard bilinear computation upon inserting the canonical relations Eqs. \eqref{eq:CCR} and
\eqref{eq:CAR}. For parabosons and parafermions, substituting Green's ansatz
[Eqs.~\eqref{eq:green_ansatz_pb} and \eqref{eq:green_ansatz_pf}] into the definitions
[Eqs.~\eqref{eq:Jpq-paraB} and \eqref{eq:Jpq-paraF}] cancels the constant shift and reduces
the generators to the color-summed diagonal form
$J_{pq}=\sum_{\alpha=1}^{\mathcal C} a_p^{(\alpha)\dagger}a_q^{(\alpha)}$ (with
$a^{(\alpha)}=b^{(\alpha)}$ or $f^{(\alpha)}$ as appropriate), the off-diagonal extension of
the identity $J_{pp}=\hat n_p$ derived above; the same-color (anti)commutators then
reproduce Eq.~\eqref{eq:gl_comm_Jpq} term by term, while every cross-color contribution
cancels. Thus the correspondence \(E_{pq}\mapsto J_{pq}\) furnishes a representation of \(\mathfrak{gl}(d,\mathbb C)\) by number-conserving many-body operators.

This motivates the definition of the Jordan--Schwinger map as the (Hermiticity preserving) linear map
\begin{equation}
  \Phi_{\mathrm{JS}}
  :
  \mathfrak{gl}(d,\mathbb C)
  \to
  \mathrm{End}\!\bigl(\mathcal F^{(\mathrm{stat})}\bigr),
  \qquad
  \Phi_{\mathrm{JS}}(X)
  :=
  \sum_{p,q=1}^{d}
  X_{pq}\,J_{pq},
  \label{eq:js_map}
\end{equation}
for any \(X=(X_{pq})\in\mathfrak{gl}(d,\mathbb C)\). By Eq.~\eqref{eq:gl_comm_Jpq}, \(\Phi_{\mathrm{JS}}\) is a Lie-algebra homomorphism:
\begin{equation}
  \Phi_{\mathrm{JS}}([X,Y])
  =
  [\Phi_{\mathrm{JS}}(X),\Phi_{\mathrm{JS}}(Y)],
  \qquad
  X,Y\in\mathfrak{gl}(d,\mathbb C).
\end{equation}

Restricting \(X\) to any traceless Hermitian generating set yields the corresponding representation of \(\mathfrak{su}(d)\) by number-conserving operators. In particular, if one adopts the Cartan--Weyl generators
\begin{equation}
  H_p := E_{pp}-E_{p+1,p+1},
  \qquad p=1,\dots,d-1,
\end{equation}
together with \(E_{pq}\) for \(p\neq q\), then the JS map gives
\begin{equation}
  \Phi_{\mathrm{JS}}(E_{pq}) = J_{pq},
  \qquad (p\neq q),
\end{equation}
and
\begin{equation}
  \Phi_{\mathrm{JS}}(H_p)
  =
  J_{pp}-J_{p+1,p+1}.
\end{equation}
Because the \(J_{pq}\) have been normalized so that \(J_{pp}=\hat n_p\) in every statistics sector, these Cartan generators are represented simply by differences of physical mode occupations. The same construction may equivalently be expressed in any other basis of \(\mathfrak{su}(d)\), for example the generalized Gell--Mann basis.

Adjoining the identity extends the map from \(\mathfrak{su}(d)\) to \(\mathfrak{u}(d)\cong \mathfrak{su}(d)\oplus\mathfrak{u}(1)\). Writing
\begin{equation}
  I=\sum_{p=1}^{d}E_{pp},
\end{equation}
one obtains
\begin{equation}
  \Phi_{\mathrm{JS}}(I)
  =
  \sum_{p=1}^{d}J_{pp}
  =
  \hat N.
  \label{eq:js_identity_to_number}
\end{equation}
Thus the central \(\mathfrak u(1)\) generator is represented exactly by the total particle-number operator.

We now connect this algebraic construction to the passage from the first-quantized operator expansion of Eq.~\eqref{eq:Hsumk_corrected} to the number-conserving second-quantized form of Eq.~\eqref{eq:secondQ_polyJ}. In first quantization, the operator is written as a finite sum of \(k\)-body terms expanded in single-particle matrix units. Identifying each outer product \(|p\rangle\langle q|\) with \(E_{pq}\) places this expansion inside the noncommutative polynomial algebra generated by the matrix units.

On the \(N\)-particle Hilbert space \((\mathbb C^d)^{\otimes N}\), it is natural to introduce the corresponding collective generators
\begin{equation}
  \Lambda_{pq}
  :=
  \sum_{\ell=1}^{N}
  E_{pq}^{(\ell)},
  \label{eq:collective_generators}
\end{equation}
which realize the diagonal action of \(\mathfrak u(d)\) on \((\mathbb C^d)^{\otimes N}\). These collective operators satisfy the same commutation relations as the \(E_{pq}\), and their Cartan--Weyl combinations generate the usual collective raising, lowering, and Cartan actions on the \(N\)-particle space. Accordingly, the class of first-quantized symmetry-preserving operators considered here is naturally organized as polynomials in the \(\Lambda_{pq}\), that is, as elements of the image of the universal enveloping algebra \(U(\mathfrak u(d))\) under the diagonal action.

Applying the Jordan--Schwinger realization amounts to replacing the abstract collective generators by their Fock-space images,
\begin{equation}
  \Lambda_{pq}\;\longmapsto\; J_{pq}=\Phi_{\mathrm{JS}}(E_{pq}),
\end{equation}
and extending this substitution multiplicatively to polynomials. Performing this replacement in the matrix-unit expansion of Eq.~\eqref{eq:Hsumk_corrected} yields the number-conserving second-quantized operator of Eq.~\eqref{eq:secondQ_polyJ}. For ordinary bosons and fermions, one may then further rewrite the result in the equivalent normal-ordered form of Eq.~\eqref{eq:secondQ_sumK}.

In this sense, the first- and second-quantized operator descriptions are two concrete realizations of the same underlying algebraic object: the former acts on the \(N\)-particle tensor-product space through the collective generators \(\Lambda_{pq}\), while the latter acts on the corresponding fixed-\(N\) sector of Fock space through the bilinears \(J_{pq}\). The role of statistics is entirely contained in the chosen realization of these generators on the target Fock space, whereas the underlying \(\mathfrak u(d)\) representation-theoretic structure is common to all four symmetry classes considered here.

\subsection{Schur--Weyl decomposition of the first-quantized Hilbert space}
\label{subsec:schur_wey_decomp}

Before relating first- and second-quantized wavefunctions, we briefly formalize the
representation-theoretic structure of the first-quantized Hilbert space
\(\mathcal{H}_{\mathrm{1Q}}=(\mathbb{C}^d)^{\otimes N}\) introduced in
Sec.~\ref{subsec:1q_struc_of_hilbert_space}, regarded here as \(N\) distinguishable registers,
each carrying a \(d\)-dimensional single-particle space. This space carries two natural commuting
actions: a collective \(U(d)\) action that rotates each tensor factor identically, and the
permutation action \(P(\pi)\) of the symmetric group \(S_N\) that permutes the tensor factors as
in Eq.~\eqref{eq:perm_action}. Explicitly, for \(U\in U(d)\),
\begin{equation}
U^{\otimes N}:\ |i_1\rangle\otimes\cdots\otimes|i_N\rangle
\ \longmapsto\ (U|i_1\rangle)\otimes\cdots\otimes(U|i_N\rangle).
\end{equation}
These actions commute,
\begin{equation}
[U^{\otimes N},\,P(\pi)] = 0
\qquad
\forall\,U\in U(d),\ \pi\in S_N,
\end{equation}
and, in fact, generate mutual commutants inside
\(\mathrm{End}\bigl((\mathbb{C}^d)^{\otimes N}\bigr)\). This is the content of
Schur--Weyl duality. Equivalently, one may regard the same structure at the Lie-algebra
level in terms of the commuting actions of \(\mathfrak u(d)\) and \(\mathbb C[S_N]\), or,
after complexification, \(\mathfrak{gl}(d,\mathbb C)\) and \(S_N\).

As a consequence, \(\mathcal{H}_{\mathrm{1Q}}\) decomposes as
\begin{equation}
(\mathbb{C}^d)^{\otimes N}
\ \cong\
\bigoplus_{\lambda\vdash N,\ \ell(\lambda)\le d}
\mathcal{Q}^{(d)}_{\lambda}\otimes \mathcal{P}_{\lambda}.
\label{eq:SW_decomp}
\end{equation}
Here \(\lambda\vdash N\) denotes a partition
\(\lambda=(\lambda_1,\lambda_2,\ldots)\) with
\(\lambda_1\ge\lambda_2\ge\cdots\) and \(\sum_i\lambda_i=N\), and \(\ell(\lambda)\) is the
number of nonzero parts, equivalently the number of rows of the associated Young diagram.
The constraint \(\ell(\lambda)\le d\) expresses the fact that only those \(U(d)\) irreducible
representations whose highest weights fit within \(d\) rows can appear in
\((\mathbb C^d)^{\otimes N}\). The factor \(\mathcal{Q}^{(d)}_{\lambda}\) carries the
\(U(d)\) irrep of highest weight \(\lambda\), while \(\mathcal{P}_{\lambda}\) carries the
Specht-module irrep of \(S_N\). Thus Eq.~\eqref{eq:SW_decomp} resolves the first-quantized
Hilbert space into orthogonal sectors labelled by \(\lambda\), each sector carrying an
irreducible permutation-symmetry type together with the compatible \(U(d)\) irrep acting on
the internal degrees of freedom.

To compute with this decomposition it is convenient to fix an orthonormal basis adapted to
it, the \emph{Schur basis} \(\{|\lambda,\mu,\sigma\rangle\}\), with
\begin{equation}
|\lambda,\mu,\sigma\rangle \in \mathcal{Q}^{(d)}_\lambda \otimes \mathcal{P}_\lambda.
\end{equation}
The shape \(\lambda\) selects the Schur--Weyl sector. Within that sector, the label \(\mu\)
indexes a basis of the \(U(d)\) irreducible representation \(\mathcal{Q}^{(d)}_\lambda\), and
the label \(\sigma\) indexes a basis of the \(S_N\) Specht module \(\mathcal{P}_\lambda\). We
develop the two labels in turn. We begin with \(\mu\), for which we adopt the
Gelfand--Tsetlin basis: it carries the \(U(d)\) weight, equivalently the mode-occupation
data, and is therefore the label that bridges directly to the second-quantized description
that is the focus of this work. We then turn to \(\sigma\), for which we adopt the
Young--Yamanouchi basis encoding the permutation symmetry.

\subsubsection{The \(U(d)\) label \(\mu\): Gelfand--Tsetlin patterns and semistandard tableaux}
\label{subsubsec:mu_label}

We take \(\mu\) to be a \emph{Gelfand--Tsetlin (GT) pattern}, a triangular array of integers
\begin{equation}
\mu\equiv
\left(
\begin{array}{cccc}
\mu^{(d)}_1 & \mu^{(d)}_2 & \cdots & \mu^{(d)}_{d} \\
\mu^{(d-1)}_1 & \mu^{(d-1)}_2 & \cdots & \mu^{(d-1)}_{d-1} \\
\vdots & \vdots & \ddots \\
\mu^{(2)}_1 & \mu^{(2)}_2 \\
\mu^{(1)}_1
\end{array}
\right),
\end{equation}
subject to the interlacing conditions
\begin{equation}
\mu^{(r)}_i \ge \mu^{(r-1)}_i \ge \mu^{(r)}_{i+1}
\qquad
\text{for } r=2,\ldots,d,\ \ i=1,\ldots,r-1.
\label{eq:GT_interlacing}
\end{equation}
The top row \(\mu^{(d)}\) is the highest weight of the \(U(d)\) irrep and, in the present
Schur--Weyl decomposition, is fixed by the partition \(\lambda\), padded with trailing zeros
to length \(d\) if necessary. A GT pattern therefore extends \(\lambda\) downward through
\(d-1\) interlacing rows.

Equivalently, and often more transparent combinatorially, the GT patterns for
\(\mathcal{Q}^{(d)}_\lambda\) are in bijection with \emph{semistandard Young tableaux} (SSYT)
of shape \(\lambda\) with entries in \(\{1,2,\ldots,d\}\). As is standard in the tableau
literature, the symbols \(1,\ldots,d\) are identified directly with the mode labels
\(1,\ldots,d\) used throughout the manuscript. An SSYT is a filling of the boxes of
\(\lambda\) with symbols \(1,\ldots,d\) that is weakly increasing along rows and strictly
increasing down columns.

The bijection may be described as follows. Given an SSYT \(S\), for each
\(r\in\{1,\ldots,d\}\) let \(\mu^{(r)}\) be the Young diagram whose \(i\)th row length equals
the number of boxes in row \(i\) of \(S\) whose entry is at most \(r\). Then
\begin{equation}
\mu^{(1)} \subseteq \mu^{(2)} \subseteq \cdots \subseteq \mu^{(d)}=\lambda
\end{equation}
defines a nested sequence of diagrams, equivalently an interlacing GT pattern; conversely,
every GT pattern determines a unique SSYT by reversing this procedure.

The content of the SSYT, namely the number of occurrences of each symbol \(r\), determines
the \(U(d)\) weight of the corresponding basis vector with respect to the diagonal generators
\(E_{rr}\): writing \(n_r\) for the number of entries equal to \(r\), the weight is
\((n_1,\ldots,n_d)\). Under the Jordan--Schwinger realization developed below, these same
components are realized as the mode occupations measured by the diagonal second-quantized
generators \(J_{rr}=\hat n_r\). The GT, or equivalently SSYT, label therefore already
encodes, on the first-quantized side, the occupation data that later appears explicitly in
Fock space.

\paragraph{Example: SSYT and corresponding GT pattern.}
Consider the nontrivial shape \(\lambda=(2,1)\), now with \(d=3\), and the semistandard
Young tableau
\begin{equation}
S=
\begin{array}{|c|c|}
\hline
1 & 1\\
\hline
2\\
\cline{1-1}
\end{array}.
\end{equation}
The first row repeats the symbol \(1\) and is therefore only \emph{weakly} increasing; this is
precisely the feature that distinguishes a semistandard tableau from a standard one, whose entries
are all distinct and whose rows are strictly increasing. The single column \((1,2)\) is strictly
increasing, so \(S\) is indeed semistandard. We now construct the associated GT pattern.

For \(r=1\), the boxes with entry at most \(1\) are the two cells of the first row, forming the
shape \((2)\), so
\begin{equation}
\mu^{(1)}=(2).
\end{equation}
For \(r=2\), all boxes have entry at most \(2\), giving the full shape \((2,1)\), so
\begin{equation}
\mu^{(2)}=(2,1).
\end{equation}
For \(r=3\), no further boxes are added, so the shape is unchanged,
\begin{equation}
\mu^{(3)}=(2,1,0).
\end{equation}
Hence the corresponding GT pattern is
\begin{equation}
\mu=
\left(
\begin{array}{ccc}
2 & 1 & 0\\
2 & 1\\
2
\end{array}
\right),
\end{equation}
and one readily checks the interlacing conditions
\[
2\ge 2\ge 1,
\qquad
1\ge 1\ge 0,
\qquad
2\ge 2\ge 1.
\]
The content of the tableau is \((2,1,0)\): the symbol \(1\) occurs twice and the symbol \(2\)
once. The corresponding \(U(3)\) weight therefore assigns occupation two to mode \(1\), occupation
one to mode \(2\), and zero to mode \(3\), i.e.\ the number state \(\ket{2,1,0}\) in second
quantization.

\subsubsection{The \(S_N\) label \(\sigma\): standard Young tableaux and Young--Yamanouchi words}
\label{subsubsec:sigma_label}

We take \(\sigma\) to be a \emph{Young--Yamanouchi word}, defined through standard Young
tableaux. A \emph{standard Young tableau} (SYT) \(T\) of shape \(\lambda\) is a filling of the
boxes of the Young diagram \(\lambda\) with \(1,2,\ldots,N\) such that entries increase
strictly along each row and strictly down each column. Given such a tableau \(T\), its
associated Young--Yamanouchi word is the sequence
\begin{equation}
\sigma(T) = (\sigma_1,\ldots,\sigma_N),
\qquad
\sigma_m := \text{row index of the box containing } m,
\label{eq:yamanouchi_word}
\end{equation}
where rows are numbered from top to bottom starting at \(1\). The sequence \(\sigma(T)\)
satisfies the ballot, or Yamanouchi, condition: for every prefix
\((\sigma_1,\ldots,\sigma_r)\) and every \(a\ge 1\), the number of occurrences of \(a\) is at
least the number of occurrences of \(a+1\). Conversely, a word
\(\sigma\in\{1,\ldots,\ell(\lambda)\}^N\) with content \(\lambda\) and satisfying the ballot
condition uniquely determines an SYT of shape \(\lambda\). We therefore use \(\sigma\)
interchangeably for the tableau itself and for its Young--Yamanouchi word.

In the Schur basis, the permutation operators \(P(\pi)\) act block-diagonally in \(\lambda\)
and mix only the \(\sigma\) indices within a fixed \(\lambda\) sector. The two extremes are
the one-row diagram \(\lambda=(N)\), whose Specht module is the trivial representation and
which yields the fully symmetric bosonic sector, and the one-column diagram
\(\lambda=(1,1,\ldots,1)\), whose Specht module is the sign representation and which yields
the fully antisymmetric fermionic sector. More general shapes \(\lambda\) carry the
higher-dimensional permutation-symmetry types and provide the natural first-quantized setting
for paraparticle sectors.

\paragraph{Example: SYT and Young--Yamanouchi word.}
Consider the nontrivial shape \(\lambda=(2,1)\), with three boxes arranged in two rows.
One standard Young tableau of this shape is
\begin{equation}
T=
\begin{array}{|c|c|}
\hline
1 & 3\\
\hline
2\\
\cline{1-1}
\end{array}.
\end{equation}
The entry \(1\) lies in row \(1\), the entry \(2\) lies in row \(2\), and the entry \(3\)
lies in row \(1\). Hence the associated Young--Yamanouchi word is
\begin{equation}
\sigma(T)=(1,2,1).
\end{equation}
Its prefixes \((1)\), \((1,2)\), and \((1,2,1)\) all satisfy the ballot condition, since
at every stage the number of \(1\)'s is at least the number of \(2\)'s.

\subsubsection{Action of the commuting symmetries in the Schur basis}
\label{subsubsec:schur_action}

Collecting these ingredients, the Schur basis vectors
\begin{equation}
|\lambda,\mu,\sigma\rangle
\in
\mathcal{Q}^{(d)}_\lambda\otimes\mathcal{P}_\lambda
\end{equation}
form an orthonormal basis of \(\mathcal H_{\mathrm{1Q}}\) in which the separation of the
commuting \(U(d)\) and \(S_N\) actions is manifest. Collective rotations \(U^{\otimes N}\)
act irreducibly on the \(\mu\) label within each fixed \(\lambda\) sector and trivially on
\(\sigma\), whereas permutations \(P(\pi)\) act irreducibly on the \(\sigma\) label within
each fixed \(\lambda\) sector and trivially on \(\mu\). This decomposition is the
representation-theoretic starting point for the bridge to second quantization developed
below, where the \(U(d)\) action is realized through the number-conserving bilinears
\(J_{pq}\) and the weight data carried by \(\mu\) becomes the occupation data of Fock space.

\subsection{Conversion between number and Schur basis states \label{subsec:number_to_schur_math}}

We now make explicit the relationship between the weight data encoded by the
particle-number-conserving second-quantized description and the Schur basis
$|\lambda,\mu,\sigma\rangle$ of the first-quantized Hilbert space. The key observation is
that, under the Jordan--Schwinger (JS) realization, the diagonal bilinears $J_{pp}$ furnish
the Cartan-subalgebra data of the induced $\mathfrak{u}(d)$ action. Consequently, Fock
occupation vectors are precisely the weight labels of this representation.

Recall from Sec.~\ref{subsec:js_map} that the JS map sends the Cartan generators
$H_p = E_{pp}-E_{p+1,p+1}$ (\(p=1,\ldots,d-1\)) to $\Phi_{\mathrm{JS}}(H_p)=J_{pp}-J_{p+1,p+1}$.
Because the one-body generators $J_{pq}$ were defined so that $J_{pp}=\hat n_p$ in every
statistics sector under consideration—bosonic, fermionic, parabosonic, and
parafermionic—it follows immediately that
\begin{equation}
\Phi_{\mathrm{JS}}(H_p)=\hat n_p-\hat n_{p+1},
\qquad p=1,\ldots,d-1.
\label{eq:Cartan_equals_numberdiff}
\end{equation}
Thus the diagonal $\mathfrak{su}(d)$ generators act, in the JS realization, as differences
of mode-occupation operators.

This identifies the occupation-number basis with a canonical weight basis. By definition,
the commuting operators $\{\hat n_p\}_{p=1}^{d}$ act diagonally on number states
$|n\rangle\equiv |n_1,\ldots,n_{d}\rangle$, and hence so do the Cartan differences
$\hat n_p-\hat n_{p+1}$. Combining this with Eq.~\eqref{eq:Cartan_equals_numberdiff} yields
\begin{equation}
\Phi_{\mathrm{JS}}(H_p)\,|n\rangle
=
(n_p-n_{p+1})\,|n\rangle,
\qquad p=1,\ldots,d-1.
\label{eq:weight_from_occup}
\end{equation}
Therefore every occupation basis state is a simultaneous eigenvector of the Cartan
subalgebra image and hence a weight vector for the induced $\mathfrak{su}(d)$ action. In
Dynkin, or simple-root, coordinates, the corresponding weight is
\begin{equation}
w(n)=(w_1,\ldots,w_{d-1}),
\qquad
w_p:=n_p-n_{p+1}.
\label{eq:dynkin_weight_from_n}
\end{equation}

Passing from $\mathfrak{su}(d)$ to $\mathfrak{u}(d)$ amounts to adjoining the identity
generator; by Eq.~\eqref{eq:js_identity_to_number} its JS image is the total number operator
$\Phi_{\mathrm{JS}}(I)=\hat N=\sum_{p=1}^{d}\hat n_p$. On a fixed-$N$ sector, the
$\mathfrak{u}(1)$ label is therefore simply $N$, so the full $\mathfrak{u}(d)$ weight data may
be taken as the pair $(w(n),N)$.

On the first-quantized side, Schur--Weyl duality decomposes
$(\mathbb C^d)^{\otimes N}$ into irreducible $U(d)$ sectors $\mathcal Q_\lambda^{(d)}$
tensored with $S_N$ sectors $\mathcal P_\lambda$ [Eq.~\eqref{eq:SW_decomp}], and the Schur basis
$|\lambda,\mu,\sigma\rangle$ is adapted to this decomposition. Fixing $\lambda$ and $\sigma$
selects a definite copy of the $U(d)$ irrep $\mathcal Q_\lambda^{(d)}$ inside
$(\mathbb C^d)^{\otimes N}$, while the label $\mu$, realized for example as a
Gelfand--Tsetlin pattern, indexes a basis of weight vectors for the $\mathfrak{u}(d)$ action
on that irrep. The highest weight of $\mathcal Q_\lambda^{(d)}$ is determined by the Young
diagram $\lambda=(\lambda_1,\ldots,\lambda_d)$, padded with trailing zeros if necessary, and
its Dynkin labels are
\begin{equation}
z_p=\lambda_p-\lambda_{p+1},
\qquad p=1,\ldots,d-1.
\label{eq:dynkin_from_young}
\end{equation}
Once the statistics sector is fixed, the admissible Young-diagram shape $\lambda$ is fixed
accordingly: $\lambda=(N)$ for bosons, $\lambda=(1,1,\ldots,1)$ with $N$ ones for
fermions, and more general $\lambda$ for paraparticle sectors.

The conversion between occupation vectors and Dynkin-coordinate weights is linear. Given an
occupation vector $n=(n_1,\ldots,n_{d})$ with total particle number
\begin{equation}
N=\sum_{p=1}^{d}n_p,
\end{equation}
define
\begin{equation}
w_p=n_p-n_{p+1},
\qquad p=1,\ldots,d-1.
\end{equation}
Conversely, given $w=(w_1,\ldots,w_{d-1})$ and $N$, one reconstructs the unique occupation
vector solving $n_p-n_{p+1}=w_p$ together with $\sum_p n_p=N$. Writing all occupations in
terms of $n_{d}$ and imposing the total-number constraint yields
\begin{align}
n_{d}
&=
\frac{1}{d}
\left(
N-\sum_{j=1}^{d-1}j\,w_j
\right),
\label{eq:n_from_w2_a}\\
n_p
&=
n_{d}+\sum_{j=p}^{d-1} w_j,
\qquad p=1,\ldots,d-1.
\label{eq:n_from_w2_b}
\end{align}
Thus, on a fixed-$N$ sector, occupation vectors and $\mathfrak{su}(d)$ weights in Dynkin
coordinates contain equivalent information.

The Schur label $\mu$ refines this weight data by resolving possible weight multiplicities
inside $\mathcal Q_\lambda^{(d)}$. Every Schur basis state
$|\lambda,\mu,\sigma\rangle$ has a well-defined $\mathfrak{u}(d)$ weight, and hence a
well-defined occupation vector $n(\mu)$ obtained from its Dynkin coordinates through
Eqs.~\eqref{eq:n_from_w2_a}--\eqref{eq:n_from_w2_b}. Equivalently, in the tableau language of
Sec.~\ref{subsec:schur_wey_decomp}, the content of the semistandard tableau associated with
$\mu$ is exactly the occupation vector, with tableau entries $1,\ldots,d$ identified directly
with the mode labels $1,\ldots,d$. The explicit conversion between GT data and
occupation vectors will be discussed further below.

We may now formalize the relationship between the second-quantized Jordan--Schwinger
realization and the Schur realization as an equivariant isomorphism of $U(d)$ modules.

\begin{boxedtheorem}[Equivariant identification of Schur and JS realizations]
\label{thm:equivariant_bijection}
Fix a Young diagram $\lambda\vdash N$ with $\ell(\lambda)\le d$, and fix a
Young--Yamanouchi label $\sigma$ of shape $\lambda$. Let
$\mathcal H_{\lambda,\sigma}\subset (\mathbb C^d)^{\otimes N}$ denote the corresponding
Schur--Weyl sector, so that
$\mathcal H_{\lambda,\sigma}\cong \mathcal Q_\lambda^{(d)}$ as a $U(d)$ module, with Schur
basis $\{|\lambda,\mu,\sigma\rangle\}_\mu$. Let $\mathcal F_N^{(\lambda)}$ denote the
fixed-$N$ second-quantized subspace carrying the same irreducible $U(d)$ representation
through the Jordan--Schwinger bilinears $\{J_{pq}\}$, equipped with an orthonormal basis
$\{|n,\eta\rangle\}$ adapted to the weight-space decomposition, where $n$ denotes the
occupation vector, equivalently the corresponding $\mathfrak{u}(d)$ weight, and $\eta$
indexes an orthonormal basis within the weight space of weight $n$. Then there exists a
unitary linear isomorphism
\begin{equation}
U_{\lambda,\sigma}:\mathcal H_{\lambda,\sigma}\to \mathcal F_N^{(\lambda)}
\end{equation}
such that
\begin{equation}
U_{\lambda,\sigma}\,\rho_\lambda(U)
=
\rho_{\mathrm{JS}}(U)\,U_{\lambda,\sigma},
\qquad \forall\,U\in U(d),
\label{eq:equivariance}
\end{equation}
where $\rho_\lambda$ denotes the irreducible action on $\mathcal Q_\lambda^{(d)}$, and
$\rho_{\mathrm{JS}}$ denotes the $U(d)$ representation generated by the Jordan--Schwinger
bilinears on $\mathcal F_N^{(\lambda)}$. In particular, $U_{\lambda,\sigma}$ maps each
weight space of $\mathcal H_{\lambda,\sigma}$ unitarily onto the corresponding weight space
of $\mathcal F_N^{(\lambda)}$. Once orthonormal bases are fixed within corresponding weight
spaces on both sides, it induces a bijection between the resulting basis states.
\end{boxedtheorem}

\begin{proof}
By Schur--Weyl duality, fixing $\lambda$ and $\sigma$ identifies
$\mathcal H_{\lambda,\sigma}$ with one copy of the irreducible $U(d)$ module
$\mathcal Q_\lambda^{(d)}$. On the second-quantized side, the Jordan--Schwinger bilinears
furnish the same irreducible $U(d)$ representation on $\mathcal F_N^{(\lambda)}$, with
Cartan action determined by the occupation data as in Eq.~\eqref{eq:weight_from_occup} and
highest weight fixed by the same Young diagram $\lambda$. Hence
$\mathcal H_{\lambda,\sigma}$ and $\mathcal F_N^{(\lambda)}$ realize equivalent irreducible
unitary $U(d)$ representations. Existence of a unitary intertwiner $U_{\lambda,\sigma}$
satisfying Eq.~\eqref{eq:equivariance} therefore follows from equivalence of irreducible
unitary representations. Because $U_{\lambda,\sigma}$ intertwines the full $U(d)$ action, it
intertwines in particular the commuting Cartan generators, and therefore preserves their
joint eigenspace decomposition. Thus each weight space on the Schur side is mapped unitarily
onto the corresponding weight space on the second-quantized side. Uniqueness holds up to an
overall phase by Schur's lemma.
\end{proof}

For general $\lambda$, the occupation vector $n$ specifies only the $\mathfrak{u}(d)$
weight, not a unique basis state. The additional label $\eta$ resolves possible degeneracies
within a given weight space on the second-quantized side, just as the GT label $\mu$
resolves them on the Schur side. Later in the manuscript we introduce a distinguished or
canonical GT label $\mu_{\mathrm{can}}(n,\lambda)$, which selects one preferred Schur basis
state in each nonempty weight space. This additional choice is only needed in sectors with
nontrivial weight multiplicities; in the ordinary bosonic and fermionic sectors, no such
refinement is required.

Finally, we note the operational consequence for changing representations of quantum states.
Consider a fixed-$N$ second-quantized state expanded in the occupation basis,
\begin{equation}
|\psi_{\mathrm{2Q}}\rangle
=
\sum_n c_n\,|n\rangle.
\label{eq:psi_number_expansion}
\end{equation}
Once the particle statistics, and hence the Young-diagram shape $\lambda$, are fixed, the
relevant Schur--Weyl sector in first quantization is fixed as well. Choosing a
Young--Yamanouchi label $\sigma$ for this $\lambda$ specifies an embedding
$\mathcal H_{\lambda,\sigma}\subset(\mathbb C^d)^{\otimes N}$. If one further chooses the
canonical GT representative $\mu_{\mathrm{can}}(n,\lambda)$ in each nonempty weight space,
then the corresponding basis-level conversion takes the form
\begin{equation}
|n\rangle
\longmapsto
|\lambda,\mu_{\mathrm{can}}(n,\lambda),\sigma\rangle,
\label{eq:canonical_basis_map}
\end{equation}
and therefore
\begin{equation}
|\psi_{\mathrm{1Q}}\rangle
=
\sum_n c_n\,|\lambda,\mu_{\mathrm{can}}(n,\lambda),\sigma\rangle.
\label{eq:psi_schur_expansion}
\end{equation}
This is the concrete realization of the change of representation used in practice below.

The choice of $\sigma$, and likewise the particular rule used to select a distinguished basis
vector within each fixed weight space, is not physically observable for the class of
permutation-invariant observables relevant to number-conserving dynamics considered here.
Indeed, any operator commuting with the full permutation action acts trivially on the $S_N$
factor within a fixed $\lambda$ sector and therefore cannot distinguish different
Young--Yamanouchi labels $\sigma$. Thus the physically relevant information in the conversion
is carried by the $U(d)$ sector: the occupation data identifies the corresponding weight
space, while the canonical GT rule selects a distinguished representative within that space.

\subsection{Second quantization as the group Fourier basis of first quantization}
\label{subsec:2Q_as_fourier_basis}

The equivariant identification of Theorem~\ref{thm:equivariant_bijection} admits a
sharper reading: it exhibits the occupation-number representation as a Fourier
basis of the first-quantized Hilbert space. We make this precise here. Two
ingredients combine. First, the change of basis to the Schur basis is the
generalized group Fourier transform of the commuting pair $(S_N,U(d))$ furnished by
Schur--Weyl duality. Second, the occupation vector attached to a Schur basis
state is its weight, i.e.\ the joint eigenvalue of the commuting number
operators; this is the commutative reduction of the full $U(d)$ label carried by
the Gelfand--Tsetlin pattern. For bosons and fermions this reduction is lossless
and the occupation labels are simply the Fourier coefficients re-indexed; for
sectors carrying a degenerate weight space it is lossy, and recovering the
identification is exactly what the canonical Gelfand--Tsetlin promise accomplishes.

\begin{proposition}[Schur basis change as a generalized group Fourier transform]
\label{prop:schur_is_gft}
Let $U_{\mathrm{Sch}}$ denote the unitary change of basis from the computational
basis of $\mathcal H_{\mathrm{1Q}}=(\mathbb C^d)^{\otimes N}$ to the Schur basis
$\{\ket{\lambda,\mu,\sigma}\}$, realized algorithmically by the strong quantum
Schur transform of Sec.~\ref{subsec:QST}. Then $U_{\mathrm{Sch}}$ simultaneously
reduces the two commuting representations of Eq.~\eqref{eq:SW_decomp} into
irreducibles:
\begin{align}
U_{\mathrm{Sch}}\,P(\pi)\,U_{\mathrm{Sch}}^\dagger
&= \bigoplus_{\lambda} I_{\mathcal Q^{(d)}_\lambda}\otimes \rho^{S_N}_\lambda(\pi),
&&\forall\,\pi\in S_N,
\\
U_{\mathrm{Sch}}\,U^{\otimes N}\,U_{\mathrm{Sch}}^\dagger
&= \bigoplus_{\lambda} \rho^{U(d)}_\lambda(U)\otimes I_{\mathcal P_\lambda},
&&\forall\,U\in U(d),
\end{align}
where the blocks $\mathcal P_\lambda$ are expressed in the Young--Yamanouchi basis
indexed by $\sigma$ and the blocks $\mathcal Q^{(d)}_\lambda$ in the
Gelfand--Tsetlin basis indexed by $\mu$. Thus $U_{\mathrm{Sch}}$ is a generalized
Fourier transform for each of the groups $S_N$ and $U(d)$: the label $\sigma$ is
the $S_N$-Fourier index and $\mu$ the $U(d)$-Fourier index.
\end{proposition}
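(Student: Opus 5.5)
The plan is to read the proposition off Schur--Weyl duality, Eq.~\eqref{eq:SW_decomp}, together with the defining property of the Schur basis. First I fix what $U_{\mathrm{Sch}}$ is: the unitary sending each Schur basis vector $\ket{\lambda,\mu,\sigma}\in(\mathbb C^d)^{\otimes N}$ to the register state $\ket{\lambda}\ket{\mu}\ket{\sigma}$. The claim is then that, in this basis, $P(\pi)$ and $U^{\otimes N}$ have the stated block forms. Because $\{P(\pi)\}$ and $\{U^{\otimes N}\}$ generate mutual commutants, the isotypic decomposition of either action is the same decomposition $\bigoplus_\lambda \mathcal Q^{(d)}_\lambda\otimes\mathcal P_\lambda$. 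Each isotypic block is an invariant subspace for both actions, so $U_{\mathrm{Sch}}(\cdot)U_{\mathrm{Sch}}^\dagger$ is block diagonal in $\lambda$.

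Within a fixed $\lambda$ block, the double commutant theorem gives $\mathrm{span}\{U^{\otimes N}\}|_\lambda=\mathrm{End}(\mathcal Q_\lambda^{(d)})\otimes I_{\mathcal P_\lambda}$ and $\mathrm{span}\{P(\pi)\}|_\lambda=I_{\mathcal Q_\lambda^{(d)}}\otimes\mathrm{End}(\mathcal P_\lambda)$. I then use that the Schur basis was constructed as a product basis $\ket{\mu}\otimes\ket{\sigma}$ on this tensor factorisation. This was asserted in Sec.~\ref{subsubsec:schur_action}: $U^{\otimes N}$ acts only on $\mu$, and $P(\pi)$ only on $\sigma$.

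The point needing care is independence of the matrix coefficients from the spectator label. For $P(\pi)$, fixing $\mu$ and $\lambda$ gives a copy of the Specht module spanned by $\{\ket{\lambda,\mu,\sigma}\}_\sigma$. I need the matrix $\langle\lambda,\mu,\sigma'|P(\pi)|\lambda,\mu,\sigma\rangle$ to be the same Young--Yamanouchi (Young orthogonal form) matrix $\rho^{S_N}_\lambda(\pi)_{\sigma'\sigma}$ for every $\mu$.

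I would secure this by building the basis adapted to the subgroup chain $S_1\subset\cdots\subset S_N$. The vectors $\ket{\lambda,\mu,\sigma}$ are simultaneous eigenvectors of the Jucys--Murphy elements, with eigenvalues equal to the contents of $\sigma$. Adjacent transpositions then act by Young's orthogonal form, whose coefficients depend only on those contents. The same phase conventions are applied to every copy, and I would take this as part of the definition of the strong Schur transform.

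The $U(d)$ side is handled dually through the chain $U(1)\subset\cdots\subset U(d)$. Here the Gelfand--Tsetlin basis is fixed by the Casimirs of each $U(r)$, with the standard GT formulas for $\Lambda_{r,r+1}$ and $\Lambda_{r+1,r}$ on the $\mathfrak{gl}$ generators of Eq.~\eqref{eq:collective_generators}. These formulas depend only on $\mu$, which yields $\rho^{U(d)}_\lambda(U)\otimes I_{\mathcal P_\lambda}$. Exponentiating the Lie-algebra statement then gives the group statement, since $U(d)$ is connected.

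Finally, the interpretation as a generalized Fourier transform is definitional. A unitary that block-diagonalizes the regular representation, or more generally a given representation, into irreps written in a subgroup-adapted basis is the standard notion of a generalized quantum Fourier transform. Here $\sigma$ indexes the $S_N$ irrep basis and $\mu$ the $U(d)$ one. The main obstacle is the consistency of phases across multiplicity copies, that is, making sure the basis really is a tensor product basis rather than merely adapted to the decomposition. I would resolve it by explicitly defining the Schur basis through the two subgroup chains, citing Bacon--Chuang--Harrow, and invoking Schur's lemma to fix the intertwiner between copies up to one global phase per $\lambda$.
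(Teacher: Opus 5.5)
Your proposal is correct and follows essentially the same route as the paper. The paper likewise reads both block forms directly off Schur--Weyl duality together with the fact that the Schur basis is, by definition, the product of a Gelfand--Tsetlin basis of $\mathcal Q^{(d)}_\lambda$ and a Young--Yamanouchi basis of $\mathcal P_\lambda$ in each $\lambda$ block, and it treats the Fourier-transform interpretation as definitional. The one difference is that the paper folds the consistency of phase conventions across multiplicity copies into that definition, whereas you justify it explicitly via Young's orthogonal form, the standard Gelfand--Tsetlin formulas and Schur's lemma; this is a sound refinement of the same argument rather than a different one.
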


\begin{proof}
By Schur--Weyl duality the permutation and collective actions are mutual
commutants and the space decomposes as in Eq.~\eqref{eq:SW_decomp}, with $P(\pi)$
acting only on the $\mathcal P_\lambda$ factors and $U^{\otimes N}$ only on the
$\mathcal Q^{(d)}_\lambda$ factors within each $\lambda$ block
(Sec.~\ref{subsec:schur_wey_decomp}). The Schur basis is by definition adapted to
this decomposition, with $\sigma$ ranging over the Young--Yamanouchi basis of
$\mathcal P_\lambda$ and $\mu$ over the Gelfand--Tsetlin basis of
$\mathcal Q^{(d)}_\lambda$. Writing the two actions in this basis gives the stated
block forms. Each is the reduction of a group representation into its irreducible
components, which is the defining property of a generalized Fourier transform; the
subgroup-adapted Gelfand--Tsetlin and Young--Yamanouchi bases additionally fix a
canonical labelling of basis vectors within each isotypic block, as in the
subgroup-adapted (fast) Fourier transforms over finite and compact groups.
\end{proof}

\begin{definition}[Weight map and weight multiplicity]
\label{def:weight_map}
Fix $\lambda\vdash N$ with $\ell(\lambda)\le d$. Let $\mathrm{wt}$ denote the
weight map sending a Gelfand--Tsetlin pattern $\mu$ of top row $\lambda$ to its
occupation vector $n(\mu)=(n_1,\ldots,n_d)$, $\sum_p n_p=N$, equivalently the
content of the associated semistandard tableau
(Sec.~\ref{subsec:schur_wey_decomp}). For a weight $n$, the number of patterns of
shape $\lambda$ and weight $n$ is the Kostka number
$K_{\lambda,n}:=\bigl|\mathrm{wt}^{-1}(n)\bigr|$, and
\begin{equation}
\dim \mathcal Q^{(d)}_\lambda=\sum_n K_{\lambda,n}.
\label{eq:dim_as_sum_kostka}
\end{equation}
The sector $\lambda$ is \emph{weight-multiplicity-free} if $K_{\lambda,n}\le 1$
for every $n$, i.e.\ if $\mathrm{wt}$ is injective. We write
$\Omega^{\mathrm{wt}}_\lambda:=\{n:K_{\lambda,n}\ge 1\}$ for the set of weights.
\end{definition}

\begin{lemma}[Weight-multiplicity-free classification: bosons and fermions]
\label{lem:bf_mult_free}
Assume $\ell(\lambda)\le d$.
\begin{enumerate}
\item For every $d$, the bosonic sector $\lambda=(N)$ is weight-multiplicity-free;
and for every $d\ge N$, the fermionic sector $\lambda=(1^N)$ is
weight-multiplicity-free. In both cases the weight map $\mathrm{wt}$ is a
bijection onto the admissible occupation vectors,
\begin{equation}
\Omega_{(N)}=\Bigl\{n\in\mathbb Z_{\ge0}^d:\textstyle\sum_p n_p=N\Bigr\},
\qquad
\Omega_{(1^N)}=\Bigl\{n\in\{0,1\}^d:\textstyle\sum_p n_p=N\Bigr\}.
\label{eq:bf_weight_sets}
\end{equation}
\item Conversely, if $d\ge N$ then bosons and fermions are the \emph{only}
weight-multiplicity-free sectors: a partition $\lambda\vdash N$ is
weight-multiplicity-free if and only if $\lambda=(N)$ or $\lambda=(1^N)$.
\end{enumerate}
\end{lemma}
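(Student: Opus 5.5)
We work throughout with semistandard tableaux (SSYT), using the bijection between GT patterns and SSYT from Sec.~\ref{subsubsec:mu_label}. Under this bijection, $K_{\lambda,n}$ is the number of SSYT of shape $\lambda$ with content $n$. The proof then reduces to elementary counting of fillings.

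For part 1, take $\lambda=(N)$ first. An SSYT of one-row shape is a weakly increasing word in $\{1,\dots,d\}$, so it is determined by its content: write $n_1$ ones, then $n_2$ twos, and so on. Hence $K_{(N),n}=1$ for every $n\in\mathbb Z_{\ge0}^d$ with $\sum_p n_p=N$, and $K_{(N),n}=0$ otherwise. This gives the bijection onto $\Omega_{(N)}$ for all $d$.

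Next take $\lambda=(1^N)$ with $d\ge N$, so that $\ell(\lambda)\le d$. A single column must be strictly increasing, so the entries are distinct. The content is therefore a $0/1$ vector with exactly $N$ ones, and the filling is forced by listing the chosen symbols in increasing order. Hence $K_{(1^N),n}=1$ exactly on $\Omega_{(1^N)}$, and $K_{(1^N),n}=0$ elsewhere.

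For part 2, the "if" direction is part 1. For "only if", suppose $\lambda\vdash N$ is neither a single row nor a single column. Then $\lambda$ has at least two rows and $\lambda_1\ge2$, so the cells $(1,1)$, $(1,2)$ and $(2,1)$ all lie in $\lambda$. We exhibit a weight $n$ with $K_{\lambda,n}\ge2$.

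Since $d\ge N$, we may use $N$ distinct symbols. Take $n=(1,1,\dots,1,0,\dots,0)$ with $N$ ones. The SSYT with this content are exactly the standard Young tableaux of shape $\lambda$. It therefore suffices to produce two distinct SYT of $\lambda$.

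Start from the row-reading SYT $T$, which fills row 1 with $1,\dots,\lambda_1$, then row 2, and so on. Then $T(1,2)=2$ and $T(2,1)=\lambda_1+1$. Let $T'$ be the column-reading SYT, which fills columns left to right, each top to bottom. Then $T'(2,1)=2$. These differ because $\lambda_1+1\ge3$. Both are standard by construction, since in each reading order every entry exceeds the entries to its left and above it. Thus $K_{\lambda,(1^N,0^{d-N})}=f^\lambda\ge2$, so $\mathrm{wt}$ is not injective.

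The only delicate point is the role of the hypothesis $d\ge N$. It guarantees that the weight $(1^N,0^{d-N})$ exists, and that the fermionic shape $(1^N)$ fits within $d$ rows. For $d<N$ the converse genuinely fails: for example, when $d=2$ every two-row shape is weight-multiplicity-free. I would add a remark to that effect, but the argument above needs no further ingredients.
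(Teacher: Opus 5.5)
Your proof is correct and follows essentially the same route as the paper: part (i) uses the same forced-filling argument for one-row and one-column shapes, and part (ii) specializes to the content $(1^N,0^{d-N})$, whose semistandard fillings are exactly the standard Young tableaux, so that $K_{\lambda,(1^N)}=f^\lambda$. The differences are minor: you prove $f^\lambda\ge 2$ directly with the row-reading and column-reading tableaux (which differ in cell $(2,1)$) where the paper cites it as classical, and you rightly skip the paper's Weyl-group invariance step, which is unnecessary because $(1^N,0^{d-N})$ is itself an admissible weight.
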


\begin{proof}
\emph{(i)} A semistandard tableau of shape $(N)$ is a single weakly increasing row
of length $N$; a prescribed content $n$ determines it uniquely (the symbol $p$
occupies $n_p$ consecutive cells, in increasing order of $p$), so $K_{(N),n}=1$
for every weak composition $n$ of $N$ into $d$ parts. A semistandard tableau of
shape $(1^N)$ is a single strictly increasing column of length $N$; its content is
the indicator vector of the chosen $N$-element subset of $\{1,\ldots,d\}$, and the
filling is then forced, so $K_{(1^N),n}=1$ exactly for $n\in\{0,1\}^d$ with
$\sum_p n_p=N$ and $0$ otherwise. Injectivity of $\mathrm{wt}$ onto the stated sets
follows.

\emph{(ii)} Weight multiplicities are invariant under permutations of the content,
since the Weyl group $S_d$ permutes the weights of $\mathcal Q^{(d)}_\lambda$;
hence $K_{\lambda,n}$ depends only on the multiset of parts of $n$, and in
particular
\[
\max_n K_{\lambda,n}\ \ge\ K_{\lambda,(1^N)}
\qquad\text{whenever the content }(1^N)\text{ is admissible, i.e.\ }d\ge N.
\]
A semistandard tableau of shape $\lambda$ and content $(1^N)$ has all entries
distinct, so weak increase along rows coincides with strict increase, and such
fillings are exactly the standard Young tableaux of shape $\lambda$. Therefore
\begin{equation}
K_{\lambda,(1^N)}\ =\ f^\lambda\ =\ \dim\mathcal P_\lambda,
\label{eq:kostka_equals_flambda}
\end{equation}
the number $f^\lambda$ of standard Young tableaux of shape $\lambda$, which is also
the dimension of the Specht module $\mathcal P_\lambda$. It is classical that
$f^\lambda=1$ precisely when $\lambda$ is a single row $(N)$ or a single column
$(1^N)$, and $f^\lambda\ge 2$ for every other shape. Hence for $d\ge N$ a sector
$\lambda\neq(N),(1^N)$ has $K_{\lambda,(1^N)}=f^\lambda\ge 2$ and is not
weight-multiplicity-free, while $(N)$ and $(1^N)$ are weight-multiplicity-free by
part (i). This proves the stated equivalence.
\end{proof}

\begin{boxedtheorem}[Second quantization is the Fourier basis: bosonic and fermionic sectors]
\label{thm:2Q_fourier_bf}
Let $\lambda=(N)$ (bosons) or $\lambda=(1^N)$ with $d\ge N$ (fermions). In each
case the symmetric-group multiplicity space $\mathcal P_\lambda$ is
one-dimensional, so the label $\sigma$ takes a unique value, and by
Lemma~\ref{lem:bf_mult_free}(i) the weight map $\mathrm{wt}:\mu\mapsto n(\mu)$ is a
bijection with inverse $n\mapsto\mu(n)$. Then, for any first-quantized state in
the sector, written in the Schur (Fourier) basis as
\begin{equation}
\ket{\psi_{\mathrm{1Q}}}=\sum_{\mu}\alpha_{\mu}\,\ket{\lambda,\mu,\sigma}
\ \in\ \mathcal H_{\lambda,\sigma},
\end{equation}
the basis correspondence of Eq.~\eqref{eq:canonical_basis_map} acts as the
permutation $\ket{\lambda,\mu,\sigma}\leftrightarrow\ket{n(\mu)}$ of orthonormal
basis vectors and yields the second-quantized form
\begin{equation}
\ket{\psi_{\mathrm{2Q}}}=\sum_{\mu}\alpha_{\mu}\,\ket{n(\mu)}
=\sum_{n\in\Omega_\lambda} c_n\,\ket{n},
\qquad
c_n=\alpha_{\mu(n)} .
\label{eq:bf_coeff_identity}
\end{equation}
Equivalently, the occupation amplitudes $\{c_n\}$ are exactly the coordinates of
$\ket{\psi_{\mathrm{1Q}}}$ in the group-Fourier (Schur) basis of
Proposition~\ref{prop:schur_is_gft}, re-indexed by occupation vector. Hence, for
bosons and fermions, the second-quantized representation is precisely the
first-quantized state expressed in that Fourier basis: all nontrivial unitary
content of the conversion resides in the Schur basis change $U_{\mathrm{Sch}}$,
the arithmetic stage $U_{\mathrm{JS}}$ (Sec.~\ref{subsec:JS_unitary_from_GT})
being a permutation of basis states, and no canonical promise is required.
\end{boxedtheorem}

\begin{proof}
For $\lambda=(N)$ the module $\mathcal P_{(N)}$ is the trivial representation and
for $\lambda=(1^N)$ it is the sign representation; both are one-dimensional
(Sec.~\ref{subsec:schur_wey_decomp}), so $\sigma$ is unique and the
canonical-$\sigma$ promise is vacuous. By Lemma~\ref{lem:bf_mult_free}(i),
$\mathrm{wt}$ is a bijection, so each weight space of $\mathcal Q^{(d)}_\lambda$ is
one-dimensional and the canonical Gelfand--Tsetlin choice satisfies
$\mu_{\mathrm{can}}(n,\lambda)=\mu(n)$ automatically. Equation
\eqref{eq:canonical_basis_map} therefore reads
$\ket{n}\leftrightarrow\ket{\lambda,\mu(n),\sigma}$, a bijection between
orthonormal bases and hence a unitary permutation; its realization
$U_{\mathrm{JS}}$ maps basis states to basis states by reversible arithmetic and
thus introduces no superposition or relative phase among distinct branches.
Substituting into $\ket{\psi_{\mathrm{1Q}}}=\sum_\mu\alpha_\mu\ket{\lambda,\mu,\sigma}$
and relabelling by weight gives Eq.~\eqref{eq:bf_coeff_identity} with
$c_n=\alpha_{\mu(n)}$. Equivariance of the underlying identification is
Theorem~\ref{thm:equivariant_bijection}.
\end{proof}

\begin{boxedtheorem}[General sectors require the canonical promise]
\label{thm:2Q_fourier_general}
Let $\lambda\vdash N$ admit a weight $n_\star$ with $K_{\lambda,n_\star}\ge 2$;
this occurs for generic parastatistics sectors, the smallest instance being
$\lambda=(2,1)$ at $d\ge3$, for which $K_{(2,1),(1,1,1)}=2$. Then:
\begin{enumerate}
\item the weight map $\mathrm{wt}$ is not injective, the occupation vector is a
strictly coarser label than the Gelfand--Tsetlin pattern, and by
Eq.~\eqref{eq:dim_as_sum_kostka}
\begin{equation}
\bigl|\Omega^{\mathrm{wt}}_\lambda\bigr|
\;<\;\sum_n K_{\lambda,n}=\dim\mathcal Q^{(d)}_\lambda;
\end{equation}
\item consequently no re-indexing of basis vectors identifies the full sector
$\mathcal Q^{(d)}_\lambda$ with the occupation-indexed space
$\mathrm{span}\{\ket{n}\}_{n\in\Omega^{\mathrm{wt}}_\lambda}$, the latter having
strictly smaller dimension;
\item fixing a canonical section $n\mapsto\mu_{\mathrm{can}}(n,\lambda)$ (one
pattern per nonempty weight space) and restricting to the canonical subspace
\begin{equation}
\mathcal H^{\mathrm{can}}_{\lambda,\sigma}
:=\mathrm{span}\bigl\{\ket{\lambda,\mu_{\mathrm{can}}(n,\lambda),\sigma}
:n\in\Omega^{\mathrm{wt}}_\lambda\bigr\},
\end{equation}
the restriction $\mathrm{wt}|_{\mathcal H^{\mathrm{can}}_{\lambda,\sigma}}$ is a
bijection, and the conclusion of Theorem~\ref{thm:2Q_fourier_bf} holds verbatim on
$\mathcal H^{\mathrm{can}}_{\lambda,\sigma}$: with
$\ket{\psi_{\mathrm{1Q}}}=\sum_n c_n\ket{\lambda,\mu_{\mathrm{can}}(n,\lambda),\sigma}$
one obtains $\ket{\psi_{\mathrm{2Q}}}=\sum_n c_n\ket{n}$.
\end{enumerate}
Thus the identification of second quantization with the group-Fourier basis of
first quantization holds unconditionally for bosons and fermions, and for all
remaining sectors precisely under the canonical Gelfand--Tsetlin promise
(together with a fixed $\sigma_{\mathrm{can}}$).
\end{boxedtheorem}

\begin{proof}
For $\lambda=(2,1)$ and $d\ge3$ the two semistandard tableaux of content
$(1,1,1)$,
\[
\ytableaushort{12,3}
\qquad\text{and}\qquad
\ytableaushort{13,2},
\]
both satisfy the semistandard conditions, while $\ytableaushort{23,1}$ violates
column strictness; hence $K_{(2,1),(1,1,1)}=2$, establishing the hypothesis.
Whenever some $K_{\lambda,n_\star}\ge2$, the fibre $\mathrm{wt}^{-1}(n_\star)$
contains at least two distinct patterns, so $\mathrm{wt}$ is not injective and
(i) follows from Eq.~\eqref{eq:dim_as_sum_kostka}. For (ii), a re-indexing is a
bijection of basis sets and therefore preserves cardinality; since
$\dim\mathrm{span}\{\ket n\}=|\Omega^{\mathrm{wt}}_\lambda|<\dim\mathcal Q^{(d)}_\lambda$,
no such bijection exists. For (iii), the section
$n\mapsto\mu_{\mathrm{can}}(n,\lambda)$ selects exactly one pattern in each
nonempty weight space, so $\mathrm{wt}$ restricted to
$\{\mu_{\mathrm{can}}(n,\lambda)\}$ is a bijection onto
$\Omega^{\mathrm{wt}}_\lambda$; the argument of Theorem~\ref{thm:2Q_fourier_bf}
then applies with $\mu(n)$ replaced by $\mu_{\mathrm{can}}(n,\lambda)$, using the
basis correspondence of Eq.~\eqref{eq:canonical_basis_map}. Equivariance is again
Theorem~\ref{thm:equivariant_bijection}.
\end{proof}

\begin{remark}
\label{rem:fourier_upshot}
Conceptually, the first-quantized computational (particle-configuration) basis
plays the role of the pre-Fourier basis, the Schur basis change $U_{\mathrm{Sch}}$
is the group-Fourier transform of the $(S_N,U(d))$ pair, and the occupation
vector is the weight, i.e.\ the joint eigenvalue of the commuting number
operators $\{\hat n_p=J_{pp}\}$ extracted from the Gelfand--Tsetlin label. The
full pattern $\mu$ refines this weight whenever a weight space is degenerate;
second quantization retains only the weight. Two features make bosons and fermions
special, and by Eq.~\eqref{eq:kostka_equals_flambda} both are governed by the
single condition $f^\lambda=1$. First, for every $d$ the Specht module
$\mathcal P_\lambda$ is one-dimensional exactly for $\lambda=(N)$ and
$\lambda=(1^N)$, so the $S_N$ label $\sigma$ is then redundant and no fixed
$\sigma_{\mathrm{can}}$ promise is needed. Second, in the regime $d\ge N$ these are
also exactly the weight-multiplicity-free sectors
[Lemma~\ref{lem:bf_mult_free}(ii)], so the weight is already a complete Fourier
label and the conversion is a lossless re-indexing. For $d<N$ the
multiplicity-free property is no longer exclusive to bosons and fermions---for
instance $\lambda=(2,1)$ at $d=2$ and $\lambda=(2,2)$ at $d=3$ are
weight-multiplicity-free---but such sectors still carry a higher-dimensional
$\mathcal P_\lambda$ and therefore still require the fixed-$\sigma_{\mathrm{can}}$
promise; the genuinely universal characterization of Bose and Fermi statistics is
thus the one-dimensionality of $\mathcal P_\lambda$. In every degenerate-weight
sector the weight is a genuine coarsening, and the canonical Gelfand--Tsetlin
promise restores a one-to-one Fourier labelling by selecting a single
representative per weight space.
\end{remark}

\section{Unitary transformation from first to second quantization}
\label{sec:qqt}

\begin{figure}
    \centering
    \includegraphics[width=0.8\linewidth]{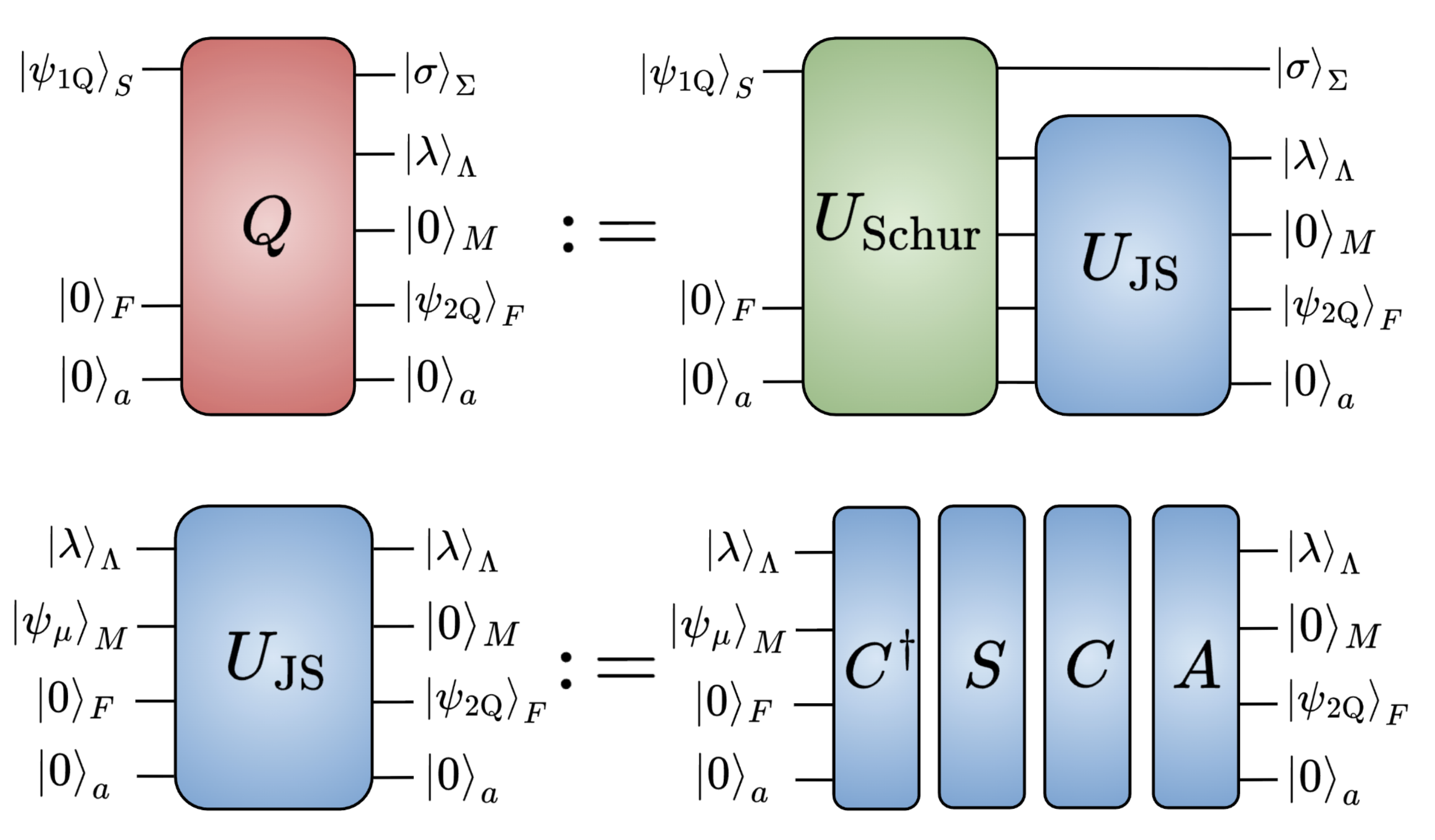}
    \caption{Quantum circuit describing the unitary transformation
    $Q := U_{\mathrm{JS}}U_{\mathrm{Schur}}$ between an input first-quantized wavefunction
    on the register $S$, initially in the state $|\psi_{\mathrm{1Q}}\rangle_S$, and an
    output second-quantized wavefunction on the register $F$, in the state
    $|\psi_{\mathrm{2Q}}\rangle_F$. The transformation is defined on the promised subspace
    in which, for each occupation vector, only the canonical Gelfand--Tsetlin label is
    populated. The Jordan--Schwinger unitary
    $U_{\mathrm{JS}} = C^\dagger S C A$ is described in
    Sec.~\ref{subsec:JS_unitary_from_GT}, where explicit reversible arithmetic
    implementations of the components $C$, $A$, and $S$ are given.}
    \label{fig:Q_def}
\end{figure}

We now describe a coherent, fully unitary procedure that converts a first-quantized pure state in a fixed Schur--Weyl symmetry sector into its second-quantized representation. Concretely, fix a particle number $N$ and a single-particle dimension $d$, and consider the first-quantized Hilbert space $\mathcal{H}_{\mathrm{1Q}}$ (Sec.~\ref{subsec:schur_wey_decomp}), decomposed as in Eq.~\eqref{eq:SW_decomp}. Restricting to a single particle-statistics sector means that the input state lies in a single Young-diagram sector $\lambda$.

As discussed in Sec.~\ref{subsec:number_to_schur_math}, for general $\lambda$ the occupation vector determines only the $\mathfrak{u}(d)$ weight, and distinct GT labels may belong to the same weight space. In order to define a basis-level unitary conversion directly between first- and second-quantized wavefunctions without introducing an explicit degeneracy label on the second-quantized side, we impose the following promise: within each nonempty weight space of the Schur realization, only the distinguished canonical GT label $\mu_{\mathrm{can}}(n,\lambda)$ is populated. Equivalently, the input state is assumed to have support only on Schur basis states of the form $|\lambda,\mu_{\mathrm{can}}(n,\lambda),\sigma\rangle$.

A second promise is required for the $S_N$ multiplicity label. We assume that the input state is supported on a single fixed Young--Yamanouchi label, denoted $\sigma_{\mathrm{can}}$. 

In the ordinary bosonic and fermionic sectors these choices are automatic, since there is only one admissible Young--Yamanouchi label and no weight space degeneracy (Theorem~\ref{thm:2Q_fourier_bf}). In more general parastatistics sectors, any valid fixed choice may be made, provided the input support is restricted to that choice throughout. 

Under these two promises, the input first-quantized state may be written as
\begin{equation}
|\psi_{\mathrm{1Q}}\rangle
=
\sum_{n}
c_{n}\,
|\lambda,\mu_{\mathrm{can}}(n,\lambda),\sigma_{\mathrm{can}}\rangle,
\qquad
\text{with fixed }\lambda \text{ and fixed }\sigma_{\mathrm{can}},
\label{eq:psi_fixed_lambda_canonical}
\end{equation}
where $n$ ranges over the admissible occupation vectors in the irrep of highest weight $\lambda$.

Our goal is to implement a unitary $Q$ able to convert general first quantized wavefunctions for identical particles to their second-quantized form. At the level of basis states, the unitary $Q$ is defined by
\begin{equation}
Q:\ |\lambda,\mu_{\mathrm{can}}(n,\lambda),\sigma_{\mathrm{can}}\rangle_S\,|0 \rangle_F |0\rangle_a
\longmapsto
|\lambda\rangle_{\Lambda}\,|n\rangle_F\,|\sigma_{\mathrm{can}}\rangle_{\Sigma} |0 \rangle_M |0 \rangle_a.
\label{eq:QQT_basis_action_canonical}
\end{equation}
Here $|0\rangle_a$ is an ancillary register used during the computation, but uncomputed at the end, and the remaining subscripts label registers in accordance with Fig.~\ref{fig:Q_def} and Algorithm~\ref{alg:QQT_highlevel2}. Because the map $n\mapsto \mu_{\mathrm{can}}(n,\lambda)$ is, by construction, a single-valued choice of one Schur basis state in each nonempty weight space, and because $\sigma_{\mathrm{can}}$ is held fixed, the above assignment is bijective on the promised subspace and therefore extends linearly to a unitary transformation.

Thus, on an arbitrary superposition satisfying the promises, the output takes the form
\begin{equation}
Q\,|\psi_{\mathrm{1Q}}\rangle_S\,|0 \rangle_F |0\rangle_a
=
|\lambda\rangle_{\Lambda}
\left(
\sum_{n}
c_n\,|n\rangle_F
\right)
|\sigma_{\mathrm{can}}\rangle_{\Sigma}|0 \rangle_M|0 \rangle_a
=
|\lambda\rangle_{\Lambda}\,|\psi_{\mathrm{2Q}}\rangle_F\,|\sigma_{\mathrm{can}}\rangle_{\Sigma}|0 \rangle_M |0 \rangle_a  ,
\label{eq:QQT_target_number_state_canonical}
\end{equation}
where $|\psi_{\mathrm{2Q}}\rangle_F := \sum_n c_n\,|n\rangle_F$ with inverse

\begin{equation}
Q^\dagger
\left(
|\lambda\rangle_{\Lambda}\,
|\psi_{\mathrm{2Q}}\rangle_F\,
|\sigma_{\mathrm{can}}\rangle_{\Sigma}|0\rangle_M |0 \rangle_a
\right)
=
\left[ \sum_n \widetilde c_n\,
|\lambda,\mu_{\mathrm{can}}(n,\lambda),\sigma_{\mathrm{can}}\rangle_S\right]|0\rangle_F |0 \rangle_a
\label{eq:psi_schur_expansion_again}
\end{equation}
and $|\psi_{\text{1Q}} \rangle _S := \sum_n \widetilde c_n\,
|\lambda,\mu_{\mathrm{can}}(n,\lambda),\sigma_{\mathrm{can}}\rangle_S$. At a high level, our construction of $Q$ is a composition of two unitary steps,
summarized in Algorithm~\ref{alg:QQT_highlevel2}. First, we apply a strong quantum Schur transform $U_{\mathrm{Schur}}$, which maps the computational basis of $\mathcal{H}_{\mathrm{1Q}}$ to the Schur basis and coherently writes the labels $(\lambda,\mu,\sigma)$ into dedicated registers. On an input state supported on a fixed $\lambda$, a fixed $\sigma_{\mathrm{can}}$, and satisfying the canonical-GT-label promise, this step extracts the superposition of canonical GT labels $\mu_{\mathrm{can}}(n,\lambda)$ together with the fixed Young--Yamanouchi label $\sigma_{\mathrm{can}}$.

Second, we apply a reversible arithmetic unitary $U_{\mathrm{JS}}$ that computes the occupation vector $n$ directly from the GT data. The key structural point is that the Jordan--Schwinger realization identifies the Cartan generators with number-operator differences, Eq.~\eqref{eq:Cartan_equals_numberdiff}, while the GT label encodes the same Cartan eigenvalues through the row-sum identities derived below. Because the input is promised to lie on the canonical branch, the reversible map from GT data to occupation data may be restricted to the canonical set $\mu=\mu_{\mathrm{can}}(n,\lambda)$, on which it is one-to-one. This allows us to bypass any explicit multiplicity label.

Putting these ingredients together yields a unitary transform that has the form
\begin{equation}
Q
=
U_{\mathrm{JS}}\,U_{\mathrm{Schur}}.
\label{eq:Q_factorization_again}
\end{equation}
The remainder of this section provides the implementation of each component.

\begin{algorithm}[H]
\caption{Unitary transformation from first to second quantized quantum states (high level)}
\label{alg:QQT_highlevel2}
\begin{algorithmic}[1]
\Require Particle number $N$, single-particle dimension $d$
\Require Input $\ket{\psi_{\mathrm{1Q}}}\in(\mathbb{C}^d)^{\otimes N}$ supported on a fixed $\lambda$ and canonical choices of $\sigma = \sigma_{\text{can}}$ and $\mu = \mu_{\text{can}}$
\State \textbf{Registers / initialization}
\State $S$: first-quantized system register, initialized to $\ket{\psi_{\mathrm{1Q}}}$
\State $a$: ancillary register
\State $F$: occupation register, initialized to $\ket{0}$ 
\Statex
\State \textbf{Step 1 --- Schur extraction.}
\State Apply $U_{\mathrm{Schur}}$ (Algorithm~\ref{alg:U_schur_placeholder}) on $(S,a)$ to coherently write $(\lambda,\mu,\sigma)$ into $(\Lambda,M,\Sigma)$.
\Statex
\State \textbf{Step 2 --- Occupations from GT (Jordan--Schwinger unitary).}
\State Apply $U_{\mathrm{JS}}$ (Algorithm~\ref{alg:U_JS_highlevel_clean}) on $(M,F, \Lambda, a)$ to compute the occupation vectors $n(\mu)=(n_1,\ldots,n_d)$ into $F$, then uncompute register $M$.
\Statex
\Ensure $F$ holds the coherent number-state superposition corresponding to the input in $S$.
\Ensure $\Lambda$ holds the Young diagram shape
\Ensure $\Sigma$ holds a single canonical Young--Yamanouchi label
\Ensure $a$ is uncomputed and holds $|0 \rangle$ 
\end{algorithmic}
\end{algorithm}

\subsection{Quantum Schur transform \label{subsec:QST}}

The central primitive in Step~1 of Algorithm~\ref{alg:QQT_highlevel2} is a \emph{forward quantum Schur transform}, namely a unitary change of basis on the first-quantized register $S\simeq(\mathbb{C}^d)^{\otimes N}$ that makes the Schur--Weyl decomposition of Eq.~\eqref{eq:SW_decomp} explicit at the level of computational registers. Recall that this decomposition resolves $(\mathbb{C}^d)^{\otimes N}$ into sectors $\mathcal{Q}^{(d)}_\lambda\otimes\mathcal{P}_\lambda$ with the adapted Schur basis $\{\ket{\lambda,\mu,\sigma}\}$ of Sec.~\ref{subsec:schur_wey_decomp}. Practically, the forward Schur transform is the unitary that maps an arbitrary superposition in the computational basis to the corresponding superposition in the Schur basis, coherently writing the labels into dedicated registers:
\begin{equation}
U_{\mathrm{Schur}}:\ \ket{\psi}_{\text{1Q}}\ket{0}_{a}
\ \longmapsto\
\left[ \sum_{\lambda,\mu,\sigma}
\alpha_{\lambda,\mu,\sigma}\,
\ket{\lambda}_{\Lambda}\ket{\mu}_{M}\ket{\sigma}_{\Sigma} \right]\ket{0}_a
\label{eq:U_schur_map_registers}
\end{equation}
With our input promises for $\ket{\psi_{\text{1Q}}}$ ($\mu = \mu_{\text{can}}, \sigma=\sigma_{\text{can}}$ and a fixed single Young diagram $\lambda$), the action of Eq. \eqref{eq:U_schur_map_registers} specializes as
\begin{equation}
U_{\mathrm{Schur}}:\ \ket{\psi}_{\text{1Q}}\ket{0}_{a}
\ \longmapsto\
\ket{\lambda}_{\Lambda}\left[ \sum_{\mu_{\text{can}}}
\alpha_{\mu_{\text{can}}}\,
\ket{\mu_{\text{can}}}_{M}  \right]\ket{\sigma_{\text{can}}}_{\Sigma}\ket{0}_a := \ket{\lambda}_{\Lambda}|\psi_{\mu_{\text{can}}} \rangle_M \ket{\sigma_{\text{can}}}_{\Sigma}\ket{0}_a 
\label{eq:U_schur_map_registers_input_promise}
\end{equation}
With $|\psi_{\mu_{\text{can}}} \rangle_M := \sum_{\mu_{\text{can}}} \alpha_{\mu_{\text{can}}} \ket{\mu_{\text{can}}}_{M} $. A key requirement for our unitary transformation is that $U_{\mathrm{Schur}}$ be a \emph{strong} Schur transform rather than a weak ``Schur sampling'' routine. Following the standard terminology, weak Schur sampling means producing only the irrep label $\lambda$ with the correct distribution when measuring, whereas strong Schur sampling means producing the full label pair $(\lambda,i)$ for a basis index $i$ inside the $\lambda$ block, which in the Schur--Weyl setting corresponds to producing $(\lambda,\mu,\sigma)$ coherently \cite{Krovi2019Schur}. We require strong access because the subsequent arithmetic map requires a coherent $\mu$ register, not merely an outcome distribution over $\lambda$.

There are two strong Schur transforms in the literature that are particularly well-suited to our setting. The first is the original construction of Bacon, Chuang, and Harrow (BCH), which gives an explicit quantum circuit for the Schur transform on $(\mathbb{C}^d)^{\otimes N}$ with gate complexity polynomial in $N$, $d$, and $\log(\epsilon^{-1})$ for approximation error $\epsilon$ \cite{Bacon2006Schur}. Conceptually, the BCH approach builds the Schur basis by iteratively coupling one $d$-level system at a time and performing the corresponding reduced Clebsch--Gordan transforms. The second is the high-dimensional approach initiated by Krovi, which targets regimes where $d$ is large and achieves complexity polynomial in $N$, $\log d$, and $\log(\epsilon^{-1})$ \cite{Krovi2019Schur}. This alternative line was recently revisited and strengthened: a corrected version of Krovi's algorithm (hereafter referred to as the Krovi-Burchardt algorithm [KB]) and a detailed high-dimensional treatment of the BCH transform were provided by Burchardt \emph{et al.}, who also make explicit the resulting high-dimensional scalings, with Krovi's approach scaling as $\mathcal{O}(N^4)$ and the BCH approach scaling as $\mathcal{O}(\min(N^5,\,N d^4))$ in the regime $N<d$ \cite{Burchardt2025Schur}. In practice, both transforms are valid options for Step~1, and our construction is compatible with either choice. The selection is primarily a resource-engineering decision driven by the $(N,d,\epsilon)$ regime and by the desired constant factors.

Finally, we note that strong Schur transforms are most naturally described as circuits acting on $d$-level qudits. Our first-quantized system register $S$ is typically implemented on qubit hardware using an encoding of each qudit into $\lceil\log_2 d\rceil$ qubits, as already discussed in Sec.~\ref{subsec:1q_resource_tradeoff}. Any fixed universal gate set for qudits can be compiled into qubit gates with polynomial overhead in $\log d$ and in the target synthesis precision, so a qudit-level Schur transform can be realized within a standard qubit model of computation \cite{NielsenChuang2010}.

\begin{algorithm}[H]
\caption{$U_{\mathrm{Schur}}$ --- Strong Schur transform for $(\mathbb{C}^d)^{\otimes N}$}
\label{alg:U_schur_placeholder}
\begin{algorithmic}[1]
\Require System register $S$ holding an arbitrary superposition in $(\mathbb{C}^d)^{\otimes N}$
\Require Ancillary register $a$
\Require Target approximation error $\epsilon$
\Require Qubit-level decomposition of native $d$-level operators, if qudits are unavailable
\State Choose one of the following two strong Schur transform implementations: \quad (i) Apply the BCH strong Schur transform circuit of Ref.~\cite{Bacon2006Schur}, using $\mathrm{poly}(N,d,\log(\epsilon^{-1}))$ gates. (ii) Apply the high-dimensional strong Schur transform of Ref.~\cite{Krovi2019Schur} using the corrected and optimized construction in Ref.~\cite{Burchardt2025Schur}, achieving $\mathrm{poly}(N,\log d,\log(\epsilon^{-1}))$ scaling (notably $\tilde{O}(N^4)$ in the regime $N<d$).
\State Output: $\Lambda$ holds general superpositions over $\ket{\lambda}$
\State $M$ holds general superpositions over $\ket{\mu}$
\State $\Sigma$ holds general superpositions over $\ket{\sigma}$
\State $a$ is uncomputed to $|0 \rangle$
\end{algorithmic}
\end{algorithm}

\subsection{Jordan--Schwinger arithmetic unitary: canonical GT patterns to occupation states \label{subsec:JS_unitary_from_GT}}

Step~2 of Algorithm~\ref{alg:QQT_highlevel2} is a reversible arithmetic unitary that computes the occupation vector superposition corresponding to the weight data stored in a canonical GT-pattern superposition, followed by coherent uncomputation of the GT register $M$. We define the Jordan--Schwinger arithmetic unitary $U_{\mathrm{JS}}$ by
\begin{equation}
U_{\mathrm{JS}}:\ |\lambda \rangle_{\Lambda}\ket{\mu_{\mathrm{can}}}_{M}\ket{0}_{F}\ket{0}_{W}
\ \longmapsto\
|\lambda \rangle_{\Lambda}\ket{0}_{M}\ket{n(\mu_{\mathrm{can}})}_{F}\ket{0}_{W},
\label{eq:U_JS_new_def}
\end{equation}
where $W$ is an auxiliary work register (a subset of the ancillary register $a$). By linearity, for any coherent superposition supported on canonical GT labels,
\begin{equation}
\sum_{\mu_{\mathrm{can}}}\alpha_{\mu_{\mathrm{can}}}|\lambda \rangle_{\Lambda}\ket{\mu_{\mathrm{can}}}_{M}\ket{0}_{F}\ket{0}_{W}
\ \xmapsto{\,U_{\mathrm{JS}}\,}\
|\lambda \rangle_{\Lambda}\ket{0}_{M}\left[\sum_{\mu_{\mathrm{can}}}\alpha_{\mu_{\mathrm{can}}}\ket{n(\mu_{\mathrm{can}})}_{F}\right]\ket{0}_{W}.
\label{eq:UJS_linearity_clean}
\end{equation}
The $\Sigma$ register has been omitted here, since $U_{\text{JS}}$ does not target this register. It is clear from Eq.~\eqref{eq:UJS_linearity_clean} that the occupation register $F$ now carries the full second-quantized wavefunction $| \psi_{\text{2Q}} \rangle := \sum_{\mu_{\mathrm{can}}}\alpha_{\mu_{\mathrm{can}}}\ket{n(\mu_{\mathrm{can}})}_{F}$.

To achieve the map in Eq.~\eqref{eq:U_JS_new_def}, we decompose $U_{\mathrm{JS}}$ into four high-level reversible steps. Let $A$ denote the arithmetic map that computes the occupation vector from a canonical GT pattern,
\begin{equation}
A:\ \ket{\mu_{\mathrm{can}}}_{M}\ket{0}_{F}\ket{0}_{W}
\ \longmapsto\
\ket{\mu_{\mathrm{can}}}_{M}\ket{n(\mu_{\mathrm{can}})}_{F}\ket{0}_{W},
\label{eq:A_def_highlevel}
\end{equation}
and let $C$ denote a reversible reconstruction map that computes from the occupation vector the \emph{lower rows} (all rows but row $d$, which is fixed to $\lambda)$ of the canonical GT representative into the work register, using the pre-existing $\Lambda$ register as the top row. More explicitly, writing the full GT pattern as
\[
\mu_{\mathrm{can}} = \bigl(\lambda,\mu_{\mathrm{can},\downarrow}\bigr),
\]
where $\lambda$ is the fixed top row and $\mu_{\mathrm{can},\downarrow}$ denotes the remaining $d-1$ rows, we define
\begin{equation}
C:\ \ket{\lambda}_{\Lambda}\ket{\mu_{\mathrm{can}}}_{M}\ket{n}_{F}\ket{0}_{W}
\ \longmapsto\
\ket{\lambda}_{\Lambda}\ket{\mu_{\mathrm{can}}}_{M}\ket{n}_{F}\ket{\mu_{\mathrm{can},\downarrow}(n,\lambda)}_{W},
\label{eq:C_def_highlevel}
\end{equation}
where $\mu_{\mathrm{can},\downarrow}(n,\lambda)$ denotes the lower rows of the chosen canonical GT pattern associated with top row $\lambda$ and occupation vector $n$. On the promised subspace of canonical GT inputs, one has
\begin{equation}
\mu_{\mathrm{can},\downarrow}(n(\mu_{\mathrm{can}}),\lambda)=\mu_{\mathrm{can},\downarrow}.
\label{eq:C_consistency_highlevel}
\end{equation}
Thus $C$ does not rewrite the top GT row into the work register; rather, the top row is already available in $\Lambda$, and $C$ reconstructs only the part of the GT data not already fixed by the Schur transform output.

We also introduce a reversible subtraction unitary $S$ whose action is to subtract the joint content of the
work and $\Lambda$ register from the GT register,
\begin{equation}
\begin{aligned}
S:\ &|\lambda \rangle_\Lambda\ket{\mu_{\mathrm{can}}}_{M}\ket{n(\mu_{\mathrm{can}})}_{F}\ket{\mu_{\mathrm{can},\downarrow}(n,\lambda)}_{W}
\ \longmapsto\ 
\ket{\mu_{\mathrm{can}}-(\lambda,\mu_{\mathrm{can},\downarrow}(n,\lambda))}_{M}\ket{n(\mu_{\mathrm{can}})}_{F}\ket{\mu_{\mathrm{can},\downarrow}(n,\lambda)}_{W} \\
&= \ket{0}_{M}\ket{n(\mu_{\mathrm{can}})}_{F}\ket{\mu_{\mathrm{can},\downarrow}(n,\lambda)}_{W},
\end{aligned}
\label{eq:S_def_highlevel}
\end{equation}
where $\mu_{\mathrm{can}}=(\lambda,\mu_{\mathrm{can},\downarrow})$. 

On the promised canonical subspace, the subtraction step leaves the work register $W$ holding the reconstructed lower rows $\mu_{\mathrm{can},\downarrow}$ while the GT register $M$ has been uncomputed to $\ket{0}$. A final application of the inverse reconstruction unitary $C^\dagger$ then erases this residual data from $W$, returning the work register to $\ket{0}$ while leaving the occupation register unchanged:
\begin{equation}
C^\dagger:\ \ket{\lambda}_{\Lambda}\ket{0}_{M}\ket{n(\mu_{\mathrm{can}})}_{F}\ket{\mu_{\mathrm{can},\downarrow}}_{W}
\ \longmapsto\
\ket{\lambda}_{\Lambda}\ket{0}_{M}\ket{n(\mu_{\mathrm{can}})}_{F}\ket{0}_{W}.
\label{eq:Cdagger_def_highlevel}
\end{equation}
Thus the overall transformation computes the occupation vector, reconstructs the canonical lower GT rows coherently from that occupation data, uses them together with the top row in $\Lambda$ to uncompute the GT register, and finally uncomputes the temporary reconstruction workspace.

Using these ingredients, the action of $U_{\mathrm{JS}}$ is
\begin{equation}
U_{\mathrm{JS}}
=
C^{\dagger}
\,S\,
C
\,A,
\label{eq:UJS_factorization_highlevel}
\end{equation}
where, for notational compactness, each unitary is understood to act nontrivially only on the
registers indicated in Eqs.~\eqref{eq:A_def_highlevel}--\eqref{eq:S_def_highlevel}.

\begin{algorithm}[H]
\caption{$U_{\mathrm{JS}}$ --- High-level reversible conversion of canonical GT patterns to occupation states}
\label{alg:U_JS_highlevel_clean}
\begin{algorithmic}[1]
\Require Top-row register $\Lambda$ holding the fixed Young-diagram label $\ket{\lambda}$
\Require Canonical GT-pattern register $M$ holding a coherent superposition
\[
\sum_{\mu_{\mathrm{can}}}\alpha_{\mu_{\mathrm{can}}}\ket{\mu_{\mathrm{can}}}
\qquad\text{with}\qquad
\mu_{\mathrm{can}}=(\lambda,\mu_{\mathrm{can},\downarrow})
\]
\Require Occupation register $F$ initialized to $\ket{0}$
\Require Work register $W$ initialized to $\ket{0}$ and sized only to store the lower GT rows
\Ensure
\[
\sum_{\mu_{\mathrm{can}}}\alpha_{\mu_{\mathrm{can}}}\ket{\lambda}_{\Lambda}\ket{\mu_{\mathrm{can}}}_{M}\ket{0}_{F}\ket{0}_{W}
\mapsto
\ket{\lambda}_{\Lambda}\ket{0}_{M}\sum_{\mu_{\mathrm{can}}}\alpha_{\mu_{\mathrm{can}}}\ket{n(\mu_{\mathrm{can}})}_{F}\ket{0}_{W}
\]
\State Apply the arithmetic unitary $A$ to compute occupations:
\[
\ket{\mu_{\mathrm{can}}}_{M}\ket{0}_{F}\ket{0}_{W}
\mapsto
\ket{\mu_{\mathrm{can}}}_{M}\ket{n(\mu_{\mathrm{can}})}_{F}\ket{0}_{W}
\]
\State Apply the canonical reconstruction unitary $C$ to write the lower rows of the canonical GT pattern into $W$:
\[
\ket{\lambda}_{\Lambda}\ket{\mu_{\mathrm{can}}}_{M}\ket{n}_{F}\ket{0}_{W}
\mapsto
\ket{\lambda}_{\Lambda}\ket{\mu_{\mathrm{can}}}_{M}\ket{n}_{F}\ket{\mu_{\mathrm{can},\downarrow}(n,\lambda)}_{W}
\]
\State Apply the subtraction unitary $S$ to subtract the reconstructed full canonical GT pattern, specified jointly by $\Lambda$ and $W$, from the GT register $M$:
\[
\ket{\lambda}_{\Lambda}\ket{\mu_{\mathrm{can}}}_{M}\ket{n}_{F}\ket{\mu_{\mathrm{can},\downarrow}(n,\lambda)}_{W}
\mapsto
\ket{\lambda}_{\Lambda}\ket{\mu_{\mathrm{can}}-(\lambda,\mu_{\mathrm{can},\downarrow}(n,\lambda))}_{M}\ket{n}_{F}\ket{\mu_{\mathrm{can},\downarrow}(n,\lambda)}_{W}
\]
\State On the promised canonical subspace, use $\mu_{\mathrm{can},\downarrow}(n(\mu_{\mathrm{can}}),\lambda)=\mu_{\mathrm{can},\downarrow}$ to obtain
\[
\ket{\lambda}_{\Lambda}\ket{0}_{M}\ket{n}_{F}\ket{\mu_{\mathrm{can},\downarrow}}_{W}
\]
\State Apply $C^\dagger$ to uncompute the work register:
\[
\ket{\lambda}_{\Lambda}\ket{0}_{M}\ket{n}_{F}\ket{\mu_{\mathrm{can},\downarrow}}_{W}
\mapsto
\ket{\lambda}_{\Lambda}\ket{0}_{M}\ket{n}_{F}\ket{0}_{W}
\]
\end{algorithmic}
\end{algorithm}

We note in passing that the coherent uncomputation of the Gelfand–Tsetlin register in $U_{\mathrm{JS}}$ can alternatively be carried out by measurement-based uncomputation (Hadamard-basis measurement plus a Clifford feed-forward correction), realizing the same unitary on the promised subspace at unchanged asymptotic cost and offering only (potential) constant-factor fault-tolerant trade-offs. Because this stage of the computation is not expected to be a bottleneck as compared to the preceding quantum Schur transformation, we do not describe this alternate approach in detail.

The remainder of this subsection is devoted to the explicit implementation of the arithmetic map $A$
and the canonical reconstruction map $C$. The former computes occupation numbers from GT row-sum
differences, while the latter reconstructs the lower rows of the unique canonical GT representative
associated with a given occupation vector and top-row label $\lambda$ on the promised subspace.

\subsubsection{Implementation of the arithmetic map $A$: occupations from GT row-sum differences \label{subsubsec:A_from_GT}}

We now describe the implementation of the arithmetic subroutine $A$ appearing in
Eq.~\eqref{eq:A_def_highlevel}. This map coherently computes the occupation vector from a GT pattern
while leaving the input GT register unchanged; the same arithmetic in fact applies to a general (not
necessarily canonical) GT pattern $\mu$, since the row-sum rule below uses no canonicity promise.
The arithmetic rule implemented by $A$ is obtained by comparing two ways of encoding the same Cartan eigenvalues. On the occupation-number side, Eq.~\eqref{eq:Cartan_equals_numberdiff} shows that the Cartan generators act as number-operator differences, and hence on an occupation basis state $\ket{n}$ the associated Dynkin weights are $w_\ell(n)=n_\ell-n_{\ell+1}$, as stated in Eq.~\eqref{eq:weight_from_occup}. On the first-quantized side, the Schur basis vectors $\ket{\lambda,\mu,\sigma}$ are simultaneous eigenvectors of the commuting Cartan subalgebra, and the corresponding eigenvalues can be read off from the GT pattern $\mu$ by a row-sum identity. Writing the GT pattern as the triangular array of integers $\mu=\{\mu^{(r)}_i\}$ introduced in Eq.~\eqref{eq:GT_interlacing}, define the row sums
\begin{equation}
R_r(\mu)\;:=\;\sum_{i=1}^{r}\mu^{(r)}_i,
\qquad r=1,\ldots,d,
\qquad
R_0(\mu):=0.
\label{eq:A_row_sums_def}
\end{equation}
In these conventions, the Cartan eigenvalues in Dynkin coordinates are
\begin{equation}
w_\ell(\mu)
\;=\;
2\,R_\ell(\mu)\;-\;R_{\ell+1}(\mu)\;-\;R_{\ell-1}(\mu),
\qquad \ell=1,\ldots,d-1.
\label{eq:A_dynkin_from_rowsums}
\end{equation}
It is convenient to express Eq.~\eqref{eq:A_dynkin_from_rowsums} in terms of adjacent row-sum differences defined as
\begin{equation}
m_\ell(\mu)\;:=\;R_\ell(\mu)-R_{\ell-1}(\mu),
\qquad \ell=1,\ldots,d,
\label{eq:A_m_def}
\end{equation}
in which case Eq.~\eqref{eq:A_dynkin_from_rowsums} becomes
\begin{equation}
w_\ell(\mu)\;=\;m_\ell(\mu)-m_{\ell+1}(\mu),
\qquad \ell=1,\ldots,d-1.
\label{eq:A_w_as_mdiff}
\end{equation}
Since $w_\ell(\mu)$ and $w_\ell(n)$ describe the same Cartan eigenvalues for the same basis vector viewed in two realizations, we equate Eqs.~\eqref{eq:weight_from_occup} and \eqref{eq:A_w_as_mdiff} and obtain
\begin{equation}
n_\ell(\mu)-n_{\ell+1}(\mu)\;=\;m_\ell(\mu)-m_{\ell+1}(\mu),
\qquad \ell=1,\ldots,d-1.
\label{eq:A_diff_equality}
\end{equation}
This implies the existence of a constant $C(\mu)$ such that $n_\ell(\mu)=m_\ell(\mu)+C(\mu)$ for all $\ell=1,\ldots,d$. The constant is fixed by particle-number conservation. On the fixed-$N$ sector, $\sum_{\ell=1}^d n_\ell(\mu)=N$ by definition of the occupation vector. On the GT side,
$\sum_{\ell=1}^d m_\ell(\mu)=R_d(\mu)-R_0(\mu)=R_d(\mu)$ by Eq.~\eqref{eq:A_m_def}. The top GT row is determined by $\lambda\vdash N$, and its row sum satisfies $R_d(\mu)=\sum_{i=1}^d \mu^{(d)}_i=\sum_{i=1}^d \lambda_i = N$, so $\sum_{\ell=1}^d m_\ell(\mu)=N$ as well. Therefore
\begin{equation}
\sum_{\ell=1}^d n_\ell(\mu)\;=\;\sum_{\ell=1}^d m_\ell(\mu)+d\,C(\mu)
\quad\Rightarrow\quad
N\;=\;N+d\,C(\mu)
\quad\Rightarrow\quad
C(\mu)=0,
\end{equation}
and we arrive at the explicit identification
\begin{equation}
n_\ell(\mu)\;=\;m_\ell(\mu)\;=\;R_\ell(\mu)-R_{\ell-1}(\mu),
\qquad \ell=1,\ldots,d.
\label{eq:A_occupations_from_rowdiff}
\end{equation}
Equation~\eqref{eq:A_occupations_from_rowdiff} is the deterministic classical relation implemented coherently by $A$. Simply, the successive row-sum differences of the GT pattern are precisely the occupation numbers of the corresponding number state in the second-quantized formalism. This relationship is enabled by the $U(d)$-equivariant bijection established in Theorem~\ref{thm:equivariant_bijection}, which itself is underpinned by the Jordan--Schwinger Lie algebra homomorphism established in Section~\ref{sec:js_map}.

A straightforward reversible implementation of Eq.~\eqref{eq:A_occupations_from_rowdiff} proceeds by first computing all row sums $R_\ell(\mu)$, storing them, and then computing each occupation $n_\ell(\mu)$ in the register $F_\ell$ by subtracting adjacent row sums. While the exact cost of performing simple arithmetic operations depends on the choice of GT-pattern encoding, in the following we assume that $b_\mu$ denotes the bit-width used to store each GT entry in a binary encoding. Since all GT entries lie in $\{0,1,\ldots,N\}$, a natural choice is $b_\mu=\lceil\log_2(N+1)\rceil$. The row sums $R_\ell(\mu)$ and occupations $n_\ell(\mu)$ also lie in $\{0,1,\ldots,N\}$, so we take the same unsigned width $b_R=b_n=\lceil\log_2(N+1)\rceil$ for their registers.

Computing all row sums from scratch requires
\begin{equation}
A_{\mathrm{sum}}=\sum_{\ell=1}^{d}(\ell-1)=\frac{d(d-1)}{2}
\label{eq:A_Asum}
\end{equation}
in-place additions on $b_\mu$-bit words, while forming the occupations from adjacent differences requires $d$ in-place subtractions at width $b_n$. If each add/sub is implemented using a Cuccaro ripple-carry adder \cite{Cuccaro2004}, an in-place $b$-bit operation uses $2b+\mathcal{O}(1)$ Toffoli gates, $5b+\mathcal{O}(1)$ CNOT gates, and depth $2b+\mathcal{O}(1)$, with a single carry ancilla reused across the computation. Consequently, a direct implementation of $A$ has Toffoli count
\begin{equation}
N_{\mathrm{CCX}}
=
\bigl(2b_\mu+\mathcal{O}(1)\bigr)\,A_{\mathrm{sum}}
+\bigl(2b_n+\mathcal{O}(1)\bigr)\,d
=
\mathcal{O}\!\bigl(d^2\log N\bigr),
\label{eq:A_cost_scaling}
\end{equation}
and similarly the CNOT count and circuit depth scale as $\mathcal{O}(d^2\log N)$.

\begin{algorithm}[H]
\caption{$A$ --- Reversible computation of occupations from GT patterns via row-sum differences}
\label{alg:A_from_rowsums}
\begin{algorithmic}[1]
\Require GT-pattern register $M$ holding a coherent superposition $|\psi_{\mu} \rangle := \sum_{\mu}\alpha_\mu\ket{\mu}$
\Require Occupation register $F$ initialized to $\ket{0}$
\Require Workspace $W$ for row sums and one reusable carry ancilla
\Require Known $(N,d)$ and a choice of binary encodings with widths $b_\mu$ (GT entries) and $b_n$ (occupations)
\Ensure Coherent map $\sum_{\mu}\alpha_\mu\ket{\mu}_M\ket{0}_F\ket{0}_W \mapsto \sum_{\mu}\alpha_\mu\ket{\mu}_M\ket{n(\mu)}_F\ket{0}_W$
\State Allocate row-sum workspace for $R_0,\ldots,R_d$ (each $b_R$ bits), with $R_0:=0$
\Statex
\State Compute row sums: for $\ell=1$ to $d$, coherently compute
\[
R_\ell \leftarrow \sum_{i=1}^{\ell}\mu^{(\ell)}_i
\]
using $(\ell-1)$ in-place ripple-carry additions
\Statex
\State Write occupations: for $\ell=1$ to $d$, subtract $R_{\ell-1}$ so that
\[
F_\ell \leftarrow R_\ell-R_{\ell-1}=n_\ell(\mu)
\]
\Statex
\State Uncompute the row-sum workspace $W$ to $|0 \rangle$
\Statex
\Ensure $F$ holds $n(\mu)$ as in Eq.~\eqref{eq:A_occupations_from_rowdiff}, while $M$ remains unchanged.
\end{algorithmic}
\end{algorithm}

\subsubsection{Implementation of the canonical reconstruction map $C$: lower canonical GT rows from occupation vectors \label{subsubsec:C_from_occup}}

We now describe the implementation of the reversible canonical reconstruction subroutine $C$
defined in Eq.~\eqref{eq:C_def_highlevel}. Recall that $C$ takes an occupation vector $n$ together
with the pre-existing top-row label $\lambda$ stored in the $\Lambda$ register, and writes into the
work register the \emph{lower rows} $\mu_{\mathrm{can},\downarrow}(n,\lambda)$ of the canonical GT
pattern $\mu_{\mathrm{can}}=(\lambda,\mu_{\mathrm{can},\downarrow})$ associated with that occupation
vector, where $\mu_{\mathrm{can},\downarrow}(n,\lambda)$ denotes the lower rows of a chosen canonical
representative among all GT patterns with top row $\lambda$ and weight $n$; the top row is not
rewritten, since it is already available in $\Lambda$. On the promised canonical subspace, the
canonical rule must be consistent with the promised support [Eq.~\eqref{eq:C_consistency_highlevel}].
This consistency condition is essential: the same canonical rule used to define the allowed input GT
patterns must also be the rule used by $C$ to reconstruct the lower GT rows from the occupation
vector. Otherwise, the subtraction step in $U_{\mathrm{JS}}$ would fail to erase the nontrivial part
of the input GT register.

For clarity, the role of $C$ is therefore not to recover an arbitrary GT pattern from $n$, which
would in general be impossible when multiple GT patterns share the same occupation vector, but rather
to reconstruct a \emph{distinguished canonical representative} within the $K_{\lambda,n}$-dimensional
weight subspace for Kostka number $K_{\lambda, n}$.  We now specify a concrete canonical choice and then give a reversible arithmetic
implementation.

\paragraph{Canonical choice.}
Fix the top GT row by the partition $\lambda=(\lambda_1,\ldots,\lambda_d)\vdash N$, and write the
occupation vector as $n=(n_1,\ldots,n_d)$. Define cumulative occupation sums
\begin{equation}
N_r(n)\;:=\;\sum_{\ell=1}^{r} n_\ell,
\qquad r=1,\ldots,d,
\qquad
N_0(n):=0.
\label{eq:C_cumulative_occ}
\end{equation}
A GT pattern $\mu_{\mathrm{can}}=\{\mu_{\mathrm{can},i}^{(r)}\}$ has weight $n$ if and only if the row sums satisfy
\begin{equation}
R_r(\mu_{\mathrm{can}})=\sum_{i=1}^{r}\mu_{\mathrm{can},i}^{(r)}=N_r(n),
\qquad r=1,\ldots,d.
\label{eq:C_rowsum_constraint}
\end{equation}
Since the top row is already fixed to $\lambda$, reconstructing the full canonical GT pattern reduces
to reconstructing the lower rows
\[
\mu_{\mathrm{can}}^{(d-1)},\mu_{\mathrm{can}}^{(d-2)},\ldots,\mu_{\mathrm{can}}^{(1)}.
\]
For a fixed row above,
\[
u=(u_1,\ldots,u_{r+1}),
\]
any admissible row below,
\[
x=(x_1,\ldots,x_r),
\]
must satisfy the interlacing inequalities
\begin{equation}
u_i \;\ge\; x_i \;\ge\; u_{i+1},
\qquad i=1,\ldots,r,
\label{eq:C_interlacing_constraint}
\end{equation}
together with the row-sum condition
\begin{equation}
\sum_{i=1}^{r} x_i = N_r(n).
\label{eq:C_rowsum_target}
\end{equation}

Among all rows obeying Eqs.~\eqref{eq:C_interlacing_constraint} and \eqref{eq:C_rowsum_target}, we
choose the \emph{lexicographically largest feasible row}. Applying this rule recursively from row
$r=d-1$ down to $r=1$ defines a unique canonical GT pattern with top row $\lambda$, or equivalently
a unique lower-row string $\mu_{\mathrm{can},\downarrow}(n,\lambda)$. In other words, for each $r$ we choose the
largest possible value of $x_1$, then subject to that the largest possible value of $x_2$, and so
on, while preserving the existence of a feasible completion of the row. This canonical section is
deterministic and depends only on $(\lambda,n)$.

\paragraph{Greedy row-construction rule.}
Suppose row $u=(u_1,\ldots,u_{r+1})$ is already known and we wish to construct the row
$x=(x_1,\ldots,x_r)$ below it. For each position $i=1,\ldots,r$, define the local interlacing
bounds
\begin{equation}
L_i := u_{i+1},
\qquad
U_i := u_i.
\label{eq:C_local_bounds}
\end{equation}
If the remaining target sum at step $i$ is denoted by $T_i$, then feasibility of the unfinished
suffix requires
\begin{equation}
\sum_{k=i+1}^{r} L_k
\;\le\;
T_i - x_i
\;\le\;
\sum_{k=i+1}^{r} U_k.
\label{eq:C_suffix_feasibility}
\end{equation}
Equivalently, the admissible interval for $x_i$ is
\begin{equation}
\max\!\Bigl(L_i,\; T_i-\sum_{k=i+1}^{r}U_k\Bigr)
\;\le\;
x_i
\;\le\;
\min\!\Bigl(U_i,\; T_i-\sum_{k=i+1}^{r}L_k\Bigr).
\label{eq:C_admissible_interval}
\end{equation}
The lexicographically largest feasible choice is therefore obtained by taking the upper endpoint
\begin{equation}
x_i
=
\min\!\Bigl(U_i,\; T_i-\sum_{k=i+1}^{r}L_k\Bigr),
\label{eq:C_greedy_choice}
\end{equation}
and then updating the remaining target
\begin{equation}
T_{i+1} = T_i - x_i.
\label{eq:C_target_update}
\end{equation}
Starting from $T_1=N_r(n)$, this produces the unique lexicographically largest admissible row.
Iterating from $r=d-1$ down to $1$ yields the full canonical GT pattern $\mu_{\mathrm{can}}(n)$.

This classical rule is deterministic, uses only integer additions, subtractions, comparisons, and
controlled assignments, and is therefore well suited to reversible implementation.

\paragraph{Register layout and arithmetic widths.}
We assume that the GT-pattern work register $W$ stores the triangular array
\[
\mu_{\mathrm{can}}(n)=\{\mu_{\mathrm{can},i}^{(r)}\}_{1\le i\le r\le d}
\]
using the same binary encoding and word width $b_\mu=\lceil\log_2(N+1)\rceil$ as for the input
GT-pattern register. Since all intermediate row entries and suffix sums lie in $\{0,1,\ldots,N\}$,
it suffices to use the same width $b_\mu$ for all arithmetic words. We also use a small amount of
temporary workspace for the remaining target $T_i$, the suffix sums
\[
\sum_{k=i+1}^{r}L_k,
\qquad
\sum_{k=i+1}^{r}U_k,
\]
and comparator/carry ancillas. These ancillas are assumed to be reused throughout the computation
and returned to $\ket{0}$ at the end of $C$.

An overview of the arithmetic implemented by $C$ is presented in Algorithm~\ref{alg:C_from_occup}.

\begin{algorithm}[H]
\caption{$C$ --- Reversible reconstruction of the lower rows of the canonical GT pattern from an occupation vector}
\label{alg:C_from_occup}
\begin{algorithmic}[1]
\Require Top-row register $\Lambda$ holding $\ket{\lambda}$, where $\lambda=(\lambda_1,\ldots,\lambda_d)$
\Require Input GT register $M$ holding $\ket{\mu_{\mathrm{can}}}$
\Require Occupation register $F$ holding a coherent superposition $\sum_n \alpha_n \ket{n}$
\Require Work register $W$ initialized to $\ket{0}$ for storing the lower GT rows
\Ensure Coherent map
\[
\sum_n \alpha_n \ket{\lambda}_{\Lambda}\ket{\mu_{\mathrm{can}}}_{M}\ket{n}_{F}\ket{0}_{W}
\mapsto
\sum_n \alpha_n \ket{\lambda}_{\Lambda}\ket{\mu_{\mathrm{can}}}_{M}\ket{n}_{F}\ket{\mu_{\mathrm{can},\downarrow}(n,\lambda)}_{W}
\]
\State Compute cumulative occupation sums
\[
N_r(n)=\sum_{\ell=1}^{r}n_\ell,\qquad r=1,\ldots,d-1
\]
\For{$r=d-1$ down to $1$}
    \State Let $u=(u_1,\ldots,u_{r+1})$ be the row above:
    \Statex \hspace{\algorithmicindent} if $r=d-1$, take $u=\lambda$ from the $\Lambda$ register
    \Statex \hspace{\algorithmicindent} otherwise, take $u$ to be row $r+1$ already stored in $W$
    \State Set remaining target $T \leftarrow N_r(n)$
    \For{$i=1$ to $r$}
        \State Set $L_i \leftarrow u_{i+1}$ and $U_i \leftarrow u_i$
        \State Compute the suffix lower sum $S_L \leftarrow \sum_{k=i+1}^{r} L_k$
        \State Compute the suffix upper sum $S_U \leftarrow \sum_{k=i+1}^{r} U_k$
        \State Compute the admissible interval
        \[
        \max(L_i,T-S_U)\ \le\ x_i\ \le\ \min(U_i,T-S_L)
        \]
        \State Choose the lexicographically largest feasible value
        \[
        x_i \leftarrow \min(U_i,T-S_L)
        \]
        \State Write $x_i$ into position $i$ of row $r$ in $W$
        \State Update the remaining target $T \leftarrow T-x_i$
    \EndFor
\EndFor
\State Uncompute all temporary arithmetic workspace except the lower-row GT output in $W$
\end{algorithmic}
\end{algorithm}

\paragraph{Asymptotic cost.}
There are
\[
\sum_{r=1}^{d-1} r = \frac{d(d-1)}{2}
\]
GT entries to be written below the top row. For each such entry $x_i$, a direct implementation of
Eq.~\eqref{eq:C_greedy_choice} requires a constant number of $b_\mu$-bit additions/subtractions,
one $b_\mu$-bit comparison, one controlled assignment, and one update of the remaining target. Using
Cuccaro ripple-carry arithmetic \cite{Cuccaro2004} for the additions and subtractions, each such
operation costs $2b_\mu+\mathcal{O}(1)$ Toffoli gates and has depth
$2b_\mu+\mathcal{O}(1)$. A reversible comparator can be implemented with the same asymptotic cost,
so each GT entry contributes $\mathcal{O}(b_\mu)$ Toffoli gates and depth.

Consequently, a direct implementation of $C$ has Toffoli count
\begin{equation}
N_{\mathrm{CCX}}(C)
=
\mathcal{O}\!\left(\sum_{r=1}^{d-1} r\, b_\mu\right)
=
\mathcal{O}\!\bigl(d^2 \log N\bigr),
\label{eq:C_cost_scaling}
\end{equation}
and similarly the CNOT count and depth scale as $\mathcal{O}(d^2\log N)$. The same asymptotic
estimate applies to $C^\dagger$, since it is implemented by reversing the arithmetic steps of $C$.
Therefore, $C$ has the same asymptotic arithmetic cost as the forward occupation-computation
subroutine $A$, and therefore the full Jordan--Schwinger unitary $U_{\mathrm{JS}}$ remains
polynomial in both $d$ and $\log N$.

In practical implementations, some of the suffix sums in Algorithm~\ref{alg:C_from_occup} may be
streamed or updated incrementally rather than recomputed from scratch at every step, reducing
constant factors. Likewise, comparator ancillas and target registers can be aggressively recycled.
For the purposes of the present high-level resource analysis, however, the direct scaling estimate in
Eq.~\eqref{eq:C_cost_scaling} suffices.

\subsection{Total gate complexity \label{subsec:total_gate_complexity}}

We now determine the total gate complexity for the application of $Q$ or $Q^\dagger$ in Theorem~\ref{thm:QQT_complexity_poly}.

\begin{boxedtheorem}[Gate complexity of quantum first--second quantization conversion]
\label{thm:QQT_complexity_poly}
Fix particle number $N$, single-particle dimension $d$, and target diamond-norm (or operator-norm) accuracy
$\epsilon\in(0,1)$. Assume the input state is supported on a known, fixed Schur--Weyl sector $\lambda\vdash N$
(with $\ell(\lambda)\le d$), and that within this sector the GT register is supported on the promised
canonical representatives $\mu_{\mathrm{can}}$. Also fix the $\Sigma$ register to a single canonical label $\sigma_{\text{can}}$. Then the transform
\begin{equation}
Q:\ \mathcal{H}_{\lambda}\to \mathcal{F}^{(\lambda)}_{N}
\end{equation}
and its inverse $Q^\dagger$ can be implemented on a gate-based quantum computer with total gate complexity
\begin{equation}
\mathrm{GateCost}(Q),\ \mathrm{GateCost}(Q^\dagger)
\;=\;
\mathrm{poly}\!\bigl(N,\,d,\,\log(\epsilon^{-1})\bigr),
\label{eq:QQT_complexity_poly}
\end{equation}
where the polynomial depends on the chosen strong Schur transform implementation (BCH, KB or otherwise)
and on the reversible-arithmetic constructions used in the Jordan--Schwinger stage.
\end{boxedtheorem}

\begin{proof}
By Algorithm~\ref{alg:QQT_highlevel2}, $Q$ is realized as the composition
\[
Q = U_{\mathrm{JS}}\,U_{\mathrm{Schur}},
\qquad
Q^\dagger = U_{\mathrm{Schur}}^\dagger\,U_{\mathrm{JS}}^\dagger.
\]
A strong Schur transform $U_{\mathrm{Schur}}$ with error at most $\epsilon/2$ can be implemented with
$\mathrm{poly}(N,d,\log(\epsilon^{-1}))$ gates using the BCH construction \cite{Bacon2006Schur}, or with
$\mathrm{poly}(N,\log d,\log(\epsilon^{-1}))$ gates using the high-dimensional KB construction
\cite{Krovi2019Schur,Burchardt2025Schur}; see Algorithm~\ref{alg:U_schur_placeholder}.

It therefore remains to bound the arithmetic cost of $U_{\mathrm{JS}}$. As shown in
Eq.~\eqref{eq:UJS_factorization_highlevel}, $U_{\mathrm{JS}} = C^\dagger S C A$, acting on the promised
canonical GT subspace labeled by $\mu_{\mathrm{can}}$: $A$ computes the occupation vector
$n(\mu_{\mathrm{can}})$ from GT row-sum differences, $C$ reconstructs the lower rows
$\mu_{\mathrm{can},\downarrow}(n,\lambda)$ of the canonical GT representative from the pair
$(\lambda,n)$, $S$ subtracts the reconstructed canonical pattern from the GT register, and
$C^\dagger$ uncomputes the reconstruction workspace.

The arithmetic subroutine $A$ has Toffoli/CNOT count and depth $\mathcal{O}(d^2\log N)$
[Eq.~\eqref{eq:A_cost_scaling}], and the canonical reconstruction subroutine $C$ has the same
asymptotic cost, $N_{\mathrm{CCX}}(C)=\mathcal{O}(d^2\log N)$ [Eq.~\eqref{eq:C_cost_scaling}], with the
same scaling for CNOT count and depth. The inverse $C^\dagger$ matches $C$, since it reverses the same
arithmetic steps, and the subtraction unitary $S$ is another reversible arithmetic routine on the same
triangular GT data with $b_\mu$-bit words, contributing no worse asymptotically than the other
components. Consequently,
\begin{equation}
\mathrm{GateCost}(U_{\mathrm{JS}})
=
\mathrm{GateCost}(A)+\mathrm{GateCost}(C)+\mathrm{GateCost}(S)+\mathrm{GateCost}(C^\dagger)
=
\mathcal{O}(d^2\log N),
\end{equation}
up to implementation-dependent constant factors and lower-order terms.

Combining the Schur-transform cost with the arithmetic cost of $U_{\mathrm{JS}}$, and distributing the
allowed approximation error so that the total implementation error is at most $\epsilon$, yields the
claimed overall scaling in Eq.~\eqref{eq:QQT_complexity_poly}. The same bound holds for $Q^\dagger$,
since it is implemented by the inverses of the same two polynomial-size circuits.
\end{proof}

\section{Worked example with intermediate quantum states}
\label{sec:Q_worked_example_N2d3}

In this section we illustrate, end-to-end and with explicit intermediate states, how the transform $Q$ converts a first-quantized $N$-particle input into its fixed-$N$ second-quantized (occupation-number) representation. We deliberately choose a small but nontrivial didactic instance of two identical fermions ($N=2$) in three single-particle modes ($d=3$) so that every label can be written explicitly, and we present naive qubit encodings throughout.

\subsection{Register conventions and naive qubit encodings}
\label{subsec:Q_example_registers}

We label the one-particle basis as $\{\ket{1},\ket{2},\ket{3}\}$. On qubit hardware, each
single-particle label is stored using
\begin{equation}
q := \left\lceil \log_2 d\right\rceil = \left\lceil \log_2 3\right\rceil = 2
\label{eq:q_single_particle_bits_updated}
\end{equation}
qubits. We adopt the explicit binary encoding
\begin{equation}
\ket{1}\mapsto \ket{00},\qquad
\ket{2}\mapsto \ket{01},\qquad
\ket{3}\mapsto \ket{10},
\label{eq:mode_binary_encoding_updated}
\end{equation}
leaving $\ket{11}$ unused. The first-quantized Hilbert space $(\mathbb{C}^3)^{\otimes 2}$ is thus
encoded on
\begin{equation}
n_{\mathrm{in}} = N q = 2\times 2 = 4
\end{equation}
qubits, with computational basis states $\ket{i}\ket{j}$ represented as
$\ket{\mathrm{bin}(i)}\ket{\mathrm{bin}(j)}$.

For the output occupation-number representation, we allocate a uniform register that can represent
bosonic occupations up to $N$ in each mode. This is required for the general transform $Q$, which is
agnostic to the input wavefunction symmetry: bosonic occupations provide an upper bound on the
occupancies permitted by any particle statistics sector. Thus each mode occupation
$n_\ell\in\{0,1,2\}$ is encoded using
\begin{equation}
b := \left\lceil \log_2(N+1)\right\rceil = \left\lceil \log_2 3\right\rceil = 2
\label{eq:b_occ_bits_updated}
\end{equation}
qubits, and the full occupation register uses $d b = 3\times 2 = 6$ qubits. For definiteness we
encode
\begin{equation}
0\mapsto \ket{00},\qquad 1\mapsto \ket{01},\qquad 2\mapsto \ket{10},
\label{eq:occ_binary_encoding_updated}
\end{equation}
again leaving $\ket{11}$ unused. Even though the present example is fermionic (so $n_\ell\in\{0,1\}$),
we retain this $2$-qubit-per-mode layout to emphasize that $Q$ is agnostic to the input symmetry.

The quantum Schur transform stage outputs registers for the Young diagram label $\Lambda$, the GT label $M$, and the Young--Yamanouchi label $\Sigma$. In the present example, the fermionic symmetry sector fixes $\lambda=(1,1,0)$ and there is only one allowed Young--Yamanouchi label, so $\Lambda$ and $\Sigma$ are constant. Additionally, because we are dealing with fermions, the canonical GT promise is automatically satisfied for all inputs, so below $\mu$ is interchangeable with $\mu_{\text{can}}$. Following the definition of $U_{\mathrm{JS}}$ in Sec.~\ref{subsec:JS_unitary_from_GT}, we also use an occupation register $F$ and a work register $W$ that stores only the \emph{lower rows} of the canonical GT pattern. Since $d=3$ here, the lower rows consist of the row of length $2$ and the row of length $1$, so under the same naive $2$-bit-per-entry encoding the work register uses $(2+1)\times 2=6$ qubits.

\subsection{Step 1: a first-quantized two-fermion input state (and its qubit encoding)}
\label{subsec:Q_example_step1_updated}

The antisymmetric (fermionic) two-particle basis vectors are the normalized Slater determinants
\begin{equation}
\ket{i\wedge j}
\;:=\;
\frac{1}{\sqrt{2}}
\bigl(\ket{i}\ket{j}-\ket{j}\ket{i}\bigr),
\qquad 1\le i<j\le 3.
\label{eq:wedge_def_updated}
\end{equation}
Suppose the input first-quantized register is prepared in the normalized state
\begin{equation}
\ket{\psi_{\mathrm{1Q}}}
=
\frac{1}{\sqrt{2}}\Bigl(\ket{1\wedge 2}+\ket{1\wedge 3}\Bigr)
\;\in\;
\wedge^2(\mathbb{C}^3)\subset (\mathbb{C}^3)^{\otimes 2}.
\label{eq:psi_1Q_example_updated}
\end{equation}
Expanding in the tensor-product basis and then applying the binary encoding
of Eq.~\eqref{eq:mode_binary_encoding_updated}, this state is represented on $4$ qubits as
\begin{align}
\ket{\psi_{\text{1Q}}}
&=
\frac{1}{\sqrt{2}}
\left[
\frac{1}{\sqrt{2}}\bigl(\ket{1}\ket{2}-\ket{2}\ket{1}\bigr)
+
\frac{1}{\sqrt{2}}\bigl(\ket{1}\ket{3}-\ket{3}\ket{1}\bigr)
\right]\nonumber\\
&=
\frac{1}{2}\Bigl(
\ket{1}\ket{2}-\ket{2}\ket{1}+\ket{1}\ket{3}-\ket{3}\ket{1}
\Bigr)\nonumber\\
&\mapsto
\frac{1}{2}\Bigl(
\ket{00}\ket{01}-\ket{01}\ket{00}+\ket{00}\ket{10}-\ket{10}\ket{00}
\Bigr).
\label{eq:psi_input_binary_updated}
\end{align}
This is the explicit qubit-level first-quantized wavefunction to which the circuit implementation
of $Q$ is applied.

\subsection{Step 2: apply the Schur transform $U_{\mathrm{Schur}}$}
\label{subsec:Q_example_step2_updated}

For $(\mathbb{C}^3)^{\otimes 2}$, Schur--Weyl duality decomposes the Hilbert space into the
$\lambda=(2,0,0)$ symmetric sector and the $\lambda=(1,1,0)$ antisymmetric sector. Since
$\ket{\psi_{\mathrm{1Q}}}$ is fermionic, it is supported entirely on the latter Young diagram,
\begin{equation}
\lambda=(1,1,0),
\end{equation}
and hence carries the one-dimensional sign irrep of $S_2$. Consequently there is a unique
$S_2$ basis label: the unique standard Young tableau of shape $(1,1)$ is
\[
\ytableaushort{1,2},
\qquad\text{with Young--Yamanouchi word}\qquad \sigma=(1,2).
\]

After applying the Schur transform, the state is decomposed into Schur-basis labels
$\ket{\lambda,\mu,\sigma}$. In this $(N,d)=(2,3)$ example, the top row of every GT pattern is fixed to
\[
\mu^{(3)}=\lambda=(1,1,0),
\]
and the admissible interlacing patterns are
\begin{equation}
\mu_A=
\left(
\begin{array}{ccc}
1 & 1 & 0\\
1 & 1\\
1
\end{array}
\right),
\qquad
\mu_B=
\left(
\begin{array}{ccc}
1 & 1 & 0\\
1 & 0\\
1
\end{array}
\right),
\qquad
\mu_C=
\left(
\begin{array}{ccc}
1 & 1 & 0\\
1 & 0\\
0
\end{array}
\right).
\label{eq:GT_patterns_wedge2_d3_updated}
\end{equation}
These are exactly the GT patterns satisfying the interlacing conditions, which evaluate to
\[
1\ge \mu^{(2)}_1 \ge 1,\qquad
1\ge \mu^{(2)}_2 \ge 0,\qquad
\mu^{(2)}_1\ge \mu^{(1)}_1 \ge \mu^{(2)}_2.
\]
Thus $\mu^{(2)}_1=1$ and $\mu^{(2)}_2\in\{0,1\}$. If $\mu^{(2)}_2=1$, then $\mu^{(1)}_1$ is forced to
equal $1$, giving $\mu_A$; if $\mu^{(2)}_2=0$, then $\mu^{(1)}_1$ may be either $1$ or $0$, giving
$\mu_B$ and $\mu_C$ respectively. This produces precisely the three patterns in
Eq.~\eqref{eq:GT_patterns_wedge2_d3_updated}.

Omitting the ancillary register $a$, the state of Eq.~\eqref{eq:psi_1Q_example_updated} is mapped to the intermediate Schur-basis wavefunction
\begin{equation}
U_{\mathrm{Schur}}\ket{\psi_{\mathrm{1Q}}}
\;=\;
\ket{\lambda=(1,1,0)}_{\Lambda}
\otimes
\frac{1}{\sqrt{2}}
\Bigl(\ket{\mu_A}_{M}+\ket{\mu_B}_{M}\Bigr)
\otimes
\ket{\sigma=(1,2)}_{\Sigma},
\label{eq:psi_after_schur_updated}
\end{equation}
with no amplitude on $\mu_C$ because the input has no $\ket{2\wedge 3}$ component.

In this fermionic example, the map from GT pattern to occupation vector is already injective on the
allowed support, so each admissible GT pattern is automatically its own canonical representative. Hence
\[
\mu_A,\ \mu_B,\ \mu_C
\]
are canonical GT patterns in the sense required by the Jordan--Schwinger stage.

On qubit hardware, a convenient naive layout is to store every GT entry row by row:
\begin{equation}
\ket{\mu}
\;\equiv\;
\ket{\mu^{(3)}_1,\mu^{(3)}_2,\mu^{(3)}_3\;;\;\mu^{(2)}_1,\mu^{(2)}_2\;;\;\mu^{(1)}_1},
\label{eq:mu_register_full_updated}
\end{equation}
allocating $b=2$ qubits per entry. Thus the GT register uses $(3+2+1)\times 2=12$ qubits in this
naive implementation. Using the encoding of Eq.~\eqref{eq:occ_binary_encoding_updated} entrywise, the two
patterns appearing in Eq.~\eqref{eq:psi_after_schur_updated} are encoded as
\begin{align}
\ket{\mu_A}
&=
\ket{\,1,1,0\;;\;1,1\;;\;1\,}
\ \mapsto\
\ket{01,01,00\;;\;01,01\;;\;01},
\label{eq:muA_full_binary_updated}\\
\ket{\mu_B}
&=
\ket{\,1,1,0\;;\;1,0\;;\;1\,}
\ \mapsto\
\ket{01,01,00\;;\;01,00\;;\;01}.
\label{eq:muB_full_binary_updated}
\end{align}

\subsection{Step 3: apply the Jordan--Schwinger unitary $U_{\mathrm{JS}}$}
\label{subsec:Q_example_step3_updated}

We now apply the Jordan--Schwinger arithmetic unitary,
\[
U_{\mathrm{JS}}:\ \ket{\lambda}_{\Lambda}\ket{\mu}_{M}\ket{0}_{F}\ket{0}_{W}
\mapsto
\ket{\lambda}_{\Lambda}\ket{0}_{M}\ket{n(\mu)}_{F}\ket{0}_{W},
\]
where we recall that the work register $W$ need only ever contain lower-row GT data $\mu_\downarrow$. $U_{\text{JS}}$ decomposes into the four reversible substeps
\[
A,\qquad C,\qquad S,\qquad C^\dagger.
\]

In the present example, the admissible lower-row data are
\begin{align}
\mu_{A,\downarrow}
&=
\left(
\begin{array}{cc}
1 & 1\\
1
\end{array}
\right),
&
\mu_{B,\downarrow}
&=
\left(
\begin{array}{cc}
1 & 0\\
1
\end{array}
\right),
&
\mu_{C,\downarrow}
&=
\left(
\begin{array}{cc}
1 & 0\\
0
\end{array}
\right).
\label{eq:lower_rows_example}
\end{align}

\paragraph{Step 3a: compute occupations with $A$.}
The arithmetic map $A$ computes occupations from GT row-sum differences. Define row sums
\begin{equation}
R_r(\mu)\;:=\;\sum_{i=1}^{r}\mu^{(r)}_i,
\qquad r=1,2,3,
\qquad
R_0(\mu):=0,
\label{eq:rowsums_def_updated}
\end{equation}
and occupations
\begin{equation}
n_\ell(\mu)\;:=\;R_\ell(\mu)-R_{\ell-1}(\mu),
\qquad \ell=1,2,3.
\label{eq:rowsum_difference_def_updated}
\end{equation}
In our small example, $R_1$ is the sum of the bottom row, $R_2$ is the sum of the middle row, and $R_3$ is the sum of the top row.

For the three admissible GT patterns,
\begin{align}
\mu_A:\ &R_1=1,\ R_2=2,\ R_3=2
\quad\Rightarrow\quad
n(\mu_A)=(1,1,0), \label{eq:occ_muA_updated}\\
\mu_B:\ &R_1=1,\ R_2=1,\ R_3=2
\quad\Rightarrow\quad
n(\mu_B)=(1,0,1), \label{eq:occ_muB_updated}\\
\mu_C:\ &R_1=0,\ R_2=1,\ R_3=2
\quad\Rightarrow\quad
n(\mu_C)=(0,1,1). \label{eq:occ_muC_updated}
\end{align}
Thus, starting from the post-Schur state with fresh occupation and work registers,
\begin{equation}
\ket{\lambda}_{\Lambda}
\otimes
\frac{1}{\sqrt{2}}
\Bigl(\ket{\mu_A}_{M}+\ket{\mu_B}_{M}\Bigr)
\otimes
\ket{0}_{F}\ket{0}_{W},
\end{equation}
the action of $A$ is
\begin{align}
\ket{\psi^{(A)}}
&:=
A
\left[
\ket{\lambda}_{\Lambda}
\otimes
\frac{1}{\sqrt{2}}
\Bigl(\ket{\mu_A}_{M}+\ket{\mu_B}_{M}\Bigr)
\otimes
\ket{0}_{F}\ket{0}_{W}
\right]\nonumber\\
&=
\ket{\lambda}_{\Lambda}\otimes
\frac{1}{\sqrt{2}}
\Bigl(
\ket{\mu_A}_{M}\ket{1,1,0}_{F}
+
\ket{\mu_B}_{M}\ket{1,0,1}_{F}
\Bigr)\ket{0}_{W}.
\label{eq:psi_after_A_example}
\end{align}

\paragraph{Step 3b: reconstruct the lower rows of the canonical GT pattern with $C$.}
Next we apply the canonical reconstruction map $C$, which uses the top-row register $\Lambda$
together with the occupation vector to write only the lower GT rows into the work register:
\[
C:\ \ket{\lambda}_{\Lambda}\ket{\mu}_{M}\ket{n}_{F}\ket{0}_{W}
\mapsto
\ket{\lambda}_{\Lambda}\ket{\mu}_{M}\ket{n}_{F}\ket{\mu_{\downarrow}(n,\lambda)}_{W}.
\]
In the present fermionic example the canonical GT representatives are simply the unique GT patterns
associated with each occupation vector.

Applying $C$ to Eq.~\eqref{eq:psi_after_A_example} gives
\begin{align}
\ket{\psi^{(C)}}
&:=
C\,\ket{\psi^{(A)}}\nonumber\\
&=
\ket{\lambda}_{\Lambda}\otimes
\frac{1}{\sqrt{2}}
\Bigl(
\ket{\mu_A}_{M}\ket{1,1,0}_{F}\ket{\mu_{A,\downarrow}}_{W}
+
\ket{\mu_B}_{M}\ket{1,0,1}_{F}\ket{\mu_{B,\downarrow}}_{W}
\Bigr).
\label{eq:psi_after_C_example}
\end{align}

\paragraph{Step 3c: subtract the reconstructed full GT pattern with $S$.}
The subtraction unitary $S$ uses the top row from $\Lambda$ and the reconstructed lower rows in $W$
to subtract the full canonical GT pattern from the GT register $M$:
\[
S:\ \ket{\lambda}_{\Lambda}\ket{\mu}_{M}\ket{n}_{F}\ket{\mu_{\downarrow}}_{W}
\mapsto
\ket{\lambda}_{\Lambda}\ket{\mu-(\lambda,\mu_{\downarrow})}_{M}\ket{n}_{F}\ket{\mu_{\downarrow}}_{W}.
\]
Since the work register now contains the same canonical lower-row data as the GT register on each branch,
we obtain
\begin{align}
\ket{\psi^{(S)}}
&:=
S\,\ket{\psi^{(C)}}\nonumber\\
&=
\ket{\lambda}_{\Lambda}\otimes
\frac{1}{\sqrt{2}}
\Bigl(
\ket{0}_{M}\ket{1,1,0}_{F}\ket{\mu_{A,\downarrow}}_{W}
+
\ket{0}_{M}\ket{1,0,1}_{F}\ket{\mu_{B,\downarrow}}_{W}
\Bigr)\nonumber\\
&=
\ket{\lambda}_{\Lambda}\ket{0}_{M}\otimes
\frac{1}{\sqrt{2}}
\Bigl(
\ket{1,1,0}_{F}\ket{\mu_{A,\downarrow}}_{W}
+
\ket{1,0,1}_{F}\ket{\mu_{B,\downarrow}}_{W}
\Bigr).
\label{eq:psi_after_S_example}
\end{align}

\paragraph{Step 3d: uncompute the work register with $C^\dagger$.}
Finally, because the work register was computed reversibly from the occupation register together with
the fixed top row $\lambda$, we may apply $C^\dagger$:
\[
\ket{\lambda}_{\Lambda}\ket{0}_{M}\ket{n}_{F}\ket{\mu_{\downarrow}(n,\lambda)}_{W}
\mapsto
\ket{\lambda}_{\Lambda}\ket{0}_{M}\ket{n}_{F}\ket{0}_{W}.
\]
Hence
\begin{align}
\ket{\psi^{(U_{\mathrm{JS}})}}
&:=
C^\dagger\,\ket{\psi^{(S)}}\nonumber\\
&=
\ket{\lambda}_{\Lambda}\otimes
\frac{1}{\sqrt{2}}
\Bigl(
\ket{0}_{M}\ket{1,1,0}_{F}\ket{0}_{W}
+
\ket{0}_{M}\ket{1,0,1}_{F}\ket{0}_{W}
\Bigr)\nonumber\\
&=
\ket{\lambda}_{\Lambda}\ket{0}_{M}\otimes
\frac{1}{\sqrt{2}}
\Bigl(
\ket{1,1,0}_{F}+\ket{1,0,1}_{F}
\Bigr)\otimes
\ket{0}_{W}.
\label{eq:psi_after_UJS_example}
\end{align}

After the full action of $U_{\mathrm{JS}}$, the GT register has been coherently erased, the work
register has been returned to $\ket{0}$, and the occupation register is left in a pure state. This is
exactly the behavior described in Eq.~\eqref{eq:U_JS_new_def}.

\subsection{Step 4: the second-quantized wavefunction (occupation-number output)}
\label{subsec:Q_example_step4_updated}

Equation~\eqref{eq:psi_after_UJS_example} shows that after the Schur-transform labels are fixed to the
fermionic sector and the Jordan--Schwinger stage has been completed, the active output state is simply
\begin{equation}
\ket{\psi_{\mathrm{2Q}}}
\;=\;
\frac{1}{\sqrt{2}}
\Bigl(\ket{1,1,0}+\ket{1,0,1}\Bigr).
\label{eq:psi_2Q_example_updated}
\end{equation}
The GT register $M$ and work register $W$ have both been returned to $\ket{0}$, while the fixed
$\lambda$ and $\sigma$ registers factor out and play no further dynamical role. Thus the occupation
register now behaves as the genuine second-quantized wavefunction.

In the symmetry-agnostic binary encoding with $2$ qubits per mode, the same output state is
\begin{equation}
\ket{\psi}_{\mathrm{out}}
\;=\;
\frac{1}{\sqrt{2}}
\Bigl(\ket{01,01,00}+\ket{01,00,01}\Bigr),
\label{eq:psi_2Q_example_binary_updated}
\end{equation}
where the occupation register is $6$ qubits and each $2$-qubit block encodes an integer occupation in
$\{0,1,2\}$.

\section{Quantum speed-ups \label{sec:speed_up}}

\subsection{Circuit-specified conversion with output in classical \textit{or} quantum memory}
\label{subsec:speed_up_either_memory}

Our unitary transformation $Q$ (Sec.~\ref{sec:qqt}) is a coherent change of representation between a fixed Schur--Weyl symmetry sector in first quantization and the corresponding fixed-$N$ symmetry sector in Fock space. In this subsection we formalize an operational task that keeps the \emph{input} identical for classical and quantum algorithms, while permitting the \emph{output} to be stored either as a quantum state (quantum memory) or as an explicit coefficient list (classical memory). This task quantifies the difficulty of converting between first- and second-quantized representations on classical versus quantum hardware.

For notational clarity, throughout this subsection (and all subsections within this Section~\ref{sec:speed_up}) we suppress explicit ancillary registers and intermediate workspace registers, and we write simply
\begin{equation}
Q:\ \ket{\psi_{\mathrm{1Q}}}\ \longmapsto\ \ket{\psi_{\mathrm{2Q}}},
\qquad
Q^\dagger:\ \ket{\psi_{\mathrm{2Q}}}\ \longmapsto\ \ket{\psi_{\mathrm{1Q}}},
\label{eq:Q_simplified_notation_flexible_output}
\end{equation}
with the understanding that $Q$ and $Q^\dagger$ are implemented by the construction of Sec.~\ref{sec:qqt}. In particular, whenever the Schur transform introduces GT-pattern data, the implementation is restricted to the promised canonical GT representatives $\mu_{\mathrm{can}}$, and whenever a multiplicity label is fixed we write it as $\sigma_{\mathrm{can}}$ to emphasize that a fixed canonical copy has been chosen.

\begin{boxedtheorem}[Task (Circuit-specified promised-$(\lambda,\sigma_{\mathrm{can}})$ first-to-second quantization with flexible output memory)]
\label{task:flexible_output_quantization}
Fix particle number $N$, single-particle dimension $d$, a Young diagram $\lambda\vdash N$ with $\ell(\lambda)\le d$, a fixed canonical multiplicity label $\sigma_{\mathrm{can}}$ of shape $\lambda$, and an accuracy parameter $\epsilon\in(0,1)$. You are given a classical description of a $\mathrm{poly}(N,d)$-depth quantum circuit implementing a unitary $C^{\lambda\sigma_{\mathrm{can}}}$ acting on
\[
n_{\mathrm{1Q}} = N\lceil \log_2 d\rceil
\]
qubits that encode $(\mathbb{C}^d)^{\otimes N}$ in the standard computational basis. Let
\begin{equation}
\ket{\psi^{\lambda\sigma_{\mathrm{can}}}}_{\mathrm{1Q}} \ :=\ C^{\lambda \sigma_{\mathrm{can}}}\ket{0^{n_{\mathrm{1Q}}}}.
\label{eq:flexible_input_state}
\end{equation}
\emph{Promise:} $\ket{\psi^{\lambda\sigma_{\mathrm{can}}}}_{\mathrm{1Q}}$ lies in the fixed Schur--Weyl sector
$\mathcal{H}_{\lambda\sigma_{\mathrm{can}}}\subset(\mathbb{C}^d)^{\otimes N}$, where $\sigma_{\mathrm{can}}$
denotes the chosen canonical multiplicity label (for example, a fixed canonical Young--Yamanouchi word)
for the $S_N$ copy associated with $\lambda$. Moreover, in the implementation of $Q$ the intermediate
GT-pattern support is promised to lie on the canonical GT representatives $\mu_{\mathrm{can}}$ of
Sec.~\ref{sec:qqt}.

The goal is to output the corresponding second-quantized state
\begin{equation}
\ket{\psi^{\lambda}}_{\mathrm{2Q}} \ :=\ Q\,\ket{\psi^{\lambda \sigma_{\mathrm{can}}}}_{\mathrm{1Q}}
\ \in\ \mathcal{F}^{(\lambda)}_{N},
\label{eq:flexible_target_state}
\end{equation}
to trace-distance error at most $\epsilon$, with the output stored in \emph{either}:
\begin{enumerate}
\item \emph{quantum memory:} prepare a quantum register in a state $\widetilde{\rho}$ such that
$\tfrac12\|\widetilde{\rho}-\proj{\psi^{\lambda}_{\mathrm{2Q}}}{\psi^{\lambda}_{\mathrm{2Q}}}\|_1\le \epsilon$; or
\item \emph{classical memory:} output an explicit list of coefficient--basis-string pairs
$\{(n,c_n)\}$ specifying a normalized vector $\ket{\widetilde{\psi}}=\sum_n c_n\ket{n}$ satisfying
$\tfrac12\|\proj{\widetilde{\psi}}{\widetilde{\psi}}-\proj{\psi^{\lambda}_{\mathrm{2Q}}}{\psi^{\lambda}_{\mathrm{2Q}}}\|_1\le \epsilon$.
\end{enumerate}
\end{boxedtheorem}

\subsubsection{Quantum runtime (quantum-memory output)}

The quantum algorithm is immediate: prepare $\ket{\psi^{\lambda \sigma_{\mathrm{can}}}}_{\mathrm{1Q}}$ by applying
$C^{\lambda \sigma_{\mathrm{can}}}$ (Eq.~\eqref{eq:flexible_input_state}), then apply $Q$ from Sec.~\ref{sec:qqt}
to obtain $\ket{\psi^\lambda}_{\mathrm{2Q}}$ (Eq.~\eqref{eq:flexible_target_state}). The next statement is a direct
specialization of Theorem~\ref{thm:QQT_complexity_poly}.

\begin{boxedtheorem}[Quantum complexity]
\label{thm:flexible_quantum_complexity}
Under the promises of Task~\ref{task:flexible_output_quantization}, including the canonical GT promise
on the intermediate labels $\mu_{\mathrm{can}}$, a gate-based quantum computer can output
$\ket{\psi^{\lambda}}_{\mathrm{2Q}}$ in quantum memory to trace-distance error at most $\epsilon$ using
\begin{equation}
\mathrm{GateCost}_{\mathrm{quantum}}
\;=\;
\mathrm{poly}\!\bigl(N,\,d,\,\log(\epsilon^{-1})\bigr),
\label{eq:flexible_quantum_cost}
\end{equation}
gates in total, including (i) the $\mathrm{poly}(N,d)$-size circuit $C^{\lambda\sigma_{\mathrm{can}}}$ and
(ii) the synthesis of $Q$ to accuracy compatible with overall trace-distance error $\epsilon$.
\end{boxedtheorem}

\begin{proof}
Apply $C^{\lambda\sigma_{\mathrm{can}}}$ to $\ket{0^{n_{\mathrm{1Q}}}}$ to obtain
$\ket{\psi^{\lambda\sigma_{\mathrm{can}}}}_{\mathrm{1Q}}$ (Eq.~\eqref{eq:flexible_input_state}), then apply
$Q$ as implemented in Sec.~\ref{sec:qqt} (cf.~Algorithm~\ref{alg:QQT_highlevel2}) to obtain
$\ket{\psi^\lambda}_{\mathrm{2Q}}$ (Eq.~\eqref{eq:flexible_target_state}). By
Theorem~\ref{thm:QQT_complexity_poly}, the canonical transform $Q$---implemented using the promised
canonical GT representatives $\mu_{\mathrm{can}}$ and fixed canonical multiplicity copy
$\sigma_{\mathrm{can}}$---can be synthesized with gate complexity
$\mathrm{poly}(N,d,\log(\epsilon^{-1}))$ by distributing the error budget across the strong Schur
transform and Jordan--Schwinger arithmetic subroutines. Composing with the given $\mathrm{poly}(N,d)$
circuit $C^{\lambda\sigma_{\mathrm{can}}}$ yields the stated overall gate complexity in
Eq.~\eqref{eq:flexible_quantum_cost}. The output is stored directly as a quantum state in quantum memory.
\end{proof}

\subsubsection{Classical runtime lower bound (explicit classical output)}

A classical algorithm cannot output a quantum state in quantum memory. Thus, when restricted to
classical computation, Task~\ref{task:flexible_output_quantization} reduces to producing an
\emph{explicit coefficient list} that approximates
$\ket{\psi^\lambda}_{\mathrm{2Q}}\,(= Q\ket{\psi^{\lambda\sigma_{\mathrm{can}}}}_{\mathrm{1Q}})$ as in
Eq.~\eqref{eq:flexible_target_state} up to trace-distance error $\epsilon$.

Recall that, under the canonical Gelfand--Tsetlin promise of Sec.~\ref{sec:qqt}, the output of $Q$ is
supported on the occupation-number register, whose reachable states span
$\{\ket{n}:n\in\Omega_\lambda\}$, with $\Omega_\lambda$ the set of \emph{distinct} occupation vectors
(equivalently, distinct $\mathfrak{u}(d)$ weights) occurring in $\mathcal{Q}^{(d)}_\lambda$. We therefore
take as the relevant dimension the size of this occupation set,
\begin{equation}
D_{\lambda}(N,d)\ :=\ \bigl|\Omega_\lambda\bigr|
\ =\ \dim\,\mathrm{span}\{\ket{n}:n\in\Omega_\lambda\},
\label{eq:Dlambda_def}
\end{equation}
which is the number of occupation-basis records needed to write the converted state. By
Theorem~\ref{thm:equivariant_bijection}, $\Omega_\lambda$ is exactly the weight set of
$\mathcal{Q}^{(d)}_\lambda$, so
\[
D_\lambda(N,d)\ \le\ \dim\mathcal{Q}^{(d)}_\lambda\ =\ \sum_{n}K_{\lambda,n},
\]
with equality precisely in the weight-multiplicity-free sectors (all Kostka numbers
$K_{\lambda,n}\le 1$), i.e.\ for bosons $\lambda=(N)$ and fermions $\lambda=(1^N)$ with $d\ge N$; in those
sectors the occupation vector is itself a complete label and
$D_\lambda(N,d)=\dim\mathcal{Q}^{(d)}_\lambda$. Fix an orthonormal occupation-number basis
$\{\ket{n}\}_{n\in\Omega_{\lambda}}$ for the output sector $\mathcal{F}^{(\lambda)}_{N}$, where
$|\Omega_{\lambda}|=D_{\lambda}(N,d)$.

\begin{boxedtheorem}[Classical explicit-output lower bound from output size]
\label{thm:flexible_classical_lb}
Fix any $\epsilon\in(0,1)$ and any $(N,d)$ with $\ell(\lambda)\le d$. There exists an input circuit
$C^{\lambda\sigma_{\mathrm{can}}}$ satisfying the promise of
Task~\ref{task:flexible_output_quantization}, including the canonical GT promise on
$\mu_{\mathrm{can}}$, such that any classical algorithm that outputs a coefficient list describing a pure
state $\ket{\widetilde{\psi}}$ with
\[
\tfrac12\bigl\|\proj{\widetilde{\psi}}{\widetilde{\psi}}-\proj{\psi^\lambda_{\mathrm{2Q}}}{\psi^\lambda_{\mathrm{2Q}}}\bigr\|_1\le \epsilon
\]
must output at least $\Omega(D_\lambda(N,d))$ coefficient--basis-string records, and hence must use
\begin{equation}
T_{\mathrm{classical}}\ \ge\ \Omega\!\bigl(D_{\lambda}(N,d)\bigr),
\qquad
M_{\mathrm{classical}}\ \ge\ \Omega\!\bigl(D_{\lambda}(N,d)\bigr),
\label{eq:flexible_classical_lb}
\end{equation}
where $M_{\mathrm{classical}}$ counts the number of records in the explicit list and
$T_{\mathrm{classical}}$ counts classical time steps (in particular, time to emit the list).
Consequently, choosing $\lambda$ that maximizes $D_\lambda(N,d)$ implies the same bound with
\begin{equation}
D_{\max}(N,d)\ :=\ \max_{\lambda\vdash N,\ \ell(\lambda)\le d}\ D_{\lambda}(N,d),
\qquad
T_{\mathrm{classical}},M_{\mathrm{classical}}\ \ge\ \Omega\!\bigl(D_{\max}(N,d)\bigr).
\label{eq:Dmax_def}
\end{equation}
\end{boxedtheorem}

\begin{proof}
We use a worst-case choice of the target output state inside the occupation sector
$\mathcal{F}^{(\lambda)}_{N}$ that (i) is consistent with the task promise for some
$\mathrm{poly}(N,d)$ circuit and (ii) forces any $\epsilon$-accurate explicit list to have support size
$\Omega(D_\lambda)$.

Let $\ket{\phi}\in \mathcal{F}^{(\lambda)}_{N}$ be the equal-magnitude superposition over the
occupation basis:
\begin{equation}
\ket{\phi}\ :=\ \frac{1}{\sqrt{D_\lambda}}\sum_{n\in\Omega_\lambda} e^{i\theta_n}\ket{n},
\label{eq:flat_state}
\end{equation}
for arbitrary phases $\theta_n\in[0,2\pi)$. Since $\ket{\phi}$ is supported exactly on the distinct
occupation vectors $\Omega_\lambda$, it lies in the domain of $Q^\dagger$, and
$\ket{\varphi^{\lambda\sigma_{\mathrm{can}}}}_{\mathrm{1Q}} := Q^\dagger\ket{\phi}$ is a canonical
first-quantized state in $\mathcal{H}_{\lambda\sigma_{\mathrm{can}}}$ (supported on the canonical GT
representatives by construction). Because $Q^\dagger$ is implementable with
$\mathrm{poly}(N,d,\log(\epsilon^{-1}))$ gates (Theorem~\ref{thm:QQT_complexity_poly}), there exists a
$\mathrm{poly}(N,d)$-depth circuit $C^{\lambda\sigma_{\mathrm{can}}}$ that prepares
$\ket{\varphi^{\lambda\sigma_{\mathrm{can}}}}_{\mathrm{1Q}}$ to sufficiently high precision from
$\ket{0^{n_{\mathrm{1Q}}}}$; we fix such a circuit as the input instance. For this instance, the correct
second-quantized output is exactly $\ket{\psi^\lambda}_{\mathrm{2Q}}=\ket{\phi}$.

Now let a classical algorithm output an explicit list describing a normalized
$\ket{\widetilde{\phi}}=\sum_n c_n\ket{n}$ such that
$\tfrac12\|\proj{\widetilde{\phi}}{\widetilde{\phi}}-\proj{\phi}{\phi}\|_1\le \epsilon$.
For pure states, trace distance satisfies
\begin{equation}
\tfrac12\bigl\|\proj{\widetilde{\phi}}{\widetilde{\phi}}-\proj{\phi}{\phi}\bigr\|_1
\;=\;
\sqrt{1-\bigl|\langle \widetilde{\phi}\!\mid\!\phi\rangle\bigr|^2}.
\label{eq:pure_trace_distance_overlap}
\end{equation}
Let $S\subseteq\Omega_\lambda$ be the set of basis indices that appear with nonzero coefficient in
the algorithm's output list, i.e.~the support of $\ket{\widetilde{\phi}}$ in the occupation basis.
Then $\ket{\widetilde{\phi}}$ is supported on $\mathrm{span}\{\ket{n}:n\in S\}$, and hence
\begin{equation}
\bigl|\langle \widetilde{\phi}\!\mid\!\phi\rangle\bigr|
\ \le\
\bigl\|\Pi_S\ket{\phi}\bigr\|_2
\ =\
\sqrt{\sum_{n\in S}\frac{1}{D_\lambda}}
\ =\
\sqrt{\frac{|S|}{D_\lambda}},
\label{eq:overlap_support_bound}
\end{equation}
where $\Pi_S$ is the projector onto $\mathrm{span}\{\ket{n}:n\in S\}$ and we used
Eq.~\eqref{eq:flat_state}. Combining Eq.~\eqref{eq:pure_trace_distance_overlap} and
Eq.~\eqref{eq:overlap_support_bound}, the condition
$\tfrac12\|\proj{\widetilde{\phi}}{\widetilde{\phi}}-\proj{\phi}{\phi}\|_1\le \epsilon$ implies
\[
\sqrt{1-\frac{|S|}{D_\lambda}}\ \le\ \epsilon
\qquad\Longrightarrow\qquad
|S|\ \ge\ (1-\epsilon^2)\,D_\lambda.
\]
Thus any $\epsilon$-accurate explicit coefficient list must contain at least
$(1-\epsilon^2)D_\lambda=\Omega(D_\lambda)$ nonzero records. This yields both
$M_{\mathrm{classical}}\ge \Omega(D_\lambda)$ and, since writing each record takes at least constant
time, $T_{\mathrm{classical}}\ge \Omega(D_\lambda)$. Maximizing over $\lambda$ gives
Eq.~\eqref{eq:Dmax_def}.
\end{proof}

\subsubsection{Determination of the largest sector}

Theorem~\ref{thm:flexible_classical_lb} reduces the classical worst-case cost to understanding
$D_{\max}(N,d)$ (Eq.~\eqref{eq:Dmax_def}). The bosonic sector $\lambda=(N)$ and the fermionic sector
$\lambda=(1^N)$ are weight-multiplicity-free, so for them $D_\lambda=|\Omega_\lambda|$ equals the
number of distinct occupation vectors:
\begin{align}
D_{(N)}(N,d)\ &=\ \binom{N+d-1}{N},
\label{eq:Dmax_ge_boson}\\
D_{(1^N)}(N,d)\ &=\ \binom{d}{N}\qquad (d\ge N),
\label{eq:Dmax_ge_fermion}
\end{align}
the bosonic count being the number of weak compositions of $N$ into $d$ nonnegative parts and the
fermionic count the number of $N$-element subsets of $\{1,\dots,d\}$ (equivalently, since these sectors
are weight-multiplicity-free, the values of the hook-content dimension formula
$\dim(\mathcal{Q}^{(d)}_{\lambda})=\prod_{(i,j)\in\lambda}\frac{d+j-i}{h(i,j)}$). Moreover
$\binom{N+d-1}{N}\ge \binom{d}{N}$ for $d\ge N$ (multiset combinations dominate set combinations).

In fact the maximum is attained by the bosonic sector. Every occupation vector of any sector $\lambda$
is a weak composition of $N$ into $d$ nonnegative parts, so $\Omega_\lambda\subseteq\Omega_{(N)}$; since
$\lambda=(N)$ realizes \emph{all} such compositions,
\begin{equation}
D_{\max}(N,d)\ =\ \max_{\lambda\vdash N,\ \ell(\lambda)\le d}\,\bigl|\Omega_\lambda\bigr|
\ =\ \binom{N+d-1}{N},
\label{eq:Dmax_value}
\end{equation}
attained at $\lambda=(N)$. (Note that for general $\lambda$ the irrep dimension
$\dim\mathcal{Q}^{(d)}_\lambda=\sum_n K_{\lambda,n}$ can exceed this value; it is the occupation-vector
count $|\Omega_\lambda|$, not the irrep dimension, that governs the explicit-output size.) We therefore
analyze the bosonic count, for which two regimes are worth emphasizing.

\paragraph{Fixed $N$, growing $d$ (polynomial in $d$ with degree $N$).}
For fixed particle number $N$,
\begin{equation}
\binom{N+d-1}{N}
\;=\;
\Theta\!\left(\frac{d^{N}}{N!}\right)
\qquad (d\to\infty,\ N\ \text{fixed}),
\label{eq:boson_asymp_fixedN}
\end{equation}
so the explicit classical output size is polynomial in $d$ but with exponent $N$.

\paragraph{$d=\Theta(N)$ (exponential in $N$).}
If $d=cN$ with constant $c>0$, then Stirling's approximation implies
\begin{equation}
\binom{N+d-1}{N}
\;=\;
\binom{(c+1)N-1}{N}
\;=\;
\exp\!\bigl(\Theta(N)\bigr),
\label{eq:boson_asymp_linear}
\end{equation}
so the explicit classical output size grows exponentially in $N$ in this scaling regime.

Combining Theorem~\ref{thm:flexible_quantum_complexity} with
Theorem~\ref{thm:flexible_classical_lb} and Eq.~\eqref{eq:Dmax_value} yields the operational
comparison: a quantum computer produces the converted state in quantum memory with
$\mathrm{poly}(N,d,\log(\epsilon^{-1}))$ gates (Eq.~\eqref{eq:flexible_quantum_cost}), whereas any
classical workflow that must materialize the fully converted state as an explicit coefficient list
requires time and memory at least
\begin{equation}
\Omega\!\Bigl(D_{\max}(N,d)\Bigr)
\ =\
\Omega\!\left(\binom{N+d-1}{N}\right),
\end{equation}
where the equality is Eq.~\eqref{eq:Dmax_value}. The bound is already degree-$N$ in $d$
via Eq.~\eqref{eq:boson_asymp_fixedN} and becomes exponential in $N$ when $d=\Theta(N)$ via
Eq.~\eqref{eq:boson_asymp_linear}. In this precise operational sense, application of the canonical
transform $Q$ offers a large separation between (i) poly-size coherent conversion and
(ii) explicit classical materialization of the converted state.

\subsection{Circuit-specified conversion with output in classical \textit{or} quantum memory: inverse direction}
\label{subsec:speed_up_either_memory_inverse}

We now formalize the inverse conversion task: given a classical description of a circuit that prepares
a fixed-$N$ second-quantized state in a promised statistics sector $\lambda$, we wish to convert it to
its first-quantized representation in a fixed Schur--Weyl multiplicity copy $\sigma_{\mathrm{can}}$ and
output either (i) the converted state in quantum memory or (ii) an explicit classical coefficient list
in the first-quantized computational basis. As above, we suppress explicit ancillary registers and use
the simplified notation of Eq.~\eqref{eq:Q_simplified_notation_flexible_output}, with the
understanding that the canonical implementation of $Q^\dagger$ uses the promised canonical GT
representatives $\mu_{\mathrm{can}}$.

\begin{boxedtheorem}[Task (Circuit-specified promised-$(\lambda,\sigma_{\mathrm{can}})$ second-to-first quantization with flexible output memory)]
\label{task:flexible_output_dequantization}
Fix particle number $N$, single-particle dimension $d$, a Young diagram $\lambda\vdash N$ with
$\ell(\lambda)\le d$, a canonical multiplicity label (for example, a canonical Young--Yamanouchi
word) $\sigma_{\mathrm{can}}$ of shape $\lambda$, and an accuracy parameter $\epsilon\in(0,1)$.
Let $\mathcal{F}^{(\lambda)}_{N}$ denote the fixed-$N$ second-quantized physical sector corresponding
to statistics type $\lambda$, with an orthonormal occupation-number basis
$\{\ket{n}\}_{n\in\Omega_\lambda}$.

You are given a classical description of a $\mathrm{poly}(N,d)$-depth quantum circuit
$C^{\lambda}$ acting on
\begin{equation}
m \;=\; \mathcal{O}\!\bigl(d\,\log(N+1)\bigr)
\label{eq:inverse_m_qubits}
\end{equation}
qubits (up to $\mathrm{poly}(N,d)$ ancillas) that encodes $\mathcal{F}^{(\lambda)}_{N}$ in the
computational basis, such that
\begin{equation}
\ket{\psi^{\lambda}}_{\mathrm{2Q}} \ :=\ C^{\lambda}\ket{0^m}
\label{eq:inverse_input_state_2Q}
\end{equation}
lies in the fixed sector $\mathcal{F}^{(\lambda)}_{N}$.

Let $\mathcal{H}_{\lambda\sigma_{\mathrm{can}}}\subset (\mathbb{C}^d)^{\otimes N}$ denote the
corresponding Schur--Weyl sector copy. The goal is to output the corresponding first-quantized state
\begin{equation}
\ket{\psi^{\lambda \sigma_{\mathrm{can}}}}_{\mathrm{1Q}}
\ :=\
Q^\dagger\ket{\psi^{\lambda}}_{\mathrm{2Q}}
\ \in\ \mathcal{H}_{\lambda\sigma_{\mathrm{can}}},
\label{eq:inverse_target_state_1Q}
\end{equation}
to trace-distance error at most $\epsilon$, with the output stored in \emph{either}:
\begin{enumerate}
\item \emph{quantum memory:} prepare a quantum register in a state $\widetilde{\rho}$ such that
\begin{equation}
\tfrac12\Bigl\|\widetilde{\rho}-\proj{\psi^{\lambda \sigma_{\mathrm{can}}}_{\mathrm{1Q}}}{\psi^{\lambda \sigma_{\mathrm{can}}}_{\mathrm{1Q}}}\Bigr\|_1\le \epsilon;
\label{eq:inverse_quantum_output_condition}
\end{equation}
or
\item \emph{classical memory:} output an explicit list of coefficient--basis-string pairs
$\{(\mathbf{i},\alpha_{\mathbf{i}})\}$ specifying a normalized vector
$\ket{\widetilde{\psi}}=\sum_{\mathbf{i}\in[d]^N}\alpha_{\mathbf{i}}\ket{\mathbf{i}}$
satisfying
\begin{equation}
\tfrac12\Bigl\|\proj{\widetilde{\psi}}{\widetilde{\psi}}-\proj{\psi^{\lambda\sigma_{\mathrm{can}}}_{\mathrm{1Q}}}{\psi^{\lambda\sigma_{\mathrm{can}}}_{\mathrm{1Q}}}\Bigr\|_1\le \epsilon.
\label{eq:inverse_classical_output_condition}
\end{equation}
\end{enumerate}
\end{boxedtheorem}

\subsubsection{Quantum runtime (quantum-memory output)}

The quantum algorithm is immediate: prepare $\ket{\psi^{\lambda}}_{\mathrm{2Q}}$ by applying $C^\lambda$
(Eq.~\eqref{eq:inverse_input_state_2Q}), fix the canonical sector labels
$(\lambda,\sigma_{\mathrm{can}})$ as part of the promised implementation of $Q^\dagger$, and then
apply the coherent inverse quantization transform $Q^\dagger$ to obtain
$\ket{\psi^{\lambda\sigma_{\mathrm{can}}}}_{\mathrm{1Q}}$ (Eq.~\eqref{eq:inverse_target_state_1Q}).

\begin{boxedtheorem}[Quantum complexity]
\label{thm:flexible_quantum_complexity_inverse}
Under the promises of Task~\ref{task:flexible_output_dequantization}, including the canonical GT
promise on the intermediate labels $\mu_{\mathrm{can}}$, a gate-based quantum computer can output
$\ket{\psi^{\lambda \sigma_{\mathrm{can}}}}_{\mathrm{1Q}}$ in quantum memory with trace-distance error
at most $\epsilon$ using
\begin{equation}
\mathrm{GateCost}_{\mathrm{quantum}}
\;=\;
\mathrm{poly}\!\bigl(N,\,d,\,\log(\epsilon^{-1})\bigr)
\label{eq:flexible_quantum_cost_inverse}
\end{equation}
gates in total, including preparation of $\ket{\psi^{\lambda}}_{\mathrm{2Q}}$ via $C^\lambda$ and
application of $Q^\dagger$.
\end{boxedtheorem}

\begin{proof}
Prepare $\ket{\psi^{\lambda}}_{\mathrm{2Q}}$ using $C^\lambda$
(Eq.~\eqref{eq:inverse_input_state_2Q}). Then apply $Q^\dagger$ as defined in Sec.~\ref{sec:qqt},
with fixed canonical multiplicity copy $\sigma_{\mathrm{can}}$ and the promised canonical GT support
$\mu_{\mathrm{can}}$. By Theorem~\ref{thm:QQT_complexity_poly}, both $Q$ and $Q^\dagger$ admit
implementations with gate complexity $\mathrm{poly}(N,d,\log(\epsilon^{-1}))$ after allocating the
error budget across the strong Schur transform and arithmetic subroutines so that the overall
trace-distance error is at most $\epsilon$
(Eq.~\eqref{eq:inverse_quantum_output_condition}). The output is stored directly as a quantum state in
quantum memory.
\end{proof}

\subsubsection{Classical runtime lower bound (explicit classical output)}

We now give a worst-case lower bound for any classical algorithm that must output an \emph{explicit
classical} coefficient list for the first-quantized computational-basis expansion of
$\ket{\psi^{\lambda \sigma_{\mathrm{can}}}}_{\mathrm{1Q}}$
(Eq.~\eqref{eq:inverse_target_state_1Q}).

\paragraph{Row and column groups and Young symmetrizers.}
Fix any standard Young tableau $T$ of shape $\lambda$. Let $R_T\le S_N$ be its \emph{row
group} (permutations preserving each row as a set) and $C_T\le S_N$ its \emph{column group}
(permutations preserving each column as a set). Define the (unnormalized) Young symmetrizer
\begin{equation}
e_T \ :=\ \Bigl(\sum_{\rho\in R_T} P_\rho\Bigr)\,
         \Bigl(\sum_{\kappa\in C_T} \mathrm{sgn}(\kappa)\,P_\kappa\Bigr),
\label{eq:young_symmetrizer_def_inverse}
\end{equation}
where $P(\pi)$ permutes tensor factors according to $\pi\in S_N$.

\begin{lemma}[Factorial computational-basis support from a Young symmetrizer]
\label{lem:factorial_support_young_symmetrizer}
Assume $d\ge N$, fix any partition $\lambda\vdash N$, and let $T$ be any standard Young tableau of
shape $\lambda$. Let $\ket{x}=\ket{1\,2\,\cdots\,N}\in(\mathbb{C}^d)^{\otimes N}$ denote a
computational basis string with all labels distinct (possible since $d\ge N$). Then the set of
computational basis strings appearing in $e_T\ket{x}$ has cardinality
\begin{equation}
S_\lambda \ :=\ |R_T|\,|C_T|\ =\ \Bigl(\prod_i \lambda_i!\Bigr)\Bigl(\prod_j (\lambda'_j)!\Bigr),
\label{eq:Slambda_inverse}
\end{equation}
where $\lambda'$ is the conjugate partition. Moreover, no cancellations occur in the computational
basis expansion of $e_T\ket{x}$: each basis string in the support appears exactly once with
coefficient $\pm 1$ (before normalization). In particular, by the injectivity established below,
$S_\lambda=|R_T||C_T|\le N!$, with equality for the single-row shape $\lambda=(N)$ and the single-column
shape $\lambda=(1^N)$.
\end{lemma}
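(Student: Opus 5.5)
The plan is to expand the product of the two sums in Eq.~\eqref{eq:young_symmetrizer_def_inverse} term by term and reduce everything to one injectivity statement about products of row and column permutations. Since $P(\rho)P(\kappa)=P(\rho\kappa)$, we have
\[
e_T\ket{x}=\sum_{\rho\in R_T}\sum_{\kappa\in C_T}\mathrm{sgn}(\kappa)\,P(\rho\kappa)\ket{x}.
\]
All labels of $\ket{x}=\ket{1\,2\,\cdots\,N}$ are distinct. By Eq.~\eqref{eq:perm_action}, $P(\pi)\ket{x}$ is the string $(\pi^{-1}(1),\ldots,\pi^{-1}(N))$, so $\pi\mapsto P(\pi)\ket{x}$ is an injection from $S_N$ into computational basis strings; in particular $P(\pi)\ket{x}=P(\pi')\ket{x}$ forces $\pi=\pi'$. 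Hence the support size and the no-cancellation claim both follow once we show that the multiplication map
\[
m:R_T\times C_T\to S_N,\qquad (\rho,\kappa)\mapsto\rho\kappa,
\]
is injective.

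The injectivity of $m$ is the key step and the standard one. If $\rho\kappa=\rho'\kappa'$, then $\rho'^{-1}\rho=\kappa'\kappa^{-1}\in R_T\cap C_T$. A permutation in this intersection maps each entry to the same row and to the same column. A box of a Young diagram is determined by its row and column, so every entry is fixed, and $R_T\cap C_T=\{e\}$. Therefore $\rho=\rho'$ and $\kappa=\kappa'$.

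With injectivity in hand, the $|R_T||C_T|$ terms of the expansion land on pairwise distinct basis strings, each carrying coefficient $\mathrm{sgn}(\kappa)=\pm1$. No two terms can combine, so there are no cancellations, and the support has exactly $S_\lambda=|R_T||C_T|$ elements. The explicit formula then follows from the direct-product structure:
\[
R_T\cong\prod_i S_{\lambda_i},\qquad C_T\cong\prod_j S_{\lambda'_j},
\]
since rows and columns partition the entries.

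Finally, injectivity embeds $R_T\times C_T$ into $S_N$, which gives $S_\lambda\le N!$. For $\lambda=(N)$ we have $R_T=S_N$ and $C_T$ trivial. For $\lambda=(1^N)$ we have $C_T=S_N$ and $R_T$ trivial. In both cases $S_\lambda=N!$.

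I expect the only real content, and hence the main obstacle, to be the observation $R_T\cap C_T=\{e\}$ together with making sure the ordering convention of Eq.~\eqref{eq:perm_action} ($\pi^{-1}$ versus $\pi$) does not break the identification of distinct permutations with distinct strings. The ordering is harmless because inversion is a bijection on $S_N$.
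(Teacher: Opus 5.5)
Your proposal is correct and follows the paper's proof essentially step for step. You expand $e_T\ket{x}$ as the signed sum of $P(\rho\kappa)\ket{x}$, use the distinct labels of $\ket{x}$ to make $\pi\mapsto P(\pi)\ket{x}$ injective, reduce to injectivity of $(\rho,\kappa)\mapsto\rho\kappa$ via $R_T\cap C_T=\{e\}$, and count $|R_T|$ and $|C_T|$ from the row/column direct-product structure; your explicit checks that $P$ is a homomorphism and that the $\pi^{-1}$ convention of Eq.~\eqref{eq:perm_action} is harmless are details the paper leaves implicit.
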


\begin{proof}
Because $\ket{x}$ has all labels distinct, the map $\pi\mapsto P(\pi)\ket{x}$ is injective. Consider
the products $\{\rho\kappa:\rho\in R_T,\ \kappa\in C_T\}\subseteq S_N$. If
$\rho\kappa=\rho'\kappa'$, then $\rho^{-1}\rho'=\kappa\kappa'^{-1}\in R_T\cap C_T$. For a standard
tableau, $R_T\cap C_T=\{e\}$, since a permutation preserving every row and every column as sets must
fix each box. Thus $(\rho,\kappa)\mapsto \rho\kappa$ is injective and
$|\{\rho\kappa\}|=|R_T||C_T|\le|S_N|=N!$.

Expanding Eq.~\eqref{eq:young_symmetrizer_def_inverse},
\[
e_T\ket{x}
=\sum_{\rho\in R_T}\sum_{\kappa\in C_T}\mathrm{sgn}(\kappa)\,P_{\rho\kappa}\ket{x}.
\]
By injectivity, all $P_{\rho\kappa}\ket{x}$ are distinct computational basis states, so there are no
collisions and hence no cancellations; each basis state appears exactly once with coefficient
$\mathrm{sgn}(\kappa)\in\{\pm1\}$. Finally, $|R_T|=\prod_i \lambda_i!$ and
$|C_T|=\prod_j (\lambda'_j)!$ since permutations act independently within each row and within each
column, yielding Eq.~\eqref{eq:Slambda_inverse}.
\end{proof}

We now turn Lemma~\ref{lem:factorial_support_young_symmetrizer} into an explicit-output lower bound
that holds for \emph{any constant} $\epsilon\in(0,1)$. To keep the hard instance compatible with the
canonical-GT promise of Sec.~\ref{sec:qqt}, we state it for the weight-multiplicity-free sectors, where
the Young-symmetrizer state defined below has a single GT representative per weight and therefore lies in
the canonical subspace on which $Q$ is defined. These are exactly the sectors that maximize the
factorial support: among all $\lambda\vdash N$, $S_\lambda=\prod_i\lambda_i!\prod_j(\lambda'_j)!$ attains
its maximum value $N!$ precisely at $\lambda=(N)$ and $\lambda=(1^N)$, so restricting to them is without
loss for the worst-case bound.

\begin{boxedtheorem}[Classical explicit-output lower bound from factorial support]
\label{thm:flexible_classical_lb_inverse}
Assume $d\ge N$ and let $\lambda=(N)$ (bosons) or $\lambda=(1^N)$ (fermions), so that $S_\lambda=N!$ by
Eq.~\eqref{eq:Slambda_inverse}. Fix any $\epsilon\in(0,1)$. Then there exists a circuit
$C^\lambda$ satisfying the promise of Task~\ref{task:flexible_output_dequantization}, with fixed
canonical multiplicity label $\sigma_{\mathrm{can}}$ and canonical GT promise $\mu_{\mathrm{can}}$,
such that any classical algorithm that outputs an explicit coefficient list
(Eq.~\eqref{eq:inverse_classical_output_condition}) describing
$\ket{\psi^{\lambda\sigma_{\mathrm{can}}}}_{\mathrm{1Q}}$ in the first-quantized computational basis
to trace-distance error at most $\epsilon$ must use
\begin{equation}
T_{\mathrm{classical}}\ \ge\ \Omega\!\bigl(S_{\lambda}\bigr)\ =\ \Omega(N!),
\qquad
M_{\mathrm{classical}}\ \ge\ \Omega\!\bigl(S_{\lambda}\bigr)\ =\ \Omega(N!),
\label{eq:flexible_classical_lb_inverse}
\end{equation}
where $T_{\mathrm{classical}}$ is runtime and $M_{\mathrm{classical}}$ is the number of output
records. More generally, for any weight-multiplicity-free $\lambda$ (so that the construction below
respects the canonical-GT promise) the same argument yields $T_{\mathrm{classical}},
M_{\mathrm{classical}}\ge\Omega(S_\lambda)$ with $S_\lambda$ as in Eq.~\eqref{eq:Slambda_inverse}.
\end{boxedtheorem}

\begin{proof}
Let $T$ be any standard Young tableau of shape $\lambda$ and let $\ket{x}=\ket{1\,2\,\cdots\,N}$.
Define the normalized state
\begin{equation}
\ket{\phi}\ :=\ \frac{e_T\ket{x}}{\|e_T\ket{x}\|_2}.
\label{eq:phi_def_inverse}
\end{equation}
By Lemma~\ref{lem:factorial_support_young_symmetrizer}, $\ket{\phi}$ has computational-basis support
size $S_\lambda$. Moreover, since the unnormalized coefficients are all $\pm1$, normalization implies
that each basis string in the support has equal magnitude
\begin{equation}
|\langle \mathbf{i}\mid \phi\rangle| \;=\; \frac{1}{\sqrt{S_\lambda}}
\qquad\text{for all $\mathbf{i}$ in the support of $\ket{\phi}$.}
\label{eq:uniform_amplitudes_inverse}
\end{equation}

The state $\ket{\phi}$ lies in a single Schur--Weyl sector: $e_T(\mathbb{C}^d)^{\otimes N}$ is one copy of
the Weyl module $\mathcal{Q}^{(d)}_\lambda$ associated with the standard tableau $T$. Choose the fixed
multiplicity label $\sigma_{\mathrm{can}}$ in Task~\ref{task:flexible_output_dequantization} to match
this copy, i.e.~regard $\ket{\phi}\in \mathcal{H}_{\lambda\sigma_{\mathrm{can}}}$ (this is a choice of
basis convention for the multiplicity label). Because $\lambda$ is weight-multiplicity-free, every
weight space of $\mathcal{Q}^{(d)}_\lambda$ is one-dimensional, so its canonical GT representative is the
unique pattern of that weight and $\ket{\phi}$ is automatically supported on the canonical GT subspace on
which $Q$ is defined. Define the corresponding second-quantized state
\begin{equation}
\ket{\psi^\lambda}_{\mathrm{2Q}}
\ :=\
Q\,\ket{\phi}
\ \in\ \mathcal{F}^{(\lambda)}_{N}.
\label{eq:inverse_hard_instance_2Q}
\end{equation}
Since $Q$ is implementable with $\mathrm{poly}(N,d)$ gates
(Theorem~\ref{thm:QQT_complexity_poly}), there exists a $\mathrm{poly}(N,d)$-depth circuit
$C^\lambda$ that prepares $\ket{\psi^\lambda}_{\mathrm{2Q}}$ from $\ket{0^m}$. For this input
instance, the correct output satisfies
\[
\ket{\psi^{\lambda\sigma_{\mathrm{can}}}}_{\mathrm{1Q}}
\;=\;
Q^\dagger\ket{\psi^\lambda}_{\mathrm{2Q}}
\;=\;
\ket{\phi}
\]
by Eq.~\eqref{eq:inverse_target_state_1Q} and the definition in
Eq.~\eqref{eq:inverse_hard_instance_2Q}.

Now let a classical algorithm output an explicit list describing a normalized
$\ket{\widetilde{\phi}}=\sum_{\mathbf{i}} \widetilde{\alpha}_{\mathbf{i}}\ket{\mathbf{i}}$ such that
$\tfrac12\|\proj{\widetilde{\phi}}{\widetilde{\phi}}-\proj{\phi}{\phi}\|_1\le \epsilon$.
For pure states,
\begin{equation}
\tfrac12\Bigl\|\proj{\widetilde{\phi}}{\widetilde{\phi}}-\proj{\phi}{\phi}\Bigr\|_1
=\sqrt{1-|\langle \widetilde{\phi}\mid \phi\rangle|^2}.
\label{eq:inverse_pure_trace_distance_overlap}
\end{equation}
Let $S\subseteq [d]^N$ be the set of basis strings appearing with nonzero coefficient in the output
list, i.e.~the computational-basis support of $\ket{\widetilde{\phi}}$. Then
\begin{equation}
|\langle \widetilde{\phi}\mid \phi\rangle|
\ \le\
\|\Pi_S\ket{\phi}\|_2
\ =\
\sqrt{\sum_{\mathbf{i}\in S} |\langle \mathbf{i}\mid \phi\rangle|^2}
\ \le\
\sqrt{\frac{|S|}{S_\lambda}},
\label{eq:inverse_overlap_support_bound}
\end{equation}
where $\Pi_S$ projects onto $\mathrm{span}\{\ket{\mathbf{i}}:\mathbf{i}\in S\}$ and we used
Eq.~\eqref{eq:uniform_amplitudes_inverse} in the final inequality. Combining
Eq.~\eqref{eq:inverse_pure_trace_distance_overlap} and
Eq.~\eqref{eq:inverse_overlap_support_bound}, the condition
$\tfrac12\|\proj{\widetilde{\phi}}{\widetilde{\phi}}-\proj{\phi}{\phi}\|_1\le \epsilon$ implies
\[
\sqrt{1-\frac{|S|}{S_\lambda}}\ \le\ \epsilon
\qquad\Longrightarrow\qquad
|S|\ \ge\ (1-\epsilon^2)\,S_\lambda.
\]
Therefore any $\epsilon$-accurate explicit coefficient list must contain at least
$(1-\epsilon^2)S_\lambda=\Omega(S_\lambda)$ records, which yields both bounds in
Eq.~\eqref{eq:flexible_classical_lb_inverse}.
\end{proof}

\subsection{Promised-$(\lambda,\sigma_{\mathrm{can}})$ circuit sampling of occupation-number outcomes}
\label{subsec:promised_lambda_sampling}

In this subsection we study a \emph{sampling} primitive: instead of explicitly outputting a full
classical description of the converted state $Q\ket{\psi^{\lambda \sigma_{\mathrm{can}}}}_{\mathrm{1Q}}$,
we ask only for samples from the induced occupation-number measurement distribution. The central
message is twofold: (i) a quantum computer samples efficiently by coherently applying $Q$ and
measuring, and (ii) if a \emph{classical} polynomial-time algorithm could sample (to constant total-variation error)
for all inputs satisfying the Schur--Weyl promise, then one could classically sample output
distributions of arbitrary quantum circuits on
$m_\lambda:=\lceil\log_2 D_\lambda(N,d)\rceil$ qubits to the same constant accuracy. For any scaling
family in which $m_\lambda$ grows, this would imply a complexity-theoretic collapse
(e.g.~$\mathrm{BQP}\subseteq\mathrm{BPP}$), which is widely believed to be false.

As above, we suppress explicit ancilla and label registers in the notation for $Q$ and $Q^\dagger$.
All implementations are understood to use the canonical multiplicity label $\sigma_{\mathrm{can}}$ and
the canonical GT representatives $\mu_{\mathrm{can}}$ of Sec.~\ref{sec:qqt}.

\begin{boxedtheorem}[Task (Promised-$(\lambda,\sigma_{\mathrm{can}})$ circuit-specified occupation-number sampling)]
\label{task:promised_lambda_sampling_tight}
Fix particle number $N$, single-particle dimension $d\ge N$, and a Young diagram $\lambda\vdash N$
with $\ell(\lambda)\le d$. Fix a \emph{canonical} multiplicity label $\sigma_{\mathrm{can}}$
(for example, a fixed canonical Young--Yamanouchi word of shape $\lambda$), and let
$\mathcal{H}_{\lambda\sigma_{\mathrm{can}}}\subset(\mathbb{C}^d)^{\otimes N}$ denote the
corresponding Schur--Weyl sector copy.

You are given a classical description of a quantum circuit $C^{\lambda \sigma_{\mathrm{can}}}$
acting on
\begin{equation}
n_{\mathrm{1Q}} \;=\; N\lceil\log_2 d\rceil
\label{eq:sampling_n_qubits}
\end{equation}
qubits such that the prepared pure state
\begin{equation}
\ket{\psi^{\lambda \sigma_{\mathrm{can}}}}_{\mathrm{1Q}} \;:=\; C^{\lambda \sigma_{\mathrm{can}}}\ket{0^{n_{\mathrm{1Q}}}}
\qquad\text{satisfies}\qquad
\ket{\psi^{\lambda \sigma_{\mathrm{can}}}}_{\mathrm{1Q}} \in \mathcal{H}_{\lambda \sigma_{\mathrm{can}}}.
\label{eq:sampling_input_state}
\end{equation}
Moreover, the implementation of $Q$ is assumed to use the promised canonical GT representatives
$\mu_{\mathrm{can}}$. Define the induced occupation-number measurement distribution
\begin{equation}
p_{\lambda \sigma_{\mathrm{can}}}(n)\ :=\ \left|\bra{n}\,Q\,\ket{\psi^{\lambda \sigma_{\mathrm{can}}}}_{\mathrm{1Q}}\right|^2,
\qquad n\in\Omega_\lambda,
\label{eq:promised_lambda_sampling_dist_tight}
\end{equation}
where $\{\ket{n}\}_{n\in\Omega_\lambda}$ is a fixed orthonormal occupation-number basis for
$\mathcal{F}^{(\lambda)}_N$. The goal is to output a classical sample
$n\sim p_{\lambda\sigma_{\mathrm{can}}}(\cdot)$, either exactly or within total variation distance
at most $\epsilon\in(0,1)$.
\end{boxedtheorem}

\subsubsection{Quantum algorithm and runtime}

The quantum solution is immediate: prepare
$\ket{\psi^{\lambda \sigma_{\mathrm{can}}}}_{\mathrm{1Q}}$ using
$C^{\lambda \sigma_{\mathrm{can}}}$ (Eq.~\eqref{eq:sampling_input_state}), apply $Q$
(Algorithm~\ref{alg:QQT_highlevel2}), and measure the occupation register in the computational basis.

\begin{boxedtheorem}[Quantum complexity of promised-$(\lambda,\sigma_{\mathrm{can}})$ occupation-number sampling]
\label{thm:promised_lambda_sampling_quantum_tight}
Under Task~\ref{task:promised_lambda_sampling_tight}, a gate-based quantum computer can sample from
$p_{\lambda\sigma_{\mathrm{can}}}(\cdot)$ within total variation error at most $\epsilon$ using
\begin{equation}
\mathrm{GateCost}_{\mathrm{quantum}}
\;=\;
\mathrm{GateCost}(C^{\lambda\sigma_{\mathrm{can}}})
\;+\;
\mathrm{poly}\!\bigl(N,\,d,\,\log(\epsilon^{-1})\bigr)
\label{eq:promised_lambda_sampling_quantum_cost_tight}
\end{equation}
gates, where the second term is the cost of implementing $Q$ to accuracy compatible with
total-variation error~$\epsilon$. In particular, if
$\mathrm{GateCost}(C^{\lambda\sigma_{\mathrm{can}}})=\mathrm{poly}(N,d)$ (as in the other task
statements), then
$\mathrm{GateCost}_{\mathrm{quantum}}=\mathrm{poly}(N,d,\log(\epsilon^{-1}))$.
\end{boxedtheorem}

\begin{proof}
Prepare
$\ket{\psi^{\lambda \sigma_{\mathrm{can}}}}_{\mathrm{1Q}}=C^{\lambda \sigma_{\mathrm{can}}}\ket{0^{n_{\mathrm{1Q}}}}$
and apply an implementation $\widetilde{Q}$ of $Q$. By Theorem~\ref{thm:QQT_complexity_poly}, for
any $\eta\in(0,1)$ there exists $\widetilde{Q}$ whose induced channel
$\widetilde{\mathcal{E}}(\cdot)=\widetilde{Q}(\cdot)\widetilde{Q}^\dagger$ approximates
$\mathcal{E}(\cdot)=Q(\cdot)Q^\dagger$ in diamond norm, with gate complexity
$\mathrm{poly}(N,d,\log(\eta^{-1}))$ such that
\begin{equation}
\|\widetilde{\mathcal{E}}-\mathcal{E}\|_\diamond\ \le\ \eta.
\label{eq:sampling_diamond_bound}
\end{equation}
(Equivalently, this follows from an operator-norm synthesis bound $\|\widetilde{Q}-Q\|_\infty\le\eta/2$
via $\|\widetilde{\mathcal{E}}-\mathcal{E}\|_\diamond\le 2\|\widetilde{Q}-Q\|_\infty$.)
Let $\rho := \proj{\psi^{\lambda \sigma_{\mathrm{can}}}_{\mathrm{1Q}}}{\psi^{\lambda \sigma_{\mathrm{can}}}_{\mathrm{1Q}}}$
and define the output states $\rho_{\mathrm{out}} := Q\rho Q^\dagger=\mathcal{E}(\rho)$ and
$\widetilde{\rho}_{\mathrm{out}} := \widetilde{Q}\rho \widetilde{Q}^\dagger=\widetilde{\mathcal{E}}(\rho)$.
Diamond-norm control implies trace-distance control on states:
\begin{equation}
\frac12\bigl\|\widetilde{\rho}_{\mathrm{out}}-\rho_{\mathrm{out}}\bigr\|_1
\ \le\ \frac12\|\widetilde{\mathcal{E}}-\mathcal{E}\|_\diamond
\ \le\ \frac{\eta}{2}.
\label{eq:sampling_trace_bound}
\end{equation}

Measuring in the occupation-number basis is a CPTP map $\mathcal{M}$ from density operators to
classical distributions, and trace distance is contractive under CPTP maps. Therefore the total
variation distance between the induced measurement distributions satisfies
\begin{equation}
\bigl\| \mathcal{M}(\widetilde{\rho}_{\mathrm{out}})-\mathcal{M}(\rho_{\mathrm{out}})\bigr\|_{\mathrm{TV}}
\ \le\ \frac12\bigl\|\widetilde{\rho}_{\mathrm{out}}-\rho_{\mathrm{out}}\bigr\|_1
\ \le\ \frac{\eta}{2}.
\end{equation}
Choosing $\eta=2\epsilon$ yields a sampler whose output distribution is within total variation
distance at most $\epsilon$ from $p_{\lambda\sigma_{\mathrm{can}}}(\cdot)$, with the stated gate
complexity.
\end{proof}

\subsubsection{A uniform classical hardness reduction}

We now show that an efficient classical sampler for
Task~\ref{task:promised_lambda_sampling_tight} for \emph{all} promised inputs would yield an efficient
classical sampler for the output distributions of arbitrary quantum circuits on
$m_\lambda:=\lceil\log_2 D_\lambda(N,d)\rceil$ qubits (to the same total-variation accuracy). Here, as in
Eq.~\eqref{eq:Dlambda_def}, $D_\lambda(N,d)=|\Omega_\lambda|$ is the number of distinct occupation
vectors of the sector, which is the number of basis outcomes available to the occupation measurement.

\paragraph{Encoding into the occupation-number basis.}
Fix $\lambda$ and abbreviate $D_\lambda:=D_\lambda(N,d)=|\Omega_\lambda|$. Define
\begin{equation}
m_\lambda\ :=\ \left\lceil \log_2 D_\lambda \right\rceil.
\label{eq:m_lambda_def_tight}
\end{equation}
Fix an efficiently computable injection
$\mathrm{Enc}_\lambda:\{0,1\}^{m_\lambda}\hookrightarrow \Omega_\lambda$, together with an
efficiently computable inverse $\mathrm{Enc}_\lambda^{-1}$ on its image. This defines an isometry
\begin{equation}
V_\lambda:\ (\mathbb{C}^2)^{\otimes m_\lambda}\hookrightarrow \mathcal{F}^{(\lambda)}_{N},
\qquad
V_\lambda\ket{x}=\ket{\mathrm{Enc}_\lambda(x)}.
\label{eq:Vlambda_isometry_tight}
\end{equation}
We assume $V_\lambda$ is implemented as a reversible classical circuit of size
$\mathrm{poly}(m_\lambda)$ that computes $\mathrm{Enc}_\lambda$ (and its inverse on the image).

\begin{boxedtheorem}[Classical sampling hardness via reduction from quantum circuit sampling]
\label{thm:promised_lambda_sampling_hardness_tight}
Assume $d\ge N$ and fix any $\lambda\vdash N$ with $\ell(\lambda)\le d$ and canonical
$\sigma_{\mathrm{can}}$. Suppose there exists a classical probabilistic polynomial-time algorithm
$\mathcal{A}$ that, for every circuit $C^{\lambda \sigma_{\mathrm{can}}}$ satisfying the promise of
Task~\ref{task:promised_lambda_sampling_tight}, outputs a sample from a distribution
$\widetilde{p}_{\lambda \sigma_{\mathrm{can}}}$ such that
\begin{equation}
\left\| \widetilde{p}_{\lambda \sigma_{\mathrm{can}}} - p_{\lambda \sigma_{\mathrm{can}}} \right\|_{\mathrm{TV}}\ \le\ \epsilon
\label{eq:TV_assumption_tight}
\end{equation}
for some fixed constant $\epsilon\in(0,1)$. Then there exists a classical probabilistic
polynomial-time algorithm that, given any $m_\lambda$-qubit circuit $U$, outputs a sample from a
distribution $\widetilde{q}_U$ satisfying
\begin{equation}
\left\| \widetilde{q}_U - q_U \right\|_{\mathrm{TV}}\ \le\ \epsilon,
\qquad
q_U(x):=\bigl|\bra{x}U\ket{0^{m_\lambda}}\bigr|^2.
\label{eq:TV_consequence_tight}
\end{equation}
Consequently, for any scaling family of instances in which $m_\lambda$ grows at least logarithmically
with the input size, and for any fixed $\epsilon<\tfrac12$, the existence of $\mathcal{A}$ implies
$\mathrm{BQP}\subseteq\mathrm{BPP}$ for that family.
\end{boxedtheorem}

\begin{proof}
Fix $\lambda$ and set $m:=m_\lambda$ (Eq.~\eqref{eq:m_lambda_def_tight}). Let $U$ be an arbitrary
$m$-qubit quantum circuit.

\emph{Step 1 (embed $U$ into the occupation sector).}
Define the occupation-sector state
\begin{equation}
\ket{\Phi_U}\ :=\ V_\lambda\,U\,\ket{0^m}\ \in\ \mathcal{F}^{(\lambda)}_{N}.
\label{eq:PhiU_def_tight}
\end{equation}
By construction, measuring $\ket{\Phi_U}$ in the occupation-number basis yields
$n=\mathrm{Enc}_\lambda(x)$ with probability
\begin{equation}
\Pr[n=\mathrm{Enc}_\lambda(x)] \;=\; q_U(x).
\label{eq:PhiU_measure_tight}
\end{equation}
In particular, $\ket{\Phi_U}$ has support entirely contained in
$\mathrm{Im}(\mathrm{Enc}_\lambda)\subseteq\Omega_\lambda$.

\emph{Step 2 (pull back to the promised first-quantized sector).}
Let $Q$ denote the fixed promised-$\sigma_{\mathrm{can}}$ quantization transform from the sector
$\mathcal{H}_{\lambda\sigma_{\mathrm{can}}}$ to $\mathcal{F}^{(\lambda)}_{N}$ (Sec.~\ref{sec:qqt}),
implemented with the canonical GT promise on $\mu_{\mathrm{can}}$. Since $\ket{\Phi_U}$ is supported on
occupation vectors $\Omega_\lambda$, it lies in the canonical occupation subspace that is the image of
$Q$, so $Q^\dagger\ket{\Phi_U}$ is well defined and supported on the canonical GT representatives.
Define
\begin{equation}
\ket{\psi^{\lambda\sigma_{\mathrm{can}}}_U}_{\mathrm{1Q}}\ :=\ Q^\dagger\ket{\Phi_U}\ \in\ \mathcal{H}_{\lambda\sigma_{\mathrm{can}}}.
\label{eq:psiU_def_tight}
\end{equation}
Then, by unitarity on the restricted sector,
\begin{equation}
Q\ket{\psi^{\lambda\sigma_{\mathrm{can}}}_U}_{\mathrm{1Q}} = \ket{\Phi_U}.
\label{eq:Q_cancels_tight}
\end{equation}

\emph{Step 3 (compile an input circuit for the sampling task).}
We claim there exists a quantum circuit $C^{\lambda\sigma_{\mathrm{can}}}_U$ acting on
$n_{\mathrm{1Q}}=N\lceil\log_2 d\rceil$ qubits that prepares
$\ket{\psi^{\lambda\sigma_{\mathrm{can}}}_U}_{\mathrm{1Q}}$ from $\ket{0^{n_{\mathrm{1Q}}}}$, with size polynomial in
$N,d,|U|$. Indeed, implement $U$ on an explicit $m$-qubit workspace, implement the reversible encoder
$V_\lambda$ using $\mathrm{poly}(m)$ Toffoli/CNOT gates, then apply $Q^\dagger$ using
Theorem~\ref{thm:QQT_complexity_poly}, and uncompute any ancillas. This produces
$\ket{\psi^{\lambda\sigma_{\mathrm{can}}}_U}_{\mathrm{1Q}}$ on a register that can be embedded into
the $n_{\mathrm{1Q}}$-qubit first-quantized encoding because, with $D_\lambda=|\Omega_\lambda|\le\binom{N+d-1}{N}$
(every occupation vector being a weak composition of $N$ into $d$ parts),
\begin{equation}
m\ =\ \left\lceil\log_2 D_\lambda(N,d)\right\rceil\ \le\ \left\lceil\log_2 \binom{N+d-1}{N}\right\rceil
\ =\ \mathcal{O}\!\bigl(N\log d + N\bigr),
\end{equation}
so the required workspace is at most linear in $n_{\mathrm{1Q}}$ (and therefore embeddable with polynomial
overhead). Thus $C^{\lambda\sigma_{\mathrm{can}}}_U$ satisfies the promise of
Task~\ref{task:promised_lambda_sampling_tight}.

\emph{Step 4 (invoke the classical sampler and decode).}
Run $\mathcal{A}$ on input $C^{\lambda\sigma_{\mathrm{can}}}_U$ to obtain an outcome
$n\in\Omega_\lambda$ distributed according to
$\widetilde{p}_{\lambda\sigma_{\mathrm{can}}}$ with
$\|\widetilde{p}_{\lambda\sigma_{\mathrm{can}}}-p_{\lambda\sigma_{\mathrm{can}}}\|_{\mathrm{TV}}\le\epsilon$.
Define the deterministic decoding map $f:\Omega_\lambda\to\{0,1\}^m$ by
\begin{equation}
f(n)\ :=\
\begin{cases}
\mathrm{Enc}_\lambda^{-1}(n), & n\in \mathrm{Im}(\mathrm{Enc}_\lambda),\\
0^m, & \text{otherwise}.
\end{cases}
\label{eq:decoder_map}
\end{equation}
Let $\widetilde{q}_U:=f_\#(\widetilde{p}_{\lambda\sigma_{\mathrm{can}}})$ and
$q_U:=f_\#(p_{\lambda\sigma_{\mathrm{can}}})$ denote the pushforwards. Total variation distance is
non-increasing under deterministic post-processing, hence
\begin{equation}
\|\widetilde{q}_U-q_U\|_{\mathrm{TV}}\ \le\ \|\widetilde{p}_{\lambda\sigma_{\mathrm{can}}}-p_{\lambda\sigma_{\mathrm{can}}}\|_{\mathrm{TV}}\ \le\ \epsilon.
\label{eq:TV_dataprocessing_tight}
\end{equation}
Finally, since $\ket{\Phi_U}$ is supported on $\mathrm{Im}(\mathrm{Enc}_\lambda)$ and
Eq.~\eqref{eq:PhiU_measure_tight} holds, we have
$q_U(x)=|\bra{x}U\ket{0^m}|^2$ as claimed. Therefore $\widetilde{q}_U$ approximates $q_U$ within
total variation distance at most $\epsilon$, proving Eq.~\eqref{eq:TV_consequence_tight}.

For the complexity-class consequence, fix any language $L\in\mathrm{BQP}$ with a verifier circuit whose
acceptance probability is bounded away from $\tfrac12$ by a constant gap; standard amplification makes the
relevant output bit near-deterministic, so a sampler with $\|\widetilde{q}_U-q_U\|_{\mathrm{TV}}\le\epsilon$
for fixed $\epsilon<\tfrac12$ decides $L$ correctly with probability bounded above $\tfrac12$, placing
$L\in\mathrm{BPP}$.
\end{proof}

\section{Open questions and extensions of this work \label{sec:optimizing_algo}}

In this section we present several natural directions for improved implementations of $Q$
and for extending its scope to broader classes of many-body states and computational workflows.

\begin{enumerate}
\item \textbf{More optimal, statistics-aware transforms.}
The construction of $Q$ in Sec.~\ref{sec:qqt} is deliberately \emph{symmetry-agnostic}: it treats the input as an
arbitrary state supported on a fixed Schur--Weyl sector, and it uses generic subroutines that do not exploit
additional structure that may be known \emph{a priori} about the particle statistics or the relevant irrep.
This generality is a strength, but it also suggests viewing our $Q$ as a \emph{root algorithm} from which
more specialized (and potentially asymptotically faster) transforms can be derived. Two immediate targets
are: (i) \emph{irrep-specialized} strong Schur transforms, i.e., implementations tailored to a fixed
$\lambda$ that avoid the overhead of coherently producing labels that are irrelevant or constant on the
promised sector; and (ii) \emph{sparse arithmetic} on GT patterns. For a fixed $\lambda$, the admissible GT
patterns occupy a structured subset of the ambient integer grid, and in many regimes (e.g., restricted row
lengths, thin diagrams, or diagrams with bounded depth) only a small portion of GT entries can vary
significantly. One can therefore hope to compute the row-sum differences in
Eq.~\eqref{eq:A_occupations_from_rowdiff} using fewer additions by reusing partial sums, streaming only the
nontrivial entries, or using alternative encodings. Understanding which families of $\lambda$ admit nontrivial savings, and how these
savings compose with high-dimensional Schur-transform constructions, is an open resource-optimization
problem with potentially large practical impact.

\item \textbf{Symmetry verification, sector filtering, and leakage removal via the $\lambda$ register.}
We presented $Q$ as a transform acting on states supported on a single Young diagram $\lambda$.
A natural extension is to retain the $\lambda$ register output by $U_{\mathrm{Schur}}$ and use it to perform symmetry-resolved verification and filtering when the input state has leaked outside a target permutation-symmetry sector.
Concretely, after the Schur transform one has
\[
\sum_{\lambda,\mu,\sigma} \alpha_{\lambda,\mu,\sigma}\ket{\lambda}\ket{\mu}\ket{\sigma}\ket{0},
\]
and one may then apply a controlled family of arithmetic maps,
\[
\sum_{\lambda,\mu,\sigma} \alpha_{\lambda,\mu,\sigma}\ket{\lambda}\ket{\mu}\ket{\sigma}\ket{0}
\ \longmapsto\
\sum_{\lambda,\mu,\sigma} \alpha_{\lambda,\mu,\sigma}\ket{\lambda}\ket{\mu}\ket{\sigma}\ket{n(\lambda,\mu)},
\]
where $n(\lambda,\mu)$ denotes the occupation vector associated with the GT label $\mu$ in the $U(d)$ irrep
$\mathcal{Q}^{(d)}_\lambda$.
Keeping the $\lambda$ register explicit allows one to detect, flag, postselect, or coherently condition on the symmetry sector before further processing.
In particular, if imperfect state preparation, truncation, approximation, or hardware noise induces unphysical leakage outside a desired Young sector $\lambda_0$, then measuring or conditioning on $\lambda$ provides a direct mechanism for symmetry verification and projection back into the intended subspace.
In this sense, the extension provides a symmetry filter or sector-resolved projector that can suppress unwanted exchange-symmetry leakage prior to mapping the state to occupation-number data.

It is important to emphasize, however, that a first-quantized state with coherent support on multiple Schur--Weyl sectors should not be interpreted as the natural description of ordinary mixed-species systems such as electron--phonon, electron--nuclear, or more general coupled fermion--boson wavefunctions.
Such systems are instead described by tensor products of species-resolved Hilbert spaces with symmetrization or antisymmetrization imposed separately within each species.
Accordingly, the present extension is best understood as a symmetry-resolved processing tool on the full tensor-power space, useful for sector filtering, leakage removal, and related algorithmic tasks, rather than as a first-quantized representation of composite systems with distinct particle statistics.

\item \textbf{Outputting the color-resolved Fock space for parastatistics.}
Throughout we worked in the \emph{physical} fixed-$N$ Fock space appropriate to the chosen statistics
sector, treating the Green color degrees of freedom as an unobserved multiplicity.
However, parabose and parafermi sectors admit natural \emph{color-resolved} realizations in which the
microscopic occupation data $n_p^{(\alpha)}$ are explicit (Sec.~\ref{subsubsec:paraparticles_and_color_space}).
An intriguing extension is therefore to define a refined transform that outputs not only the physical
occupations $n_p=\sum_\alpha n_p^{(\alpha)}$, but also a coherent encoding of a \emph{representative} (or a
controlled superposition of representatives) of the compatible color configurations.
Such an output could be useful in at least two ways. First, it would enable direct simulation of
Hamiltonians that couple to color degrees of freedom (or that are naturally expressed in a color-resolved
operator algebra), while retaining compatibility with the physical paraoperators via coherent summation over
colors. Second, it could provide a convenient handle for implementing symmetry constraints or selection
rules that are more transparent in the enlarged space, for example by performing controlled operations on
the color registers and subsequently projecting (coherently) back to the physical subspace.

\item \textbf{Fault-tolerant optimization, error budgeting, and integration with simulation primitives.}
Our complexity statements treat the strong Schur transform and the arithmetic map as modular components and
bound their costs at the level of asymptotic gate counts. For practical deployments, several refinements are important. One direction is to optimize the \emph{error
budget} across (i) synthesis error in $U_{\mathrm{Schur}}$, (ii) reversible arithmetic error (if approximate
adders or approximate encodings are used), and (iii) any subsequent simulation routine (e.g., qubitization
or quantum signal processing) that consumes the converted representation. A second direction is to
co-design $Q$ with the downstream primitive: for example, if the post-conversion algorithm uses a
block-encoding of number-conserving generators, it may be advantageous to keep certain intermediate labels
(e.g., partial row sums or weight differences) in coherent workspace rather than fully uncomputing them,
thereby trading qubits for Toffoli depth. Finally, it is natural to ask for tighter end-to-end resource
estimates for representative application families (chemistry- and materials-motivated Hamiltonians, coupled
fermion--boson models, and symmetry-restricted ans\"atze), including constant factors and compilation
overheads. Establishing such optimized resource models would clarify when conversion between
representations is merely feasible and when it is decisively advantageous.
\end{enumerate}

\section{Applications \label{sec:application}}

Our coherent quantization transform $Q$ and its inverse $Q^\dagger$ are best viewed as \emph{representation routers}: they let an algorithm move between two complete but differently optimized encodings of the same fixed-$N$ many-body physics.  The practical value of such a router is that quantum workflows are rarely monolithic.  Instead, they typically combine distinct primitives like state preparation, dynamics, spectroscopy, estimation, and post-processing. Naturally, the  resource bottlenecks for each primitive implementation depend differently on $(N,d)$ and on the operator families being implemented. Below we outline several settings in which the ability to switch coherently between first and second quantization is operationally useful.

\begin{enumerate}
\item \textbf{Hybrid workflows that alternately require fixed-$N$ efficiency and particle-number flexibility.}
Many core simulation and inference routines are naturally posed at fixed particle number, where the first-quantized register size scales as $N\lceil\log_2 d\rceil$ and can be dramatically smaller than a mode-register representation when $N\ll d$ (Sec.~\ref{subsec:1q_resource_tradeoff}). For example,  eigenstate preparation and eigenphase estimation within a known-$N$ subspace.  However, downstream tasks in the same workflow may require coherent access to sectors with different particle number, or the ability to couple to reservoirs, sources, or probes that add or remove excitations.  This type of logic is most naturally expressed in second quantization, where creation and annihilation operators and their polynomials provide a canonical operator language and where varying $N$ is built into the Fock-space structure (Sec.~\ref{sec:second_q}).

In such settings, $Q$ provides a clean interface: one can carry out the fixed-$N$ heavy lifting in first quantization  and then switch coherently to the occupation representation precisely at the point where particle-number flexibility becomes the dominant algorithmic need.  Importantly, because $Q$ is unitary and coherent, this switch can be performed even when the state is entangled with other registers that must remain unmeasured (for example, phase estimation work registers, coherent control registers, or ancillas used for amplitude amplification).

\item \textbf{Leveraging second-quantized state-preparation routines to obtain first-quantized inputs via $Q^\dagger$.} Preparing physically meaningful first-quantized input states is a persistent practical challenge, especially when the natural description of the target is given in terms of occupation numbers. At the same time, there is a mature and rapidly growing toolbox of second-quantized preparation subroutines which can be leveraged.

When the downstream algorithm benefits resource-wise from a first-quantized input, it is natural to prepare the state in the representation where preparation is simplest and then convert it coherently.  Concretely, one may prepare a fixed-$N$ state $\ket{\psi}_{\mathrm{2Q}}$ using existing second-quantized techniques, append the classical labels $(\lambda,\sigma)$, and apply $Q^\dagger$ to obtain a first-quantized state.  This provides a systematic pathway to \emph{reuse} second-quantized preparation advances while enabling first-quantized downstream primitives, without requiring one to design bespoke first-quantized preparation schemes for each application family.

\item \textbf{Switching representations \emph{within} Hamiltonian simulation primitives}
It is tempting to imagine toggling between representations during dynamics in order to exploit representation-dependent advantages for different Hamiltonian fragments (for example, implementing some terms more naturally in a mode basis and others more naturally in a particle basis).  While this idea is conceptually appealing, it is also the most delicate to justify, and it is not universally beneficial.

The reason is structural: a mid-circuit conversion is only advantageous if the savings achieved over a \emph{long} segment of computation outweigh the cost of at least two conversions (a $Q$ and a $Q^\dagger$).  Since our conversion cost is polynomial in $(N,d,\log(\epsilon^{-1}))$ but not negligible, such a strategy is unlikely to be worthwhile for short simulation horizons or for instances where a single representation already admits a well-optimized block-encoding or product-formula compilation of the full Hamiltonian.  On the other hand, there are plausible large-scale regimes where switching could become rational.  In these cases the conversion overhead is amortized across long segments of computation, and $Q$ provides the mechanism to do so coherently.

We therefore view representation-switching \emph{during} simulation not as a default recipe, but as a resource-engineering option whose value is instance- and scale-dependent.  Identifying concrete Hamiltonian families and algorithmic schedules where such amortization provably yields a net gain remains an interesting direction for future benchmarking.

\item \textbf{Representation-aware post-processing: measurements, observables, and reduced data products.}
Even when the dynamics or state-preparation stage is naturally carried out in one representation, the \emph{quantities ultimately extracted} from the computation may be cheaper to access in the other. Several common examples illustrate this point.

First, number-diagonal observables and occupation statistics are native in second quantization: measuring occupation registers directly yields particle-number distributions over modes, and many physically important diagnostics are expressed succinctly in the number basis.  If a state has been evolved or prepared in a first-quantized register for qubit or gate-efficiency reasons, applying $Q$ followed by measurement provides direct access to these distributions (Sec.~\ref{subsec:promised_lambda_sampling}).

Conversely, observables that are simple $k$-body operators in the particle picture, or that act on a small number of particle registers, can be more naturally estimated in first quantization.  In such cases, one may prefer to perform Hamiltonian simulation in the occupation representation  and then apply $Q^\dagger$ to move into a first-quantized register where the target observable admits a lower-overhead measurement strategy. More broadly, because $Q$ and $Q^\dagger$ are coherent, they can be inserted \emph{before} an estimation primitive like Hadamard tests, phase estimation variants, amplitude estimation, or shadow-style measurement routines thus allowing the representation choice to be tailored to the measurement and post-processing bottleneck rather than to the state-preparation bottleneck.

In this sense, $Q$ enables a modular design principle: pick the representation that is cheapest for the current stage of the workflow, and switch coherently when the next stage has a different natural representation.  This modularity becomes increasingly valuable as fault-tolerant workflows grow in depth, where constant-factor savings in repeated subroutines can dominate end-to-end resource costs.
\end{enumerate}

\section{Summary and Conclusions \label{sec:conclusions}}

The equivalence between the first- and second-quantized descriptions of identical particles is among the oldest in many-body physics, and it is almost always treated as bookkeeping: two interchangeable ledgers for the same fixed-$N$ state. Here we have shown that the two pictures are joined by a sharper and more structured object, a generalized non-abelian Fourier transform. At fixed particle number, Schur–Weyl duality endows $(\mathbb{C}^d)^{\otimes N}$ with the commuting actions of the symmetric group $S_N$ and the unitary group $U(d)$, and the strong quantum Schur transform is precisely the Fourier transform of this pair. In this language the occupation-number vector is not an independent construct but the weight: the joint eigenvalue of the commuting mode-number operators $\hat n_p = J_{pp}$, obtained as the commutative reduction of the full Gelfand–Tsetlin label. Second quantization is therefore first quantization written in its group-Fourier basis, and the occupation-number representation acquires a precise place within the representation theory of the $(S_N,U(d))$ pair rather than the status of an ad hoc encoding.

On the strength of this identification we constructed an explicit unitary $Q$, with inverse $Q^\dagger$, that realizes the correspondence coherently at the circuit level. It factorizes transparently as $Q = U_{\mathrm{JS}}\,U_{\mathrm{Schur}}$: a strong quantum Schur transform that resolves the Schur–Weyl block structure and writes the labels $(\lambda,\mu,\sigma)$ into dedicated registers, followed by a reversible arithmetic stage that recovers the occupation numbers as successive row-sum differences of the Gelfand–Tsetlin pattern. The entire nontrivial unitary content resides in the Schur basis change, while the arithmetic is a permutation of basis states. The resulting gate complexity is $\mathrm{poly}(N,d,\log(1/\epsilon))$. Because the particle statistics are extracted coherently by the Schur transform and recorded in the Young-diagram register rather than hard-wired into the circuit, $Q$ is symmetry-agnostic at its input: bosonic, fermionic, and parastatistical states are all converted by one and the same routine, with the statistics diagnosed on output.

The transform also lays bare a sharp asymmetry between coherent and explicit (classical) conversion. A quantum computer prepares the converted state in quantum memory with polynomial gate cost, whereas any classical procedure that must materialize the result as an explicit coefficient list pays a price fixed by the dimension of the symmetry sector. For the forward map this dimension is polynomial of degree $N$ in $d$ at fixed $N$ and grows exponentially in $N$ once $d=\Theta(N)$; for the inverse map the first-quantized expansion carries a factorial number of computational-basis strings, $\prod_i \lambda_i!\,\prod_j \lambda'_j!$, produced without cancellation by the Young symmetrizer. Most strikingly, the conversion is not only hard to write down but computationally powerful to sample: an efficient classical sampler for the induced occupation-number distribution, valid across the promised symmetry sectors, would furnish an efficient classical sampler for the output of arbitrary quantum circuits, implying the collapse $\mathrm{BQP}\subseteq\mathrm{BPP}$. The representation change therefore sits squarely within the regime of presumed quantum advantage, and its hardness is inherited from the foundations of quantum complexity rather than from any incidental feature of the encoding.

Taken together, these results recast the passage between quantization pictures from a formal equivalence into an operational primitive, a coherent representation router that can be inserted at any point in a quantum workflow where the cheapest description of state preparation, dynamics, measurement, or particle-number flexibility shifts from one picture to the other, and that operates even while the system register remains entangled with ancillary or control registers. We have presented $Q$ in its most general, unspecialized form, both because this generality is what carries it beyond Bose and Fermi statistics and because it serves as a blueprint from which statistics-aware refinements can be derived and optimized: sparse Gelfand–Tsetlin arithmetic, irrep-tailored Schur transforms, color-resolved parastatistical outputs, and symmetry-sector verification and filtering. By recognizing the occupation-number representation as the group-Fourier basis of first quantization, and by binding the cost of explicit conversion to genuine complexity-theoretic separations, this work establishes coherent representation change as both a structurally illuminating lens on quantization and a concrete, computationally meaningful tool for modular, fault-tolerant quantum simulation.

\section*{Acknowledgements}
We are grateful for informative conversations with Benoît Dubus, Tobias Haas, Nicolas J. Cerf regarding their work on the extension of the Jordan-Schwinger map to many modes \cite{dubus2024bosonsfermionsspinsmultimode}. We extend our thanks to Kevin Ferreira, Yipeng Ji and Paria Nejat of the LG Electronics Toronto AI Lab and to Sean Kim of LG Electronics, AI Lab, for their ongoing support of our research.

\bibliography{ms.bib}% Produces the bibliography via BibTeX.

\end{document}